\providecommand{\U}[1]{\protect\rule{.1in}{.1in}}
\DeclareMathAlphabet{\pazocal}{OMS}{zplm}{m}{n}
\newtheorem{theorem}{Theorem}[section]
\newtheorem{definition}[theorem]{Definition}
\newtheorem{lemma}[theorem]{Lemma}
\newtheorem{proposition}[theorem]{Proposition}
\newtheorem{remark}[theorem]{Remark}
\numberwithin{equation}{section}
\numberwithin{theorem}{section}
\newcommand{\qed}{\hfill$\Box$}
\newenvironment{proof}{\begin{trivlist}\item[]{\em Proof:}\/}{\qed\end{trivlist}}
\newenvironment{proofof}[1][Proof]{\noindent \textit{#1.} }{\ \qed}
\newcommand{\R}{{\mathbb R}}
\DeclareMathOperator{\Tr}{Tr}
\newcommand{\beq}{\begin{equation}}
\newcommand{\eeq}{\end{equation}}
\newcommand{\beqs}{\begin{eqnarray}}
\newcommand{\eeqs}{\end{eqnarray}}
\newcounter{jlisti}
\title{Self-similar profiles for homoenergetic solutions of the Boltzmann equation: particle velocity distribution and entropy}
\author{Richard D. James \thanks{\emailrichard} , Alessia Nota \thanks{\emailalessia} , Juan J. L. Vel\'azquez \thanks{\emailjuan} \\[1em]
$\,^*$\UMaddress \\[0.5em]
$ ^\dag\,^\ddag$\UBaddress}
\date{\today}
\newcommand{\email}[1]{E-mail: \tt #1}
\newcommand{\emailrichard}{\email{james@umn.edu}}
\newcommand{\emailalessia}{\email{nota@iam.uni-bonn.de}}
\newcommand{\emailjuan}{\email{velazquez@iam.uni-bonn.de}}
\newcommand{\UBaddress}{\em University of Bonn, Institute for Applied Mathematics\\
\em Endenicher Allee 60, D-53115 Bonn, Germany}
\newcommand{\UMaddress}{\em Department of Aerospace Engineering and Mechanics, University of Minnesota \\
\em 107, Akerman Hall, Minneapolis, MN 55455, USA}
\begin{document}

\maketitle

\begin{abstract}
In this paper we study a class of solutions of
the Boltzmann equation which have the form $f\left(  x,v,t\right)  =g\left(
v-L\left(  t\right)  x,t\right)  $ where $L\left(  t\right)  =A\left(
I+tA\right)  ^{-1}$ with the matrix $A$ describing a shear flow or a
dilatation or a combination of both. These solutions are known as
homoenergetic solutions.   We prove existence of homoenergetic solutions 
for a large class of initial data.  For different choices for the matrix
$A$ and for different homogeneities of the collision kernel, we characterize the
long time asymptotics of the velocity distribution for the corresponding
homoenergetic solutions.  For a large class of choices of $A$ we then prove rigorously
the existence of self-similar solutions of the Boltzmann equation.  The latter are non
Maxwellian distributions and describe far-from-equilibrium flows.  For
Maxwell molecules we obtain exact formulas for the $H$-function for some of these
flows.  These formulas show that  in some cases, despite being very
far from equilibrium, the relationship between density, temperature and entropy
is exactly the same as in the equilibrium case.   We make conjectures about the asymptotics of
homoenergetic solutions that do not have self-similar profiles.  

\end{abstract}
\tableofcontents
\section{Introduction. \label{sect1}}

In this paper we study homoenergetic solutions of the
Boltzmann equation.  Our approach is motivated by
an invariant manifold of solutions of the equations of classical molecular dynamics with 
certain symmetry properties (\cite{md, viscometry}).  

Briefly and formally, this manifold can be described as follows.  Choose a matrix 
$A\in M_{3\times3}\left(  \mathbb{R}\right) $, let $e_1, e_2 ,e_3$ be linearly
independent vectors in $\mathbb R^3$,  and consider
a time interval $[0,a)$ such that $\det (I + t A)>0$ for $ t \in [0, a)$ with $a>0$.  Consider any
number of atoms labeled $1, \dots, M$ with positive masses $m_1, \dots, m_M$  and any initial conditions 
\begin{align}
{y}_{k}(0)    = {y}^{0}_{k}, \quad
\dot{y}_{k}(0)   = {v}^{0}_{k},\quad  k = 1, \dots, M.  \label{md_ic}
\end{align}
Call these $M$ atoms the {\it simulated atoms}.   The simulated atoms will be
subject to the equations of molecular dynamics (to be stated presently) with the
initial conditions (\ref{md_ic}), yielding solutions $y_k(t) \in \mathbb R^3, \  0 \le t < a,\ k = 1, \dots , M $.
In addition there will be {\it non-simulated atoms} with time-dependent positions
$y_{\nu,k}(t)$, indexed by a triple of integers $\nu = (\nu_1, \nu_2, \nu_3) \in \mathbb Z^3, \ \nu \ne (0,0,0)$ and 
$ k = 1, \dots, M$.  The nonsimulated atom $(\nu,k)$ will have mass $m_k$.  The
positions of the nonsimulated atoms will be given by the following explicit formulas based on the
positions of the simulated atoms:
\beq
y_{\nu,k}(t) = y_k(t)  + (I + t A)(\nu_1 e_1 + \nu_2 e_2 +  \nu_3 e_3),   \quad \nu = (\nu_1, \nu_2, \nu_3) \in \mathbb Z^3, \ k = 1, \dots, M.  \label{nonsim}
\eeq
For $k = 1, \dots, M$ let 
$f_{k}: \cdots \mathbb R^3 \times \mathbb R^3 \times {\mathbb R^3}  \cdots \to \mathbb R$ be the force
on simulated atom $k$.  Naturally, the force on simulated atom $k$ depends on the positions of all the atoms.  
This force is required to satisfy the standard conditions of frame-indifference and permutation invariance
\cite{md}.  Formally, the equations of molecular dynamics for the simulated atoms are:
\beqs
m_k \ddot{y}_k  &=& f_k( \dots, y_{\nu_1, 1}, \dots, y_{\nu_1, M}, \dots, y_{\nu_2, 1}, \dots, y_{\nu_2, M}, \dots), \label{md_sim}  \\
{y}_{k}(0)    &=& {y}^{0}_{k}, \quad
\dot{y}_{k}(0)   = {v}^{0}_{k},\quad  k = 1, \dots, M.  \nonumber 
\eeqs
Note that these are ODEs in standard form for the motions of the simulated atoms since, for the nonsimulated
atoms, we assume that the formulas (\ref{nonsim}) have been substituted into the right hand side
of (\ref{md_sim}).  It is shown in \cite{md} and \cite{viscometry} that, 
even though the motions of the nonsimulated atoms are only given by formulas, the equations of molecular dynamics are exactly satisfied for each
nonsimulated atom.  

While this is stated formally here, if conditions are given on the $f_k$ such that the standard existence and
uniqueness theorem holds for the initial value problem (\ref{md_ic}), (\ref{md_sim}), then the result holds rigorously.  
 The proof is a simple consequence of frame-indifference and permutation invariance of atomic forces.
The result can be rephrased as the existence of a certain family of time-dependent invariant manifolds of 
molecular dynamics.

These results on molecular dynamics have a simple interpretation in terms of the molecular density function of
the kinetic theory.  Consider a molecular dynamics simulation of the type described above.  
Consider a ball $B_r(x)$ of any radius $r$ centered at $x = (I + t A)(\nu_1 e_1 + \nu_2 e_2 +  \nu_3 e_3)$, $(\nu_1, \nu_2, \nu_3) \in \mathbb Z^3$.  The
ansatz (\ref{nonsim}) implies that, the velocities of all atoms in the ball $B_r(x)$ are completely determined by those
in the ball $B_r(0)$.  But the molecular density function $f(t, x, v)$ of the kinetic theory is supposed to describe
the probability density of finding velocities in the small neighborhood of a point $x$ at time $t$.  Thus,
the ansatz associated to this observation about balls can be
immediately written down based on (\ref{nonsim}) and its time-derivative.  It is
\beq
f(t, x, v) = g(t, v - A(I + tA)^{-1}x).  \label{ansatz}
\eeq
(The emergence of the quantity $A(I + t A)^{-1}$ arises from conversion to the Eulerian form of the 
kinetic theory.)  

Besides the reasons mentioned below, the study of these solutions is interesting
from the general perspective of non-equilibrium statistical mechanics.  Essentially, we show for broad classes
of choices of $A$, there exist solutions of the Boltzmann equation satisfying (\ref{ansatz}).  This means
that, in a precise sense, this invariant manifold of molecular dynamics is inherited by the
Boltzmann equation.  This is true despite the fact that the Boltzmann equation is time irreversible, while molecular dynamics is
time reversible.  It is then particularly interesting to look at the form of the entropy (minus the $H$-function)
in these cases.  We give explicit relations satisfied by the entropy in some cases, that can be
considered as derived constitutive relations.  It would now be
extremely interesting to study these relations in molecular dynamics.  
Besides the entropy, our results give new insight into the relation between atomic
forces and nonequilibrium behavior. 

An alternative viewpoint leading to the same result is presented in Section \ref{hom}.  That
derivation is based on the viewpoint of {\it equidispersive solutions}, i.e., an ansatz of the form
\beq
f\left(  t,x,v\right)  =g\left(  t,w\right)  \text{ \ \ with }w=v-\xi\left(
t,x\right). 
\eeq
Under mild conditions of smoothness, this ansatz is found to reduce the Boltzmann equation if and 
only if $\xi (t, x) = A (I + t A)^{-1}x$.  

Formally, if  $f$ is a solution of the Boltzmann equation (\ref{A0_0}) of the form (\ref{ansatz}) 
the function $g$ satisfies
\begin{equation}
\partial_{t}g-L\left(  t\right)  w\cdot\partial_{w}g=\mathbb{C}g\left(
w\right)  \label{D1_0}%
\end{equation}
where the collision operator $\mathbb{C}$ is defined as in \eqref{A0_0}. 
These solutions are called {\it homoenergetic solutions} and were introduced 
by Galkin \cite{Galkin1} and Truesdell  \cite{T}.

Homoenergetic solutions of the Boltzmann equation have been studied in
\cite{Bobylev75}, \cite{Bobylev76}, \cite{BobCarSp}, \cite{CercArchive},
\cite{Cerc2000}, \cite{Cerc2002}, \cite{Galkin1}, \cite{Galkin2}, \cite{Galkin3}, \cite{garzo},
\cite{Nikol1}, \cite{Nikol2}, \cite{T}, \cite{TM}. Details about the precise contents of these papers will be given later in the corresponding sections where similar results appears.  To summarize this literature, we refer to interaction
potentials of the form $V\left(  x\right)  =\frac{1}{\left\vert x\right\vert
^{\nu-1}}$, which  have {\it homogeneity} of the kernel $\gamma=\frac{\nu-5}{\nu-1}$.  The case of Maxwell molecules corresponds to the case $\nu=5$, that is,  homogeneity $\gamma = 0$. 
In this case the moments $M_{j,k}$ associated to the
function $g$ defined in (\ref{B1_0}) satisfy a system of linear equations.
As in the original work of Galkin \cite{Galkin1} and Truesdell  \cite{T},
 most previous work is concerned with the computation of the
evolution of the moments $M_{j,k}=\int v_{j}v_{k}gdv$, as well as higher order moments,  in the case in which the
kernel $B$ in (\ref{A0_0}) has homogeneity $\gamma = 0$.  The
evolution equation for the moments yields a huge amount of information about
quantities like the typical deviation of the velocity and similar quantities
(\cite{BGP, garzo,TM}). 

Referring to these studies of the equations of the moments,  Truesdell and Muncaster \cite{TM} say, 
``To what extent the exact solutions in the class here exhibited
correspond to solutions of the Maxwell-Boltzmann equation is not yet established $\dots$ It is not clear whether [the moments] correspond to a molecular density''.  In this paper, although we
will use at several places the information provided by the moments, 
we will be mostly concerned with a detailed description of the distribution of velocities
and other quantities such as the $H$-function that are not accessible from the moment equations.

The initial value problem associated to (\ref{D1_0}) has been considered by Cercignani
\cite{CercArchive} for a particular choice of $L(t)$.  More precisely, Cercignani in \cite{CercArchive} (see also \cite{Cerc2000}) considered homoenergetic affine flows for the Boltzmann equation in the case of simple shear (cf., Theorem \ref{ClassHomEne}, case \ref{T1E4}) proving existence in $L^1$ of the distribution function for a large class of interaction potentials which include hard sphere and angular cut-off interactions. These solutions are in general not self-similar. 

In this paper we first prove the existence of a large class of homoenergetic solutions
and we study  their long time asymptotics.  Their behavior strongly depends
on the homogeneity of the collision kernel $B$ and on the particular form of
the hyperbolic terms,  namely
$L\left(  t\right)  w\cdot\partial_{w}g$.   We find that, depending on the
homogeneity of the kernel, we have different behaviors of the
solutions of the Boltzmann equations for large times.  Indeed, we prove the existence of 
self-similar profiles for Maxwell molecules, when the
hyperbolic part of the equation and the collision term are of the same order
of magnitude as $t\rightarrow\infty$. The resulting self-similar solutions
are different from the Maxwellian distributions. Indeed, they reflect a
nonequilibrium regime due to the balance between the hyperbolic part of the
equation (which reflects effects like shear, dilatation) and the collision
term.

\bigskip

The plan of the paper is the following. In Section \ref{hom} 
we describe the main properties of homoenergetic solutions of the Boltzmann equation. 
In Section \ref{ss:classeqsol} we characterize the long time asymptotics of 
$\xi\left(  t,x\right)  =L\left(  t\right)  x=\left(  I+tA\right)  ^{-1}Ax$,
restricting ourselves to the case in which $\det (I + t A)>0$ holds for all
$t\geq0$.   In Section \ref{ss:genprop} we prove well posedness for homoenergetic flows, and
we  prove existence of self-similar homoenergetic solutions. 
In Section \ref{SelfSimEx} we apply the general theory of Section \ref{ss:genprop} to various
homoenergetic flows described in Section
\ref{ss:classeqsol}.  In Section \ref{conj} we propose some conjectures on solutions which cannot described by
self-similar profiles.  These correspond to cases of  $L(t)$ and homogeneity $\gamma$ such that the 
collision term and the hyperbolic term do not balance.  Some of these conjectures were arrived at by
careful study of the corresponding formal Hilbert expansion, which is presented in a forthcoming paper \cite{JMNV}.
 
An important comment on the solutions discussed in this paper concerns the thermodynamic entropy. Indeed, as we point out in Section 7, there are many analogies with the corresponding formulas for the entropy for equilibrium distributions, in spite of the fact that the distributions obtained in this paper concern non-equilibrium situations.  For example, if we identify the entropy density $s$ with minus the $H$-function, then our asymptotic formulas for self-similar solutions yield the identity
\begin{equation}
\frac{s}{\rho} = \log \frac{e^{3/2}}{\rho} + C_G.
\end{equation}
But despite the fact that $s, \rho$ and $e$ can be rapidly changing functions of time for self-similar homoenergetic solutions,
{\it the relation between them is asymptotically the same as in the equilibrium case (Maxwellian distribution)}, except for one important fact. That is, the constant $C_G$ is not the same as the constant as in the equibrium case: $C_G< C_M$ where
$C_M$ is the corresponding value for the Maxwellian distribution.

Another interesting consequence of our results is further insight into the possibility (discussed in \cite{TM}) that our solutions for simple shear exhibit non-zero heat flux despite having zero temperature gradient, in contradiction to most versions of continuum thermodynamics.
A conjectured  scenario under which this could occur is described in Section \ref{HeatFluxes}.

At the end of the paper we give a table which summarizes the rigorous results and conjectures (see Section \ref{sec:tableresults}) and in Section \ref{sec:conclusions}  we conclude with a discussion to clarify the state-of-the-art of the analysis of homoenergetic solutions for the Boltzmann equation and we give some further perspectives.

\section{Homoenergetic solutions of the Boltzmann equation}
\label{hom}

The classical Boltzmann equation has the form
\begin{align}
\partial_{t}f+v\partial_{x}f  &  =\mathbb{C}f\left(  v\right)
\ \ ,\ \ f=f\left(  t,x,v\right) \nonumber\\
\mathbb{C}f\left(  v\right)   &  =\int_{\mathbb{R}^{3}}dv_{\ast}\int_{S^{2}%
}d\omega B\left(  n\cdot\omega,\left\vert v-v_{\ast}\right\vert \right)
\left[  f^{\prime}f_{\ast}^{\prime}-f_{\ast}f\right],  \ \label{A0_0}%
\end{align}
where $S^{2}$ is the unit sphere in $\mathbb{R}^{3}$ and $n=n\left(
v,v_{\ast}\right)  =\frac{\left(  v-v_{\ast}\right)  }{\left\vert v-v_{\ast
}\right\vert }.$ Here $(v,v_{\ast})$ is a pair of velocities in incoming
collision configuration (see Figure \ref{fig1}) and $(v^{\prime},v_{\ast
}^{\prime})$ is the corresponding pair of outgoing velocities defined by the
collision rule
\begin{align}
v^{\prime}  &  =v+\left(  \left(  v_{\ast}-v\right)  \cdot\omega\right)
\omega, \label{CM1}\\
v_{\ast}^{\prime}  &  =v_{\ast}-\left(  \left(  v_{\ast}-v\right)  \cdot
\omega\right)  \omega. \label{CM2}%
\end{align}
The unit vector  $\omega=\omega(v,V)$ bisects the angle between the
incoming relative velocity $V=v_{\ast}-v$ and the outgoing relative velocity
$V^{\prime}=v_{\ast}^{\prime}-v^{\prime}$ as specified in Figure \ref{fig1}.
The collision kernel $B\left(  n\cdot\omega,\left\vert v-v_{\ast}\right\vert
\right)  $ is proportional to the cross section for the scattering problem
associated to the collision between two particles. We use the conventional notation in kinetic theory,
$f=f\left(t, x, v\right)  ,\ f_{\ast
}=f\left( t, x, v_{\ast}\right)  ,\ f^{\prime}=f\left( t, x, v^{\prime}\right)
,\ \ f_{\ast}^{\prime}=f\left( t, x,  v_{\ast}^{\prime}\right)  $.

\begin{figure}[th]
\centering
\includegraphics [scale=0.3]{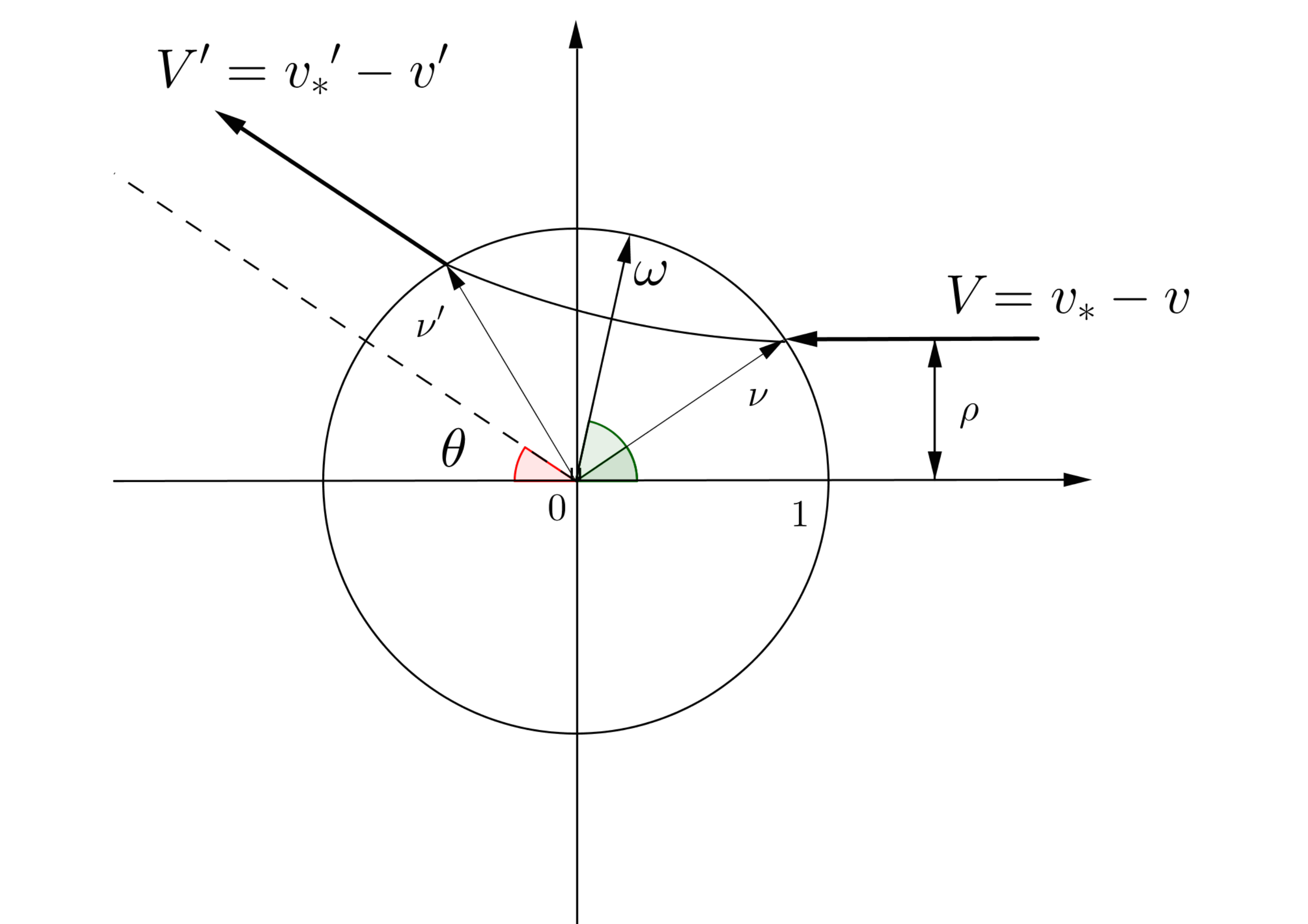}\caption{The two-body scattering.
The solution of the two-body problem lies in a plane, which is taken to be the plane
of the page, and the motion of molecule $\ast$ is plotted relative to the unstarred
molecule.  The scalar $\rho$ is the impact parameter expressed in microscopic units,
$\rho\in[-1,1]$, and  $\theta=\theta(\rho, |V|)$ is the scattering angle.  The
scattering vector of (\ref{CM1}), (\ref{CM2}) is the unit 
vector $\omega=\omega(v,V)$. \label{fig1} }%
\end{figure}

We will assume that the kernel $B$ is homogeneous with respect to the variable
$\left\vert v-v_{\ast}\right\vert $ and we will denote its homogeneity by
$\gamma,\ $i.e.,
\begin{equation}
B\left(  \frac{\left(  v-v_{\ast}\right)  \cdot\omega}{\left\vert v-v_{\ast
}\right\vert },\lambda\left\vert v-v_{\ast}\right\vert \right)  =\lambda
^{\gamma}B\left(  \frac{\left(  v-v_{\ast}\right)  \cdot\omega}{\left\vert
v-v_{\ast}\right\vert },\left\vert v-v_{\ast}\right\vert \right)
 ,\ \ \lambda>0. \label{S8E7}%
\end{equation}

Given $f\left(  t,x,v\right)  $ we can compute the density $\rho$,
the average velocity $V$ and the internal energy $\varepsilon$ at each point
$x$ and time $t$ by means of%
\begin{equation}
\rho\left(  t,x\right)  =\int_{\mathbb{R}^{3}}f\left(  t,x,v\right)
dv,\ \ \rho\left(  t,x\right)  V\left(  t,x\right)  =\int_{\mathbb{R}^{3}%
}f\left(  t,x,v\right)  vdv. \label{S8E3}%
\end{equation}
The internal energy $\varepsilon\left(  t,x\right)  $ (or temperature) is given by%
\[
\rho\left(  t,x\right)  \varepsilon\left(  t,x\right)  =\int_{\mathbb{R}^{3}%
}f\left(  t,x,v\right)  \left(  v-V\left(  t,x\right)  \right)  ^{2}dv.
\]

Homoenergetic solutions of (\ref{A0_0}) defined in \cite{Galkin1} and
\cite{T} (cf., also \cite{TM}) are solutions of the Boltzmann equation having
the form%
\begin{equation}
f\left(  t,x,v\right)  =g\left(  t,w\right)  \text{ \ \ with }w=v-\xi\left(
t,x\right).  \label{B1_0}%
\end{equation}

Notice that, under suitable integrability conditions, every solution of
(\ref{A0_0}) with the form (\ref{B1_0}) yields only time-dependent
internal energy and density%
\begin{equation}
\varepsilon\left(  t,x\right)  =\varepsilon\left(  t\right)  \ \ ,\ \ \rho
\left(  t,x\right)  =\rho\left(  t\right).  \label{eq:Hom1}%
\end{equation}
However, we have $V\left(  t,x\right)
=\xi\left(  t,x\right)  $ and therefore the average velocity 
depends also on the position.

A direct computation shows that in order to have solutions of (\ref{A0_0})
with the form (\ref{B1_0}) for a sufficiently large class of initial data we
must have%
\begin{equation}
\frac{\partial\xi_{k}}{\partial x_{j}}\text{ independent on }x\text{ and
}\partial_{t}\xi+\xi\cdot\nabla\xi=0 .\label{B2_0}%
\end{equation}

The first condition implies that $\xi$ is an affine function on $x$. However,
we will restrict attention in this paper  to the case in which $\xi$ is a
linear function of $x,$ for simplicity,  whence%
\begin{equation}
\xi\left(  t,x\right)  =L\left(  t\right)  x, \ \label{B4_0}%
\end{equation}
where $L\left(  t\right)  \in M_{3\times3}\left(  \mathbb{R}\right)  $ is a
$3\times3$ real matrix. The second condition in (\ref{B2_0}) then implies
that%
\begin{equation}
\frac{dL\left(  t\right)  }{dt}+\left(  L\left(  t\right)  \right)
^{2}=0, \quad L (0) = A, \label{B3_0}%
\end{equation}
where we have added an initial condition.

Classical ODE theory shows that there is a unique continuous solution of (\ref{B3_0}),
\begin{equation}
L\left(  t\right)  =\left(  I+tA\right)  ^{-1}A=A\left(  I+tA\right)
^{-1},\ \label{B7_0}%
\end{equation}
defined on a maximal interval of existence $[0,a)$.
On the interval $[0,a)$, $\det \left( I+tA \right)>0 $.

\section{Characterization of homoenergetic solutions defined for arbitrary
large times. \label{ss:classeqsol}}

\bigskip

In this section we describe the long time asymptotics of the function
$\xi\left(  t,x\right)  =L\left(  t\right)  x=\left(  I+tA\right)  ^{-1}Ax$
(cf. (\ref{B4_0}) and (\ref{B7_0})).
There are interesting choices of $A\in M_{3\times3}\left(  \mathbb{R}\right)  $
for which $L\left(  t\right)  $ blows up in finite time, but we will restrict attention in this paper to the
case in which the matrix $\det (I + t A) >0$ for all $t \ge 0$.   We will use in the rest of the paper the
following norm in $M_{3\times3}\left(  \mathbb{R}\right)  :$%
\begin{equation}
\left\Vert M\right\Vert =\max_{i,j}\left\Vert m_{i,j}\right\Vert
\ \ \text{with }M=\left(  m_{i,j}\right)  _{i,j=1,2,3} .\label{T3E1}%
\end{equation}

\begin{theorem}\label{ClassHomEne}  Let $A \in M_{3 \times 3}(\mathbb R) $ satisfy $\det (I + t A) >0$ for
$t\ge 0$ and let $L(t) = (I + tA)^{-1}A$.  Assume $L$ does not vanish identically.
Then, there is an orthonormal basis  (possibly different in each case) such that
the matrix of $L(t)$ in this basis has one of the following forms:

\vspace{2mm}
\noindent Case (i) Homogeneous dilatation: 
\beq
L(t)  = 
\frac{1}{t} I + O\bigg( \frac{1}{t^2} \bigg)  \quad {\rm as}\ t \to \infty.
 \label{T1E1}
 \eeq
\noindent Case (ii) Cylindrical dilatation (K=0), or Case (iii) Cylindrical dilatation and shear ($K \ne 0$):
\beq
L(t) = \frac{1}{t} \left( \begin{array}{ccc} 1& 0 & K \\
0 & 1 & 0 \\ 0 & 0 & 0 \end{array} \right) + O\bigg( \frac{1}{t^2} \bigg)  \quad {\rm as}\ t \to \infty.
 \label{T1E2}
\eeq
\noindent Case (iv). Planar shear:
\beq
L(t) = \frac{1}{t} \left( \begin{array}{ccc} 0& 0 & 0 \\
0 & 0 & K \\ 0 & 0 & 1 \end{array} \right) + O\bigg( \frac{1}{t^2} \bigg)  \quad {\rm as}\ t \to \infty. 
 \label{T1E3}
\eeq
\noindent Case (v). Simple shear:
\beq
L(t) = \left( \begin{array}{ccc} 0& K & 0 \\
0 & 0 & 0 \\ 0 & 0 & 0 \end{array} \right) , \quad K \ne 0.
 \label{T1E5}
\eeq
\noindent Case (vi). Simple shear with decaying planar dilatation/shear:
\beq
L(t) = \left( \begin{array}{ccc} 0& K_2 & 0 \\
0 & 0 & 0 \\ 0 & 0 & 0 \end{array} \right) + \frac{1}{t} \left( \begin{array}{ccc} 0 & K_1 K_3 & K_1 \\
0 & 0 & 0 \\ 0 & K_3 & 1 \end{array} \right) +    O\bigg( \frac{1}{t^2} \bigg)  , \quad K_2 \ne 0.
 \label{T1E6}
\eeq
\noindent Case (vii). Combined orthogonal shear:
\beq
L(t) =   \left( \begin{array}{ccc} 0 & K_3 & K_2 - t K_1 K_3  \\
0 & 0 & K_1 \\ 0 & 0 & 0 \end{array} \right), \quad K_1 K_3 \ne 0.
\label{T1E4}
\eeq
\label{ClassHomEne}
\end{theorem}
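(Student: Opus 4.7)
The plan is to read off the long-time asymptotics of $L(t)=A(I+tA)^{-1}$ from the real Jordan structure of $A$. The hypothesis $\det(I+tA)>0$ for all $t\ge 0$ is equivalent to the statement that $A$ has no strictly negative real eigenvalue, so $\sigma(A)\subset\{0\}\cup(0,\infty)\cup(\mathbb{C}\setminus\mathbb{R})$, with complex eigenvalues occurring in conjugate pairs. Introduce the $A$-invariant decomposition $\mathbb{R}^3=V_+\oplus V_0$, where $V_0=\ker A^3$ is the generalized nullspace and $V_+$ is the sum of the generalized eigenspaces for the nonzero eigenvalues. Both subspaces are preserved by $L(t)$; $A|_{V_+}$ is invertible and $A|_{V_0}$ is nilpotent.

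On $V_+$ the identity $(I+tA)^{-1}=t^{-1}(A+t^{-1}I)^{-1}$ gives
\[
L(t)|_{V_+}\;=\;\tfrac{1}{t}\bigl(I+\tfrac{1}{t}(A|_{V_+})^{-1}\bigr)^{-1}\;=\;\tfrac{1}{t}\,I_{V_+}+O(t^{-2}),\qquad t\to\infty,
\]
uniformly, since no eigenvalue of $A|_{V_+}$ lies in $\{-1/t:t>0\}$. On $V_0$ the Neumann series terminates because $(A|_{V_0})^3=0$, yielding
\[
L(t)|_{V_0}\;=\;A|_{V_0}-t\,(A|_{V_0})^2,
\]
which is constant in $t$ when $(A|_{V_0})^2=0$ and grows linearly in $t$ otherwise. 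These two formulas already account for every qualitative feature in the statement: the $(1/t)I$ dilations on $V_+$, the constant nilpotent shears of cases (v) and (vi), and the $t$-linear entry in case (vii).

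Enumerating by the pair $(\dim V_+,\,\text{Jordan type of }A|_{V_0})$ yields exactly the claimed seven cases: $\dim V_+=3$ (case (i)); $\dim V_+=2$ with $A|_{V_0}=0$ (cases (ii) and (iii), separated by whether $V_+\perp V_0$); $\dim V_+=1$ with $A|_{V_0}=0$ (case (iv)) or with $A|_{V_0}$ a single nilpotent Jordan block of size $2$ (case (vi)); and $\dim V_+=0$ with $A|_{V_0}$ of Jordan type $J_2(0)\oplus J_1(0)$ (case (v)) or $J_3(0)$ (case (vii)); the remaining possibility $A=0$ is excluded by hypothesis. For each case I would construct an orthonormal basis by first selecting an orthonormal basis of $V_0$ compatible with the Jordan chain of $A|_{V_0}$, completing to $V_0^\perp$, and using the residual orthogonal freedom to eliminate redundant entries. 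The scalars $K,K_1,K_2,K_3$ appearing in the theorem are precisely the residual geometric invariants — the normalizations of the Jordan chain and the angles between $V_+$ and $V_0^\perp$ — that cannot be rotated away.

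The main obstacle is that the splitting $V_+\oplus V_0$ is generically not orthogonal, so $L(t)$ is not block-diagonal in any orthonormal basis and one has to track the off-diagonal coupling between $V_+$ and $V_0^\perp$. These couplings produce the parameter $K$ in case (iii) and the $K_1$, $K_3$ (and the combination $K_1K_3$ appearing in position $(1,2)$) in cases (vi) and (vii); in cases (i), (ii), (iv), and (v) the available orthogonal freedom suffices to eliminate them. Concretely, for each orthonormal basis vector $e_k$ one writes $e_k=\pi_{V_+}e_k+\pi_{V_0}e_k$ via the (non-orthogonal) projectors onto $V_+$ along $V_0$ and vice versa, applies the two block formulas above, and collects terms by order in $1/t$. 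A convenient bookkeeping device is the real Schur decomposition $A=QTQ^T$ with $T$ quasi-upper-triangular and ordered so that the nonzero-eigenvalue blocks appear first; then $L(t)=QT(I+tT)^{-1}Q^T$ is automatically expressed in the orthonormal basis given by the columns of $Q$, and the block-by-block reduction to the seven normal forms becomes a finite case check.
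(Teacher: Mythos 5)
Your core reduction is correct and is essentially a coordinate-free reorganization of the paper's own proof: the paper likewise passes to a real canonical form, computes $(I+tJ)^{-1}J$ block by block, and then spends most of the argument constructing orthonormal bases and deciding which parameters survive. Your two block formulas $L(t)|_{V_+}=\frac1t I+O(t^{-2})$ and $L(t)|_{V_0}=A|_{V_0}-t\,(A|_{V_0})^2$, the equivalence of the hypothesis with the absence of strictly negative real eigenvalues, and the enumeration by $(\dim V_+,\ \text{Jordan type of }A|_{V_0})$ are all right and reproduce the seven cases with the correct qualitative time dependence.

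The gap is in the step you defer, which is the actual content of the theorem: realizing the specific normal forms in an orthonormal basis and identifying which parameters survive. Your bookkeeping claim is wrong for case (iv): there the $V_+$--$V_0$ non-orthogonality cannot be rotated away, and it is exactly the parameter $K$ in (\ref{T1E3}). For instance, $A=e_2\otimes e_3+e_3\otimes e_3$ gives $L(t)=\frac{1}{1+t}\left(e_2\otimes e_3+e_3\otimes e_3\right)$, i.e. (\ref{T1E3}) with $K=1$; since the $K=0$ form has the symmetric leading coefficient $e_3\otimes e_3$ and orthogonal conjugation preserves symmetry, no orthonormal basis makes $K=0$. (Conversely, in case (vii) there is no $V_+$ at all, so $K_1,K_3$ cannot be ``angles between $V_+$ and $V_0^\perp$''; they come from the non-orthogonality and normalization of the $J_3(0)$ chain within $V_0$.) Finally, the real Schur shortcut does not reduce the matching to a routine check: the Schur basis makes $L(t)$ upper triangular for all $t$, while for example (\ref{T1E6}) has a nonzero $(3,2)$ entry and a vanishing $(2,3)$ entry at order $1/t$, so the Schur output is in general not literally one of the listed forms. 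You would still have to carry out, case by case, the choice of orthonormal basis and the verification of which configurations of the relevant rank-one factors are achievable --- which is precisely what the paper's proof consists of (its necessary and sufficient conditions (\ref{ab}), (\ref{ab1}) and the explicit invertible matrices $P$). As written, your proposal establishes the correct case list and growth rates but not the stated normal forms.
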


\noindent {\it Proof of Theorem \ref{ClassHomEne}}  The Jordan Canonical Form for real $3 \times 3$ matrices
says that there exists an orthonormal basis $e_1, e_2, e_3$ and a real invertible  matrix $P$ such that $A = P J P^{-1}$,
where the matrix $J$  has one of the following forms in this basis:
\beq
\left( \begin{array}{ccc} \alpha & \beta & 0 \\
-\beta & \alpha & 0 \\ 0 & 0 & \gamma \end{array} \right), 
\left( \begin{array}{ccc} \lambda_1 & 0  & 0 \\
0 & \lambda_2 & 0 \\ 0 & 0 & \lambda_3 \end{array} \right), 
\left( \begin{array}{ccc} \xi & 1 & 0 \\
0 & \xi & 0 \\ 0 & 0 & \eta \end{array} \right), 
\left( \begin{array}{ccc} \lambda & 1 & 0 \\
0 & \lambda  & 1 \\ 0 & 0 & \lambda  \end{array} \right). 
\eeq
All entries are real and $\beta \ne 0$.  In these four cases, respectively, we have
that $\det (I + tA) $ is 
\beq
((1+ \alpha t)^2 + t^2 \beta^2)(1+ t \gamma), \quad (1 + t \lambda_1)(1 + t \lambda_2)(1 + t \lambda_3),  \quad
(1 + t \eta)(1 + t \xi)^2,  \quad (1 + t \lambda)^3 \label{dets}
\eeq
Therefore, necessary and sufficient conditions that $\det (I + tA)>0 $ for $t \ge 0$ are, respectively,
\beq
\gamma \ge 0, \quad \quad  \lambda_1 \ge 0, \lambda_2 \ge 0, \lambda_3 \ge 0,  \quad  \quad 
\eta \ge 0, \xi \ge 0,  \quad \quad   \lambda \ge 0.
\eeq
Again in this basis, we have that $L(t) = P (I + t J)^{-1} J P^{-1}$ where, respectively, 
\beqs
(I + t J)^{-1} J  &=& 
\left( \begin{array}{ccc} \frac{\alpha + t (\alpha^2 + \beta^2)}{(1 + t \alpha)^2 + t^2 \beta^2}   & \frac{\beta}{(1 + t \alpha)^2 + t^2 \beta^2} & 0 \\
\frac{-\beta}{(1 + t \alpha)^2 + t^2 \beta^2} & \frac{\alpha + t (\alpha^2 + \beta^2)}{(1 + t \alpha)^2 + t^2 \beta^2} & 0 \\ 0 & 0 & \frac{\gamma}{1 + t \gamma} \end{array} \right), 
\left( \begin{array}{ccc} \frac{\lambda_1}{1 + t \lambda_1} & 0  & 0 \\
0 & \frac{\lambda_2}{1 + t \lambda_2} & 0 \\ 0 & 0 & \frac{\lambda_3}{1 + t \lambda_3}\end{array} \right), 
\nonumber \\
& & \left( \begin{array}{ccc} \frac{\xi}{1 + t \xi } & \frac{1}{(1 + t \xi )^2} & 0 \\
0 & \frac{\xi}{1 + t \xi } & 0 \\ 0 & 0 & \frac{\eta}{1 + t \eta }  \end{array} \right), 
\left( \begin{array}{ccc} \frac{\lambda}{1 + t \lambda} &  \frac{1}{(1 + t \lambda)^2}  & \frac{-t}{(1 + t \lambda)^3}  \\
0 & \frac{\lambda}{1 + t \lambda}   & \frac{1}{(1 + t \lambda)^2}  \\ 0 & 0 & \frac{\lambda}{1 + t \lambda}   \end{array} \right)  \label{cases}
\eeqs
The first matrix in (\ref{cases})  with $\gamma > 0$ gives (\ref{T1E1}), and with $\gamma = 0$ gives (\ref{T1E2}).
To see the latter, note first that 
\beq
\lim_{t \to \infty} P \left( \begin{array}{ccc} \frac{t \alpha + t^2 (\alpha^2 + \beta^2)}{(1 + t \alpha)^2 + t^2 \beta^2}   & \frac{t\beta}{(1 + t \alpha)^2 + t^2 \beta^2} & 0 \\
\frac{-t\beta}{(1 + t \alpha)^2 + t^2 \beta^2} & \frac{t\alpha + t^2 (\alpha^2 + \beta^2)}{(1 + t \alpha)^2 + t^2 \beta^2} & 0 \\ 0 & 0 & 0\end{array} \right) P^{-1} = P (e_1 \otimes e_1 + e_2 \otimes e_2) P^{-1}
= I  - P e_3 \otimes P^{-T} e_3. \label{cd}
\eeq
Let $a = P e_3$ and $n = P^{-T} e_3$, so that $a \cdot n = 1$.  Note that any $a$ and $n$ satisfying
$a \cdot n = 1$ are possible by choosing (the invertible) $P = a \otimes e_3 + n_1^{\perp} \otimes e_2 + n_2^{\perp} \otimes e_2$, where  $n_1^{\perp} , n_2^{\perp}$ are orthonormal and perpendicular to $n$.  The required basis
can be taken as the orthonormal basis $\hat{e}_1, \hat{e}_2, n/|n|$ where $n \cdot \hat{e}_2 = 
a \cdot \hat{e}_2= 0$.  In 
this basis $a = (-K/\nu, 0, 1/\nu)$ and $n = (0,0,\nu)$, $\nu \ne 0, K \in \mathbb R$, which
gives (\ref{T1E2}).

Consider now the second matrix in (\ref{cases}).  If $ \lambda_1 \lambda_2 \lambda_3>0$ then
we get the right hand expression in (\ref{T1E1}).  If exactly one of $\lambda_1, \lambda_2, \lambda_3$
vanishes, say $\lambda_3 = 0$, we get the expression on the right of (\ref{cd}) multiplying $1/t$,
and we recover case (\ref{T1E2}). If exactly two of $\lambda_1, \lambda_2, \lambda_3$
vanish, say $\lambda_1 = \lambda_2 = 0$, then
\beq
\lim_{t \to \infty} P \left( \begin{array}{ccc}0 & 0& 0 \\
0 & 0 & 0 \\ 0 & 0 &  \frac{t \lambda_3}{1 + t \lambda_3}  \end{array} \right) P^{-1} = P e_3 \otimes e_3 P^{-1}
= P e_3 \otimes P^{-T} e_3. \label{pd}
\eeq
As above, we write $a = P e_3, n = P^{-T}e_3$, so $a \cdot n = 1$, and, as above, choose an orthonormal basis
in which $a = (0, K \nu, \nu), n = (0, 0, 1/\nu)$, $K \in \mathbb R, \nu \ne 0$.  This gives (\ref{T1E3}).

Consider now the third matrix in (\ref{cases}).  If $\xi = \eta = 0$, we get immediately (\ref{T1E5}).
If $\xi = 0$ but $\eta \ne 0$, we have
\beqs
P (I + t J)^{-1} J P^{-1} &=& 
P \left( \begin{array}{ccc} 0  & 1 & 0\\
0 & 0 & 0 \\ 0 & 0 & 0  \end{array} \right)P^{-1}
+ \frac{1}{t} P \left( \begin{array}{ccc} 0  & 0 & 0  \\
0 & 0 & 0 \\ 0 & 0 & \frac{t \eta }{1 + t \eta }  \end{array} \right)P^{-1} \nonumber \\
&=& P e_1 \otimes P^{-T} e_2  + \frac{1}{t} P e_3 \otimes P^{-T} e_3 + O\bigg( \frac{1}{t^2} \bigg) .
\label{new}
\eeqs
Let $a = P e_1, n = P^{-T}e_2, b = P e_3, m = P^{-T}e_3$.   These are restricted by the 
necessary conditions 
\beq
a \cdot n = 0,\ b \cdot m = 1,\ a \cdot m = 0,\ b \cdot n = 0,\ m \nparallel n,\ a \nparallel b. \label{ab}
\eeq
Choose the orthonormal basis $\hat{e}_1 =  a/|a|,  \hat{e}_2 = n/|n|, \hat{e}_3 = \hat{e}_1 \times \hat{e}_2$.
By scaling, we can assume without loss of generality that $a \cdot \hat{e}_1 = 1$ and $m \cdot \hat{e}_3 = 1$.
In this basis $a = (1,0,0), n = (0, K_2,0), b = (K_1, 0, 1), m = (0, K_3,1)$, $K_2 \ne 0$.
The conditions (\ref{ab}) are necessary and sufficient that the first two terms on the right hand side of (\ref{new}) are 
$a \otimes n + (1/t) b \otimes m$, as can be verified by the choice $P =  \hat{e}_1 \otimes {e}_1 + 
(1/K_2)(\hat{e}_2 - K_3 \hat{e}_3) \otimes {e}_2 + (K_1 \hat{e}_1 + \hat{e}_3) \otimes {e}_3$,
which is invertible.  The basis $\hat{e}_1, \hat{e}_2, \hat{e}_3$ is the required basis, and the result
is given in (\ref{T1E6}).

Still considering the third matrix in (\ref{cases}), assume $\xi \ne  0$ and $\eta \ne 0$.  We have
\beq
\lim_{t \to \infty} P \left( \begin{array}{ccc} \frac{t \xi}{1 + t \xi } & \frac{t}{(1 + t \xi )^2} & 0 \\
0 & \frac{t \xi}{1 + t \xi } & 0 \\ 0 & 0 & \frac{t \eta}{1 + t \eta }  \end{array} \right)
= I ,
\eeq
and we recover case (\ref{T1E1}).

In cases  (\ref{T1E1})-(\ref{T1E3}) and  (\ref{T1E6}) the error is clearly bounded by $const/t^2$.

Finally, consider the last matrix in (\ref{cases}).  If $\lambda \ne 0$ we recover case (\ref{T1E1}).
 If $\lambda = 0$, we have
\beqs
P (I + t J)^{-1} J P^{-1} &=& 
P \left( \begin{array}{ccc} 0  & 1 & -t \\
0 & 0 & 1 \\ 0 & 0 & 0  \end{array} \right)P^{-1}\nonumber \\
&=& P e_1 \otimes P^{-T} e_2  +( P e_2 - t  P e_1 ) \otimes P^{-T} e_3 .
\eeqs
Let $a = P e_1, n = P^{-T} e_2, b = P e_2, m = P^{-T} e_3$.   We have the necessary conditions
\beq
a \cdot n = 0,\ b \cdot m = 0,\ a \cdot m = 0,\ b \cdot n = 1,\ m \nparallel n,\ a \nparallel b. \label{ab1}
\eeq
Choose the basis $\hat{e}_1 = a/|a|, \hat{e}_3 = n/|n|, \hat{e}_2 = \hat{e}_3 \times \hat{e}_1$.
By scaling, assume that $n \cdot \hat{e}_2 = 1$.   In this basis $a = (K_3,0,0), 
n = (0,1,K_4), b = (K_5, 1, 0), m = (0,0, K_1)$, $K_1 K_3 \ne 0$.  The corresponding 
$P = K_3 \hat{e}_1 \otimes e_1 + (K_5 \hat{e}_1 + \hat{e}_2) \otimes e_2 +(1/K_1) (\hat{e}_3- K_4 \hat{e}_2) \otimes e_3$ is 
invertible.  Putting $K_2 = K_3 K_4 - K_1 K_5 $, we get case (\ref{T1E4}).

\begin{remark}
It is possible to obtain a more extensive classification of the homoenergetic flows
if $\det (I + t^* A) =0$ at some $t^*>0$. In that case $L\left(
t\right)  $ blows-up at $t^*$ and the behavior of $L(t)$
can then be read off from (\ref{dets}), (\ref{cases})
in the proof of 
Theorem \ref{ClassHomEne}.  Nevertheless, in
this paper we will restrict our analysis only to the cases in which $L\left(
t\right)  $ is globally defined in time.
\end{remark}

\section{General properties of homoenergetic flows}

\label{ss:genprop}

\bigskip

\subsection{Well posedness theory for homoenergetic flows of Maxwell molecules}

\bigskip

We first prove using standard arguments for the Boltzmann equation that the
homoenergetic flows with the form (\ref{B1_0}), (\ref{B4_0}), (\ref{B3_0})
exist for a large class of initial data $g_{0}\left(  w\right)  .$ This
question has been considered in \cite{CercArchive}, \cite{Cerc2000}. However,
the approach used in those papers is based on the $L^{1}$ theory for Boltzmann
equations (cf. \cite{Cerc69}), and it will be more convenient for the type of arguments used in
this paper to consider homoenergetic flows in the class of Radon measures. On
the other hand, the analysis in \cite{CercArchive}, \cite{Cerc2000} is
restricted to the case of simple shear (cf., (\ref{T1E5})) and we will study
more general classes of homoenergetic flows. Moreover, in some cases we will
need to consider equations with additional terms which are due to rescalings
of the solutions. For this reason we formulate here a well posedness theorem
for a family of Boltzmann equations with the degree of generality that we will
require. The class of equations that we will consider is the following one
\begin{align}
\partial_{t}G-\partial_{w}\cdot\left(  \left[  Q(t)w\right]  G\right)   &
=\mathbb{C}G\left(  w\right) \label{eq:boltzgen}\\
\mathbb{C}G\left(  w\right)   &  =\int_{\mathbb{R}^{3}}dw_{\ast}\int_{S^{2}%
}d\omega B\left(  n\cdot\omega,\left\vert w-w_{\ast}\right\vert \right)
\left[  G^{\prime}G_{\ast}^{\prime}-G_{\ast}G\right]
\ ,\label{eq:collboltzgen}\\
G\left(  0,w\right)   &  =G_{0}\left(  w\right),  \ \label{eq:invalBol}%
\end{align}
where
\begin{equation}
Q(\cdot)\in C^{1}\left(  \left[  0,\infty\right)  ;M_{3\times3}\left(
\mathbb{R}\right)  \right)  \ \ ,\ \ \left\Vert Q(t)\right\Vert \leq
c_{1}+c_{2}t,\text{ with }c_{1}>0,\ c_{2}>0. \label{T3E2}%
\end{equation}
with the norm $\left\Vert \cdot\right\Vert $ defined in (\ref{T3E1}). We will
assume also in the following that the function
\begin{equation}
\Lambda\left(  w,w_{\ast}\right)  =\int_{S^{2}}B\left(  n\cdot\omega
,\left\vert w-w_{\ast}\right\vert \right)  d\omega\ \ ,\ \ n=\frac{\left(
w-w_{\ast}\right)  }{\left\vert w-w_{\ast}\right\vert } \label{T3E2c}%
\end{equation}
satisfies
\begin{equation}
\Lambda\text{ is continuous and}\ 0\leq\Lambda\left(  w,w_{\ast}\right)  \leq
c_{3}\ \ \text{with }c_{3}>0. \label{T3E2a}%
\end{equation}

We will prove well posedness results for the collision kernel
associated to Maxwell molecules (or more generally Maxwell
pseudomolecules in the notation of \cite{TM}). In this case, the kernel $B$ is
homogeneous of degree zero in $\left\vert w-w_{\ast}\right\vert $ 
(e.g., the homogeneity parameter $\gamma = 0$) and we then
have $B\left(  n\cdot\omega,\left\vert w-w_{\ast}\right\vert \right)
=B\left(  n\cdot\omega\right)  .$ This restriction to Maxwell molecules is the reason we assume the stringent boundedness condition (\ref{T3E2a}).

We now introduce some definitions and notation. We denote by $\mathcal{M}%
_{+}\left(  \mathbb{R}_{c}^{3}\right)  $ the set of Radon measures in
$\mathbb{R}_{c}^{3}$. We denote as $\mathbb{R}_{c}^{3}$ the compactification
of $\mathbb{R}^{3}$ by means of a single point $\infty.$ This is  a
technical issue that we need in order to have convenient compactness
properties for some subsets of $\mathcal{M}_{+}\left(  \mathbb{R}_{c}%
^{3}\right)  .$ The space $C\left(  \left[  0,\infty\right)  :\mathcal{M}%
_{+}\left(  \mathbb{R}_{c}^{3}\right)  \right)  $ is defined endowing
$\mathcal{M}_{+}\left(  \mathbb{R}_{c}^{3}\right)  $ with the measure norm
\begin{equation}
\left\Vert \mu \right\Vert _{M}=\sup_{\varphi\in C\left(  \mathbb{R}_{c}%
^{3}\right)  :\left\Vert \varphi\right\Vert _{\infty}=1}\left\vert \mu\left(
\varphi\right)  \right\vert =\int_{\mathbb{R}_{c}^{3}}\left\vert
\mu\right\vert \left(  dw\right)  . \label{T3E2b}%
\end{equation}
We remark that this definition implies that the total measure of
$\mathbb{R}_{c}^{3}$ is finite if $\mu\in\mathcal{M}_{+}\left(  \mathbb{R}%
_{c}^{3}\right)  .$ Moreover $\varphi\in C\left(  \mathbb{R}_{c}^{3}\right)  $
implies that the limit value $\varphi\left(  \infty\right)  $ exists.

Given $G\in C\left(  \left[  0,\infty\right]  :\mathcal{M}_{+}\left(
\mathbb{R}_{c}^{3}\right)  \right)  $ we define%
\begin{equation}
\mathbb{A}\left[  G\right]  \left(  t,w\right)  =\int_{\mathbb{R}^{3}}%
dw_{\ast}\int_{S^{2}}d\omega B\left(  n\cdot\omega,\left\vert w-w_{\ast
}\right\vert \right)  G_{\ast}\left(  t,\cdot\right)  . \label{T3E4}%
\end{equation}
Given $h_{0}\in\mathcal{M}_{+}\left(  \mathbb{R}_{c}^{3}\right)  $ we will
denote as $S\left(  t;t_{0}\right)  ,\ t\geq t_{0}\geq0,$ the operator
$S_{G}\left(  t;t_{0}\right)  :\mathcal{M}_{+}\left(  \mathbb{R}_{c}%
^{3}\right)  \rightarrow\mathcal{M}_{+}\left(  \mathbb{R}_{c}^{3}\right)  $
defined by means of
\begin{align}
\partial_{t}h-\partial_{w}\cdot\left(  \left[  Q(t)w\right]  h\right)   &
=-\mathbb{A}\left[  G\right]  \left(  t,w\right)  h\ \ ,\ \ h\left(
t_{0},\cdot\right)  =h_{0}\label{T3E3a}\\
h\left(  t,w\right)   &  =S_{G}\left(  t;t_{0}\right)  h_{0}. \label{T3E3}%
\end{align}
The operator $S_{G}\left(  t;t_{0}\right)  $ is well defined, since
(\ref{T3E3a}) can be solved explicitly using the method of characteristics
taking into account (\ref{T3E2}). The solution is given below in (\ref{mocsoln}). A relevant point is that $\mathbb{A}\left[
G\right]  \left(  w\right)  \geq0$ and as a consequence no divergences arise
from large values of $\left\vert w\right\vert $. We will use the following
concept of solutions of (\ref{eq:boltzgen})-(\ref{eq:invalBol}).

\begin{definition}
\label{mildSol}We will say that $G\in C\left(  \left[  0,\infty\right]
:\mathcal{M}_{+}\left(  \mathbb{R}_{c}^{3}\right)  \right)  $ is a mild
solution of (\ref{eq:boltzgen})-(\ref{eq:invalBol}) with initial value
$G\left(  0,\cdot\right)  =G_{0}\in\mathcal{M}_{+}\left(  \mathbb{R}_{c}%
^{3}\right)  $ if $G$ satisfies the following integral equation:%
\begin{equation}
G\left(  t,w\right)  =S_{G}\left(  t;0\right)  G_{0}\left(  w\right)
+\int_{0}^{t}S_{G}\left(  t;s\right)  \mathbb{C}^{(+)}G\left(  s,w\right)  ds,
\label{T3E5}%
\end{equation}
where the operator $S_{G}\left(  t;s\right)  $ is as in (\ref{T3E3}) and
\begin{align}
\mathbb{C}^{\left(  +\right)  }G\left(  w\right)   &  =\int_{\mathbb{R}^{3}%
}dw_{\ast}\int_{S^{2}}d\omega B\left(  n\cdot\omega,\left\vert w-w_{\ast
}\right\vert \right)  G^{\prime}G_{\ast}^{\prime},\label{T3E5a}\\
\mathbb{C}^{\left(  -\right)  }G\left(  w\right)   &  =G\int_{\mathbb{R}^{3}%
}dw_{\ast}\int_{S^{2}}d\omega B\left(  n\cdot\omega,\left\vert w-w_{\ast
}\right\vert \right)  G_{\ast}=\mathbb{A}\left[  G\right]  G. \label{T3E5b}%
\end{align}

\end{definition}

We emphasize that (\ref{T3E5}) must be understood as an identity in the sense
of measure, i.e., acting over an arbitrary test function $\varphi\in C\left(
\mathbb{R}_{c}^{3}\right)  .$ Note also that all the operators appearing in
(\ref{T3E5}) are well defined for $G\in C\left(  \left[  0,\infty\right]
:\mathcal{M}_{+}\left(  \mathbb{R}_{c}^{3}\right)  \right)  $ and that
$S_{G}\left(  t;s\right)  $ is a bounded operator from $\mathcal{M}_{+}\left(
\mathbb{R}_{c}^{3}\right)  $ to $\mathcal{M}_{+}\left(  \mathbb{R}_{c}%
^{3}\right)  $ for $0\leq s\leq t\leq T<\infty.$

\bigskip

\begin{theorem}
\label{WellPos}Suppose that $G_{0}\in\mathcal{M}_{+}\left(  \mathbb{R}%
^{3}\right)  $ satisfies
\[
\int_{\mathbb{R}^{3}}G_{0}\left(  dw\right)  <\infty,
\]
Then, there exists a unique mild solution $G\in C\left(  \left[
0,\infty\right)  :\mathcal{M}_{+}\left(  \mathbb{R}_{c}^{3}\right)  \right)  $
in the sense of Definition \ref{mildSol} to the initial value problem
(\ref{eq:boltzgen})-(\ref{eq:invalBol}) with $Q$ and $B$ satisfying
(\ref{T3E2}), (\ref{T3E2a}). Moreover, the problem (\ref{eq:boltzgen}%
)-(\ref{eq:invalBol}) is satisfied in the sense of measures.
\end{theorem}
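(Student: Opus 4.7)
The proof proceeds by solving the transport–absorption part of (\ref{eq:boltzgen}) via characteristics and then setting up a Banach fixed point for the full nonlinear equation using the Duhamel formula (\ref{T3E5}). The crucial input is the boundedness (\ref{T3E2a}), which is characteristic of Maxwell molecules and keeps every operator in sight bounded on $\mathcal{M}_+(\mathbb{R}^3_c)$.

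First, I construct $S_G(t,s)$ explicitly. Given $Q(\cdot)\in C^1$ with $\|Q(t)\|\le c_1+c_2 t$, the linear ODE $\dot{y}=-Q(t)y$ admits a globally defined invertible matrix flow $\Phi(t,s)$ on $[0,\infty)$. For a fixed $G\in C([0,\infty);\mathcal{M}_+(\mathbb{R}^3_c))$, the quantity $\mathbb{A}[G](t,w)$ in (\ref{T3E4}) satisfies $0\le\mathbb{A}[G](t,w)\le c_3\|G(t)\|_M$ by (\ref{T3E2a}). The linear equation (\ref{T3E3a}) is then solved in the measure sense by
\[
\bigl(S_G(t,s)h_0\bigr)(\varphi)=\int_{\mathbb{R}^3_c}\varphi\bigl(\Phi(t,s)w\bigr)\exp\!\Bigl(-\!\int_s^t\mathbb{A}[G]\bigl(\tau,\Phi(\tau,s)w\bigr)\,d\tau\Bigr)h_0(dw),
\]
with the convention $\Phi(t,s)\infty=\infty$. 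Nonnegativity of the weight and the fact that $\Phi$ maps $\mathbb{R}^3_c$ to itself yield $\|S_G(t,s)h_0\|_M\le\|h_0\|_M$.

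Second, I formulate (\ref{T3E5}) as a fixed point problem for the map
\[
\mathcal{T}[G](t)=S_G(t,0)G_0+\int_0^t S_G(t,s)\mathbb{C}^{(+)}G(s)\,ds
\]
on the closed set $X_T:=\{G\in C([0,T];\mathcal{M}_+(\mathbb{R}^3_c)):\sup_{t\le T}\|G(t)\|_M\le 2\|G_0\|_M\}$. Since $\mathbb{C}^{(+)}G(s)$ is a nonnegative measure of total mass bounded by $c_3\|G(s)\|_M^2$, invariance $\mathcal{T}(X_T)\subset X_T$ holds for $T$ small enough. The contraction property follows from the bilinearity of $\mathbb{C}^{(+)}$ and the linearity of $\mathbb{A}[G]$ in $G$: one obtains $\|\mathcal{T}[G_1](t)-\mathcal{T}[G_2](t)\|_M\le C(T,\|G_0\|_M)\,T\,\sup_{s\le T}\|G_1(s)-G_2(s)\|_M$, and Banach's theorem then produces a unique local mild solution.

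Third, I extend globally by an a priori mass bound. Testing the mild identity against the constant test function $\varphi\equiv 1\in C(\mathbb{R}^3_c)$ (this is where the compactification is essential, as it makes $\varphi=1$ a legitimate test function even though characteristics may sweep mass toward infinity), the drift contribution vanishes and the standard gain–loss cancellation of the Boltzmann operator gives the conservation $\|G(t)\|_M=\|G_0\|_M$ throughout the interval of existence. Iterating the local construction on successive intervals of equal length then yields a unique global mild solution in $C([0,\infty);\mathcal{M}_+(\mathbb{R}^3_c))$, and differentiating the Duhamel identity against an arbitrary $\varphi\in C^1(\mathbb{R}^3_c)$ verifies that $G$ satisfies (\ref{eq:boltzgen})–(\ref{eq:invalBol}) in the sense of measures.

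The main technical obstacle is the Lipschitz estimate on $S_G$ inside the contraction step: the semigroup itself depends on $G$ through the exponential weight $\exp(-\int_s^t\mathbb{A}[G])$, so one must control $\|S_{G_1}(t,s)\mu-S_{G_2}(t,s)\mu\|_M$ by combining the elementary inequality $|e^{-a}-e^{-b}|\le|a-b|$ for $a,b\ge 0$ with the estimate $|\mathbb{A}[G_1]-\mathbb{A}[G_2]|\le c_3\|G_1-G_2\|_M$. This is the precise point where the boundedness hypothesis (\ref{T3E2a}) on $\Lambda$ is indispensable, and the reason the same strategy fails for harder (unbounded) collision kernels.
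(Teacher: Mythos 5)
Your proposal is correct and follows essentially the same route as the paper: the explicit characteristics representation of $S_G(t;s)$ with the exponential weight, the contraction of the Duhamel map on the ball $\{\sup_{t\le T}\Vert G(t)\Vert_M\le 2\Vert G_0\Vert_M\}$ using $\vert e^{-a}-e^{-b}\vert\le\vert a-b\vert$ together with the Lipschitz bound on $\mathbb{A}[G]$ from (\ref{T3E2a}), and global extension via mass conservation obtained by testing with $\varphi\equiv1$. The only slight imprecision is the claim of continuation on intervals of \emph{equal} length: since $\Vert Q(t)\Vert$ may grow linearly, the constants (e.g.\ the Jacobian factor in the Lipschitz estimate for $S_G$) depend on the time horizon, so the step size may shrink, but as the paper notes they remain bounded on every finite interval, so global existence is unaffected.
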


\bigskip

\begin{remark}
Notice that since $B$ is continuous and it satisfies (\ref{T3E2a}) we have
that $\mathbb{C}^{\left(  +\right)  }G\left(  w\right)  $ and $\mathbb{C}%
^{\left(  -\right)  }G\left(  w\right)  $ in (\ref{T3E5a}), (\ref{T3E5b})
define measures in $\mathcal{M}_{+}\left(  \mathbb{R}_{c}^{3}\right)  $ and it
makes sense to say that the equation (\ref{eq:boltzgen})-(\ref{eq:invalBol})
is satisfied in the sense of measures. The term $-\partial_{w}\cdot\left(
\left[  Q(t)w\right]  G\right)  $ is understood integrating by parts and
passing the derivative to the test function $\varphi.$ The only difference
between solutions in the sense of measures and the {\it weak solutions} defined in
 Definition \ref{WeakSol} below is that in this second case we write the
collision kernel in a symmetrized form which will be convenient in forthcoming computations.
\end{remark}

We introduce now the concept of weak solution of (\ref{eq:boltzgen}%
)-(\ref{eq:invalBol}) which will be also needed later.

\bigskip

\begin{definition}
\label{WeakSol}We will say that $G\in C\left(  \left[  0,\infty\right)
:\mathcal{M}_{+}\left(  \mathbb{R}_{c}^{3}\right)  \right)  $ is a weak
solution of (\ref{eq:boltzgen})-(\ref{eq:invalBol}) with initial value
$G\left(  0,\cdot\right)  =G_{0}\in\mathcal{M}_{+}\left(  \mathbb{R}_{c}%
^{3}\right)  $ if for any $T\in\left(  0,\infty\right)  $ and any test
function $\varphi\in C\left(  \left[  0,T\right)  :C^{1}\left(  \mathbb{R}%
_{c}^{3}\right)  \right)  $ the following identity holds
 \begin{align}\label{T3E6}
&  \int_{\mathbb{R}^{3}}\varphi\left(  T,w\right)  G\left(  T,dw\right)
-\int_{\mathbb{R}^{3}}\varphi\left(  0,w\right)  G_{0}\left(  dw\right)
\\&  
=-\int_{0}^{T}dt\int_{\mathbb{R}^{3}}\int\partial_{t}\varphi G\left(
t,dw\right)  -\int_{0}^{T}dt\int_{\mathbb{R}^{3}}\left[  Q(t)w\cdot
\partial_{w}\varphi\right]  G\left(  t,dw\right)  \nonumber\\
& \quad +\frac{1}{2}\int_{0}^{T}dt\int_{\mathbb{R}^{3}}\int_{\mathbb{R}^{3}}%
\int_{S^{2}}d\omega G\left(  t,dw\right)  G\left(  t,dw_{\ast}\right)
B\left(  n\cdot\omega,\left\vert w-w_{\ast}\right\vert \right) 
\Big[\varphi\left(  t,w^{\prime}\right)  +\varphi\left(  t,w_{\ast}^{\prime
}\right)\nonumber \vspace{5mm}\\&  \qquad -\varphi\left(  t,w\right)  -\varphi\left(  t,w_{\ast}\right)
\Big]. \nonumber
\end{align}
\end{definition}

We will use repeatedly the following norms:
\begin{equation}
\left\Vert G\right\Vert _{1,s}=\int_{\mathbb{R}^{3}}\left(  1+\left\vert
w\right\vert ^{s}\right)  G\left(  dw\right)  \ \ \text{for\ }G\in
\mathcal{M}_{+}\left(  \mathbb{R}_{c}^{3}\right)  \text{\ },\ \ s>0
\label{NormS}%
\end{equation}

\begin{theorem}
\label{WeakSolTh}Suppose that $G\in C\left(  \left[  0,\infty\right)
:\mathcal{M}_{+}\left(  \mathbb{R}_{c}^{3}\right)  \right)  $ is a mild
solution of (\ref{eq:boltzgen})-(\ref{eq:invalBol}) with $Q$ and $B$ satisfying
(\ref{T3E2}), (\ref{T3E2a}) and initial value
$G\left(  0,\cdot\right)  =G_{0}\in\mathcal{M}_{+}\left(  \mathbb{R}_{c}%
^{3}\right)  .$ Then $G$ is also a weak solution of (\ref{eq:boltzgen}%
)-(\ref{eq:invalBol}) in the sense of Definition \ref{WellPos}.
Suppose that in addition $G_{0}$ satisfies
\begin{equation}
\left\Vert G_{0}\right\Vert _{1,s}<\infty\label{T4E2}%
\end{equation}
for some $s>0.$ Then the mild solution of (\ref{eq:boltzgen}%
)-(\ref{eq:invalBol}) satisfies
\begin{equation}
\sup_{0\leq t\leq T}\left\Vert G\left(  t,\cdot\right)  \right\Vert
_{1,s}<C\left(  T,\left\Vert G_{0}\right\Vert _{M}\right)  \left\Vert
G_{0}\right\Vert _{1,s}<\infty\label{T4E3}%
\end{equation}
for any $T\in\left(  0,\infty\right)  .$ Moreover, if $s>2$ the identity
(\ref{T3E6}) is satisfied for any test function $\varphi\in C\left(  \left[
0,T\right]  :C^{1}\left(  \mathbb{R}_{c}^{3}\right)  \right)  $ such that
$\left\vert \varphi\left(  w\right)  \right\vert +\left\vert w\right\vert
\left\vert \nabla_{v}\varphi\left(  w\right)  \right\vert \leq C_{0}\left(
1+\left\vert w\right\vert ^{2}\right)  $.
\end{theorem}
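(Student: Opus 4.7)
The plan is to verify the three assertions of the theorem in sequence: (a) every mild solution is a weak solution; (b) the $s$-th moment is propagated; (c) the weak identity extends to test functions of quadratic growth when $s>2$. For (a), the assumption $Q\in C^{1}$ with the growth (\ref{T3E2}) ensures that the characteristic ODE $\dot w=-Q(t)w$ has a unique global resolvent $\Psi(t;s)$, so the propagator defined by (\ref{T3E3a})--(\ref{T3E3}) admits the explicit representation
\beq
(S_{G}(t;s)h)(\varphi)=\int_{\mathbb{R}^{3}}\varphi(\Psi(t;s)w)\exp\Bigl(-\int_{s}^{t}\mathbb{A}[G](\tau,\Psi(\tau;s)w)\,d\tau\Bigr)h(dw).
\eeq
Pairing (\ref{T3E5}) with $\varphi\in C^{1}(\mathbb{R}_{c}^{3})$, differentiating in $t$ and integrating over $[0,T]$, yields the left-hand side of (\ref{T3E6}) together with the transport contribution $-\int_{0}^{T}\!\!\int Q(t)w\cdot\partial_{w}\varphi\,G(t,dw)\,dt$ and the unsymmetrized loss and gain integrals. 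The standard involutive change of variables $(w,w_{\ast},\omega)\mapsto(w',w'_{\ast},\omega)$ of unit Jacobian then rewrites the latter in the fully symmetric form of (\ref{T3E6}).

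For (b), which is the main technical step, I would close a Grönwall estimate on a truncated moment. Let $\chi\in C_{c}^{\infty}([0,\infty))$ equal $1$ on $[0,1]$ and $0$ on $[2,\infty)$ and set $\varphi_{R}(w)=(1+|w|^{s})\chi(|w|/R)$; each $\varphi_{R}$ is compactly supported and hence lies in $C^{1}(\mathbb{R}_{c}^{3})$. The hyperbolic contribution is controlled by $C_{s}(c_{1}+c_{2}t)\int(1+|w|^{s})G(t,dw)$ thanks to (\ref{T3E2}). For the collision contribution the crucial input is the elementary pointwise bound $|w'|^{s}+|w'_{\ast}|^{s}\le C_{s}(|w|^{s}+|w_{\ast}|^{s})$, which follows from energy conservation $|w'|^{2}+|w'_{\ast}|^{2}=|w|^{2}+|w_{\ast}|^{2}$ and the ensuing inequality $|w'|,|w'_{\ast}|\le|w|+|w_{\ast}|$; together with the bound $\Lambda\le c_{3}$ from (\ref{T3E2a}) this dominates the collision integral by $C(s,c_{3})\|G(t)\|_{M}\|G(t)\|_{1,s}$. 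Since testing with $\varphi\equiv1$ in (\ref{T3E6}) yields mass conservation $\|G(t)\|_{M}\equiv\|G_{0}\|_{M}$, the resulting differential inequality for $\int\varphi_{R}\,G(t,dw)$ has coefficients independent of $R$; Grönwall combined with monotone convergence as $R\to\infty$ then delivers (\ref{T4E3}).

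For (c), assume $s>2$ and let $\varphi$ satisfy $|\varphi|+|w||\nabla_{w}\varphi|\le C_{0}(1+|w|^{2})$. I would substitute $\varphi\,\chi_{R}$ into (\ref{T3E6}) and pass to the limit $R\to\infty$ term by term. The moment bound (\ref{T4E3}) with $s>2$ makes $1+|w|^{2}$ uniformly integrable against $G(t,dw)$ on $[0,T]$, while the elementary bound $(1+|w'|^{2})+(1+|w'_{\ast}|^{2})+(1+|w|^{2})+(1+|w_{\ast}|^{2})\le 4(1+|w|^{2}+|w_{\ast}|^{2})$ together with (\ref{T3E2a}) furnishes an $L^{1}$ majorant for the collision integrand under $G(t,dw)\otimes G(t,dw_{\ast})\otimes d\omega$; dominated convergence then closes the argument. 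The main obstacle in the entire proof lies in step (b): one cannot a priori test the weak formulation against the unbounded function $1+|w|^{s}$, so the truncation $\chi_{R}$, combined with the $R$-independence of the constants produced via mass conservation and the energy-conservation pointwise bound, is indispensable for the self-consistency of the estimate.
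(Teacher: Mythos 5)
Your steps (a) and (c) are in substance the paper's own argument (mild $\Rightarrow$ weak by differentiating the Duhamel pairing and symmetrizing; extension to quadratically growing test functions by cutting off and using the $s>2$ moment bound to kill the tails), and your route for (b) — weak formulation tested against truncated weights plus Gr\"onwall — is a legitimate alternative to the paper's proof, which instead estimates $\Vert \mathbb{C}^{(+)}G\Vert_{1,s}\le C\Vert G\Vert_{1,s}\Vert G\Vert_M$ via $dw'dw_*'=dw\,dw_*$ and $\Vert S_G(t;s)h\Vert_{1,s}\le C_s(T)\Vert h\Vert_{1,s}$, and then runs Gr\"onwall directly on the integral equation (\ref{T3E5}). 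However, your execution of (b) has a genuine gap: with the compactly supported cutoff $\varphi_R(w)=(1+|w|^s)\chi(|w|/R)$ the differential inequality for $\int\varphi_R\,G(t,dw)$ does \emph{not} close with $R$-independent constants. Two terms escape: (i) the transport term produces, on the annulus $R\le|w|\le 2R$ where $\chi'\neq 0$, a contribution of size $(1+|w|^s)$, while $\varphi_R$ itself may vanish there; (ii) in the gain term, $\varphi_R(w')+\varphi_R(w'_*)$ can be of order $R^s$ for incoming pairs with $|w|,|w_*|>2R$ (post-collisional speeds can be much smaller than the incoming ones, e.g. nearly orthogonal $w,w_*$ of comparable modulus), where $\varphi_R(w)=\varphi_R(w_*)=0$. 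Your pointwise bound $|w'|^s+|w'_*|^s\le C_s(|w|^s+|w_*|^s)$ therefore only dominates these terms by the \emph{untruncated} moment $\Vert G(t)\Vert_{1,s}$ — whose finiteness is exactly what is being proved — or by constants of size $R^s\Vert G_0\Vert_M^2$; in either case Gr\"onwall plus monotone convergence does not yield (\ref{T4E3}).

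The strategy is salvageable by changing the truncation: use a saturating weight, e.g. a smoothed version of $\varphi_R(w)=1+\bigl(\min(|w|,R)\bigr)^s$, which is constant near infinity and hence admissible in $C^1(\mathbb{R}^3_c)$. Then $|w|\,|\nabla\varphi_R(w)|\le s\,\varphi_R(w)$ pointwise, and since $|w'|\le|w|+|w_*|$ implies $\min(|w'|,R)\le\min(|w|,R)+\min(|w_*|,R)$, one gets $\varphi_R(w')+\varphi_R(w'_*)\le 2^{s}\bigl(\varphi_R(w)+\varphi_R(w_*)\bigr)$; together with (\ref{T3E2a}) and mass conservation this gives a closed inequality $\frac{d}{dt}\int\varphi_R G\le C(T,\Vert G_0\Vert_M)\int\varphi_R G$ with constants independent of $R$, and monotone convergence finishes the proof. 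Alternatively, follow the paper's route through the mild formulation, where the symplectic change of variables handles the gain term and the semigroup estimate handles the transport, avoiding the truncation issue altogether.
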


\begin{proofof}
[Proof of Theorem \ref{WellPos}]Given $T\in\left(  0,\infty\right)  ,$ we
define an operator $\mathcal{T}_{T}:C\left(  \left[  0,T\right]
:\mathcal{M}_{+}\left(  \mathbb{R}_{c}^{3}\right)  \right)  \rightarrow
C\left(  \left[  0,T\right]  :\mathcal{M}_{+}\left(  \mathbb{R}_{c}%
^{3}\right)  \right)  $ by means of
\[
\mathcal{T}_{T}\left[  G\right]  \left(  t,w\right)  =S_{G}\left(  t;0\right)
G_{0}\left(  w\right)  +\int_{0}^{t}S_{G}\left(  t;s\right)  \mathbb{C}%
^{(+)}G\left(  s,w\right)  ds,\ \ 0\leq t\leq T
\]
where $S_{G}\left(  t;s\right)  $ is as in (\ref{T3E3}). Then, due to
(\ref{T3E5}) the proof of the Theorem reduces to proving existence and
uniqueness of solutions for the fixed point problem
\[
G=\mathcal{T}_{T}\left[  G\right]  .
\]

We prove that the operator $\mathcal{T}_{T}$ is contractive if $T>0$ is
sufficiently small. To this end we prove the following estimates:
\begin{align}
\left\Vert \mathbb{C}^{(+)}G\right\Vert _{M}  &  \leq C\left\Vert G\right\Vert
_{M}^{2}\label{T3E7}\\
\left\Vert \mathbb{C}^{(+)}G_{1}-\mathbb{C}^{(+)}G_{2}\right\Vert _{M}  &
\leq C\left(  \left\Vert G_{1}\right\Vert _{M}+\left\Vert G_{2}\right\Vert
_{M}\right)  \left\Vert G_{1}-G_{2}\right\Vert _{M} \label{T3E8}%
\end{align}
where the norm $\left\Vert \cdot\right\Vert _{M}$ is as in (\ref{T3E2b}) and
we have used (\ref{T3E2a}) as well as the fact that the mapping $\left(
w,w_{\ast}\right)  \rightarrow\left(  w^{\prime},w_{\ast}^{\prime}\right)  $
is bijective and that the symplectic identity $dwdw_{\ast}=dw^{\prime}dw_{\ast
}^{\prime}$ holds (cf., (\ref{CM1}), (\ref{CM2})). We define $\mathcal{A}%
\subset C\left(  \left[  0,T\right]  :\mathcal{M}_{+}\left(  \mathbb{R}%
_{c}^{3}\right)  \right)  $ by%
\[
\mathcal{A}=\left\{  G\in C\left(  \left[  0,T\right]  :\mathcal{M}_{+}\left(
\mathbb{R}_{c}^{3}\right)  \right)  :\left\Vert G\left(  \cdot,t\right)
\right\Vert _{M}\leq2\left\Vert G_{0}\right\Vert _{M}\text{ \ for }t\in\left[
0,T\right]  \right\}  .
\]

On the other hand we have the following estimates:%
\begin{equation}
\left\Vert S_{G}\left(  t;s\right)  \right\Vert \leq1\ \ \text{for }0\leq
s\leq t<\infty, \label{T3E9}%
\end{equation}
where we denote the norm on the space
$\mathcal{L}\left(  \mathcal{M}_{+}\left(  \mathbb{R}_{c}^{3}\right)
,\mathcal{M}_{+}\left(  \mathbb{R}_{c}^{3}\right)  \right) $ by $\left\Vert \cdot\right\Vert $. Moreover we
have%
\begin{equation}
\left\Vert S_{G_{1}}\left(  t;s\right)  -S_{G_{2}}\left(  t;s\right)
\right\Vert \leq CT\left\Vert G_{1}-G_{2}\right\Vert _{T},\ \text{ }0\leq
s\leq t\leq T. \label{T3E9a}%
\end{equation}
Here we used the norm $\left\Vert \cdot\right\Vert _{T}$ given by:%
\[
\left\Vert G\right\Vert _{T}=\sup_{t\in\left[  0,T\right]  }\left\Vert
G\left(  \cdot,t\right)  \right\Vert _{M}.
\]

The estimate (\ref{T3E9}) follows by integrating (\ref{T3E3a}) and using 
$\mathbb{A}\left[  G\right]  \left(  t,w\right)  \geq0.$ On the other hand,
(\ref{T3E9a}) can be proved using the representation formula for $S_{G}\left(
t;s\right)  $:%
\begin{equation}
S_{G}\left(  t;s\right)  h_{0}\left(  w\right)  =\exp\left(  -\int_{s}%
^{t}\mathbb{A}\left[  G\right]  \left(  \xi,U\left(  t,\xi\right)  w\right)
d\xi\right)  \exp\left(  \int_{s}^{t}\mathop{\rm tr}\left(  Q\left(
\xi\right)  \right)  d\xi\right)  h_{0}\left(  U\left(  t;s\right)  w\right) \label{mocsoln}
\end{equation}
where:%
\[
\frac{\partial\left[  U\left(  s;t\right)  w\right]  }{\partial t}%
=-Q(t)U\left(  s;t\right)  w\ \ ,\ \ U\left(  s;s\right)  w=w\in\mathbb{R}^{3}
.
\]
Then, (\ref{T3E9a}) follows using that $\left\vert e^{-x_{1}}-e^{-x_{2}%
}\right\vert \leq\left\vert x_{1}-x_{2}\right\vert $, and $\sup_{w\in
\mathbb{R}^{3}}\left\vert \mathbb{A}\left[  G_{1}\right]  \left(  w\right)
-\mathbb{A}\left[  G_{2}\right]  \left(  w\right)  \right\vert \leq
C\left\Vert G_{1}-G_{2}\right\Vert _{M}$ yields the estimate
\begin{align*}
&  \left\Vert \left[  S_{G_{1}}\left(  t;s\right)  -S_{G_{2}}\left(
t;s\right)  \right]  h_{0}\right\Vert _{M}\\
&  \leq CT\left\Vert G_{1}-G_{2}\right\Vert _{T}\int_{\mathbb{R}^{3}}%
h_{0}\left(  U\left(  t,s\right)  w\right)  dw\\
&  \leq CT\left\Vert G_{1}-G_{2}\right\Vert _{T}\int_{\mathbb{R}^{3}}%
h_{0}\left(  w\right)  \mathop{\rm Jac}\left(  U\left(  s,t\right)  \right)
dw\\
&  \leq\tilde{C}T\left\Vert G_{1}-G_{2}\right\Vert _{T}\left\Vert
h_{0}\right\Vert _{M} .
\end{align*}

Combining (\ref{T3E7}), (\ref{T3E8}), (\ref{T3E9}), (\ref{T3E9a}) we obtain
that
\begin{equation}
\left\Vert \mathcal{T}_{T}\left[  G\right]  \right\Vert _{T}\leq\left\Vert
G_{0}\right\Vert _{M}+4C\left\Vert G_{0}\right\Vert _{M}^{2}T \label{T4E1}%
\end{equation}
for any $G\in\mathcal{A}$. We have also
\[
\left\Vert \mathcal{T}_{T}\left[  G_{1}\right]  -\mathcal{T}_{T}\left[
G_{2}\right]  \right\Vert _{T}\leq C\left\Vert G_{0}\right\Vert _{M}%
T\left\Vert G_{1}-G_{2}\right\Vert _{T} .
\]

Therefore, the operator $\mathcal{T}_{T}\left[  G\right]  $ is contractive in the
space $\mathcal{A}$ with a metric given by the norm $\left\Vert \cdot
\right\Vert _{T}$ if $T$ is sufficiently small. This implies the existence of
a mild solution in the time interval $\left[  0,T\right]  $ for $T$
sufficiently small. Notice that the fact that $G$ is nonnegative follows
immediately due to the choice of the space of functions $\mathcal{A}$.

Applying the differential operator $\partial_{t}-\partial_{w}\cdot\left[
Q(t)w\right]  $ to (\ref{T3E5}) we obtain, using (\ref{T3E3a}), (\ref{T3E3})
that the following identity holds in the sense of measures (i.e. the whole
expression is understood using a test function $\varphi
=\varphi\left(  w\right)  $):
\begin{equation}
\partial_{t}G\left(  t,w\right)  -\partial_{w}\cdot\left[  Q(t)wG\left(
t,w\right)  \right]  =\mathbb{C}^{(+)}G\left(  t,w\right)  -\mathbb{A}\left[
G\right]  \left(  t,w\right)  G\left(  t,w\right)  \label{T4E6}%
\end{equation}
whence $G$ satisfies (\ref{eq:boltzgen})-(\ref{eq:invalBol}) in the sense of
measures. Integrating (\ref{T4E6}) we obtain
\begin{equation}
\left\Vert G\left(  \cdot,t\right)  \right\Vert _{M}=\left\Vert G_{0}%
\right\Vert _{M}\text{ \ for }0\leq t\leq T . \label{T4E6a}%
\end{equation}

Therefore, using a similar argument we can extend the solution to an interval
$\left[  T,T+\delta\right]  $ and iterating we then obtain a global solution
defined for $0\leq t<\infty.$ Notice that the constants $C$ above depend on
the time $T$ where we start the iteration argument due to the fact that
$\left\Vert Q\left(  t\right)  \right\Vert $ can increase as $t\rightarrow
\infty,$ but this norm is bounded for any finite interval $0\leq t\leq T$ and
therefore we can prove global existence.
\end{proofof}

\smallskip
\begin{proofof}
[Proof of Theorem \ref{WeakSolTh}]Multiplying (\ref{T4E6}) by a test function
$\varphi\left(  t,w\right)  $ integrating by parts and using a standard
symmetrization argument on the right-hand side of (\ref{T4E6}) (cf.
\cite{CIP}, \cite{TM}) and integrating in $t\in\left[  0,T\right]  $ we obtain
(\ref{T3E6}).

We now prove that under the assumption (\ref{T4E2}) the solution obtained in
Theorem \ref{WellPos} satisfies (\ref{T4E3}). Using the symplectic formula
$dw^{\prime}dw_{\ast}^{\prime}=dwdw_{\ast}$ (cf. (\ref{CM1}), (\ref{CM2})) and
(\ref{T3E5a}) we obtain
\begin{equation}
\left\Vert \mathbb{C}^{(+)}G\right\Vert _{1,s}\leq C\left\Vert G\right\Vert
_{1,s}\left\Vert G\right\Vert _{M}\ \ ,\ \ G\in\mathcal{M}_{+}\left(
\mathbb{R}_{c}^{3}\right)  . \label{T4E4}%
\end{equation}
where $s>0.$ On the other hand we claim that
\begin{equation}
\left\Vert S_{G}\left(  t;s\right)  G\right\Vert _{1,s}\leq C_{s}\left(
T\right)  \left\Vert G\right\Vert _{1,s}\ \ ,\ \ 0\leq s\leq t\leq
T<\infty\ ,\ \ G\in\mathcal{M}_{+}\left(  \mathbb{R}_{c}^{3}\right)  .
\label{T4E5}%
\end{equation}

This estimate follows by multiplying (\ref{T3E3a}) by the test functions $1$ and
$\left\vert w\right\vert ^{s},$ using that $\mathbb{A}\left[  G\right]
\left(  t,w\right)  \geq0$, integrating over $\mathbb{R}^{3}$ and using a
Gronwall-type argument. Then, estimating $\left\Vert G\left(  t,\cdot\right)
\right\Vert _{1,s}$ in (\ref{T3E5}) and using (\ref{T4E4}), (\ref{T4E5}) as
well as the mass conservation property (\ref{T4E6a}) we obtain
\[
\left\Vert G\left(  t,\cdot\right)  \right\Vert _{1,s}\leq C_{s}\left(
T\right)  \left\Vert G_{0}\right\Vert _{1,s}+C_{s}\left(  T\right)  \left\Vert
G_{0}\right\Vert _{M}\int_{0}^{t}\left\Vert G\left(  s,\cdot\right)
\right\Vert _{1,s}ds\ \ ,\ \ 0\leq t\leq T
\]
whence (\ref{T4E3}) follows using Gronwall's Lemma.

The fact that the identity (\ref{T3E6}) in Definition \ref{WeakSol} is
satisfied for test functions $\varphi$ bounded by a quadratic function as
stated in the statement of Theorem \ref{WeakSolTh} then follows by approximating
the test function $\varphi$ by a sequence of test functions $\varphi_{n}\in
C\left(  \mathbb{R}_{c}^{3}\right)  $ and using the fact that (\ref{T4E3})
implies that the contribution of the integrals due to the sets with
$\left\vert w\right\vert \geq R$ tends to zero as $R\rightarrow\infty. $
\end{proofof}

\begin{remark}
\label{AppSol}Suppose that $G_{0}$ satisfies $\left\Vert G_{0}\right\Vert
_{1,s}<\infty,$ and let $G$ be the corresponding solution of
(\ref{eq:boltzgen})-(\ref{eq:invalBol}) obtained in Theorem \ref{WeakSolTh}.
We can then obtain a sequence $\left\{  G_{m}\right\}  _{m\in\mathbb{N}}$ of
solutions of (\ref{eq:boltzgen}), (\ref{eq:collboltzgen}) with initial data
$G_{m,0}$ satisfying $\left\Vert G_{m,0}\right\Vert _{1,\bar{s}}<\infty$ for
some $\bar{s}>s$ and such that $\sup_{t\in\left[  0,T\right]  }\left\Vert
G_{m}\left(  t,\cdot\right)  -G\left(  t,\cdot\right)  \right\Vert
_{1,s}\rightarrow0$ as $m\rightarrow\infty.$ Indeed, we define $G_{m,0}%
=G_{0}\chi_{\left\{  \left\vert w\right\vert \leq m\right\}  }.$ Then
$\left\Vert G_{m,0}\right\Vert _{1,\bar{s}}<\infty$ and by dominated
convergence $\left\Vert G_{m,0}-G_{0}\right\Vert _{1,s}\rightarrow0$ as
$m\rightarrow\infty.$ Using (\ref{T3E5}) with initial data $G_{m,0}$ and
$G_{0},$ taking the difference of the resulting equations and arguing as in
the proof of Theorem \ref{WeakSolTh} we obtain
\[
\left\Vert G_{m}\left(  t,\cdot\right)  -G\left(  t,\cdot\right)  \right\Vert
_{1,s}\leq\left\Vert G_{m,0}-G_{0}\right\Vert _{1,s}+C\int_{0}^{t}\left\Vert
G_{m}\left(  \bar{t},\cdot\right)  -G\left(  \bar{t},\cdot\right)  \right\Vert
_{1,s}d\bar{t}%
\]
whence the stated convergence follows using Gronwall.
\end{remark}

\begin{remark}
Well posedness Theorems analogous to Theorems \ref{WellPos} and
\ref{WeakSolTh} for more general collision kernels $B$ (in particular for
kernels with homogeneity $\gamma$ different from zero) can be proved adapting
the theory of homogeneous Boltzmann equations as described in \cite{CIP}. We
restricted to kernels satisfying (\ref{T3E2a}) since the theory is simpler and
are the only ones needed in the following.
\end{remark}

\bigskip

\subsection{Moment equations for Maxwell molecules}

A crucial fact that we use repeatedly in this paper is the fact that for
Maxwell molecules the tensor of second moments $M_{j,k}%
=\int_{\mathbb{R}^{3}}w_{j}w_{k}G\left(  t,dw\right)  $ satisfies a linear
system of equations if $G$ is a mild solution of (\ref{eq:boltzgen}%
)-(\ref{eq:invalBol}). 
In order to compute the evolution equations for
$M_{j,k}$ we will use (\ref{T3E6}) with the test functions $\varphi=w_{j}%
w_{k}.$ The resulting right-hand side can then be computed using suitable
tensorial properties of the Boltzmann equation acting over quadratic functions
which shall be collected in the following.

We will assume in the rest of this subsection that $B=B\left(  n\cdot
\omega,\left\vert w-w_{\ast}\right\vert \right)  =B\left(  n\cdot
\omega\right)  $ in \eqref{eq:boltzgen} is homogeneous of order zero in
$\left\vert w-w_{\ast}\right\vert .$ We will denote by $W=W\left(  u,v\right)
$ a bilinear form:%
\begin{equation}
W:\mathbb{R}^{3}\times\mathbb{R}^{3}\rightarrow\mathbb{R}. \label{T4E7}%
\end{equation}
In order to simplify the notation we will write the quadratic form associated
to this bilinear form by $W\left(  v\right)  $ instead of $W\left(
v,v\right)  .$ We first prove the following lemma which allows us to transform
dependence on two vectors to dependence on just one vector.

\begin{lemma}
\label{BilRep}Suppose that $B=B\left(  n\cdot\omega,\left\vert w-w_{\ast
}\right\vert \right)  =B\left(  n\cdot\omega\right)  $ in \eqref{eq:boltzgen}
is homogeneous of order zero in $\left\vert w-w_{\ast}\right\vert $ and $W$ is
any bilinear form as in (\ref{T4E7}). Then
\begin{align}
\mathcal{Q}_{W}\left(  w,w_{\ast}\right)   &  =\frac{1}{2}\int_{S^{2}}d\omega
B\left(  n\cdot\omega\right)  \left[  W\left(  w^{\prime}\right)  +W\left(
w_{\ast}^{\prime}\right)  -W\left(  w\right)  -W\left(  w_{\ast}\right)
\right]  =\mathcal{\tilde{Q}}_{W}\left(  w-w_{\ast}\right) \label{S5E6}\\
&  :=-\int_{S^{2}}d\omega B\left(  \frac{\omega\cdot\left(  w-w_{\ast}\right)
}{\left\vert w-w_{\ast}\right\vert }\right)  \left[  W\left(  P_{\omega}%
^{\bot}\left(  w-w_{\ast}\right)  ,P_{\omega}\left(  w-w_{\ast}\right)
\right)  \right]  \ \label{S5E7}%
\end{align}
where for each $\omega\in S^{2}$ we denote as $P_{\omega}$ and $P_{\omega
}^{\bot}$ the orthogonal projections in the subspaces
$\mathop{\rm span}\left\{  \omega\right\}  $ and $\mathop{\rm span}\left\{
\omega\right\}  ^{\bot} $\ respectively, i.e.,%
\begin{equation}
P_{\omega}v=\left(  v\cdot\omega\right)  \omega\ ,\ \ P_{\omega}^{\bot
}v=v-\left(  v\cdot\omega\right)  \omega\ \ \text{for }v\in\mathbb{R}^{3}.
\label{S5E8}%
\end{equation}

\end{lemma}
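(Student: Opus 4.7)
The plan is to directly substitute the collision formulas (\ref{CM1})--(\ref{CM2}) into the bracket $W(w')+W(w_\ast')-W(w)-W(w_\ast)$ and use bilinearity of $W$ to collapse everything to a function of the relative velocity $V:=w_\ast-w$. First I would set $V=w_\ast-w$ so that
\begin{equation}
w'-w=(V\cdot\omega)\omega=P_\omega V,\qquad w_\ast'-w_\ast=-(V\cdot\omega)\omega=-P_\omega V,
\end{equation}
with $P_\omega$ and $P_\omega^\perp$ as defined in (\ref{S5E8}). Then I would expand
\begin{align}
W(w')&=W(w,w)+W(w,P_\omega V)+W(P_\omega V,w)+W(P_\omega V,P_\omega V),\\
W(w_\ast')&=W(w_\ast,w_\ast)-W(w_\ast,P_\omega V)-W(P_\omega V,w_\ast)+W(P_\omega V,P_\omega V).
\end{align}
Subtracting $W(w)+W(w_\ast)$ and using $w-w_\ast=-V$, the linear-in-$w,w_\ast$ contributions collapse to $-W(V,P_\omega V)-W(P_\omega V,V)$, so that
\begin{equation}
W(w')+W(w_\ast')-W(w)-W(w_\ast)=2W(P_\omega V,P_\omega V)-W(V,P_\omega V)-W(P_\omega V,V).
\end{equation}
Notice that the dependence on $w+w_\ast$ has already disappeared, which is the key reason the result is a function of $w-w_\ast$ only.

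Next I would decompose $V=P_\omega V+P_\omega^\perp V$ in the bilinear arguments. Using $W(P_\omega V,P_\omega V)$ twice, the ``diagonal'' terms cancel and one is left with
\begin{equation}
W(w')+W(w_\ast')-W(w)-W(w_\ast)=-\bigl[W(P_\omega^\perp V,P_\omega V)+W(P_\omega V,P_\omega^\perp V)\bigr],
\end{equation}
which is manifestly a function only of $V=w_\ast-w$, equivalently of $w-w_\ast$ by bilinearity (the two $P$-projections on opposite sides absorb the sign). Inserting this into the definition of $\mathcal{Q}_W$, the prefactor $1/2$ combines with the symmetrized pair of bilinear terms to produce exactly the integrand in (\ref{S5E7}); since $n\cdot\omega=\omega\cdot(w-w_\ast)/|w-w_\ast|$, the kernel $B(n\cdot\omega)$ is also a function of $w-w_\ast$ and $\omega$ alone (the homogeneity-zero assumption removes the magnitude dependence).

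The only subtlety, and the one I would be careful about, is the role of symmetry of $W$. For the quadratic form $W(v)=W(v,v)$ only the symmetric part of the bilinear form contributes, so one may as well assume $W(u,v)=W(v,u)$ from the outset; in that case $W(P_\omega^\perp V,P_\omega V)=W(P_\omega V,P_\omega^\perp V)$ and the bracket in (\ref{S5E7}) matches ours exactly. The remaining work is bookkeeping: verify the change of sign when replacing $V=w_\ast-w$ by $w-w_\ast$ inside the bilinear form (bilinearity absorbs both signs simultaneously), and verify that $B(n\cdot\omega)$ coincides with $B(\omega\cdot(w-w_\ast)/|w-w_\ast|)$. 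No compactness, continuity, or integrability issue enters; the entire statement is a pointwise algebraic identity.
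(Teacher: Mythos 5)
Your proof is correct and follows essentially the same route as the paper: substitute the collision rule to write $w'-w=P_\omega(w_\ast-w)$ and $w_\ast'-w_\ast=-P_\omega(w_\ast-w)$, expand by bilinearity so the dependence on $w+w_\ast$ cancels, and decompose the relative velocity into $P_\omega$ and $P_\omega^\perp$ parts to land on $-W\left(P_\omega^\perp(w-w_\ast),P_\omega(w-w_\ast)\right)$. Your explicit remark that only the symmetric part of $W$ matters (the paper tacitly assumes symmetry in its algebra) is a harmless refinement, not a different argument.
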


\begin{proof}
Using the collision rule (\ref{CM1}), (\ref{CM2}) we can write $w^{\prime
}=w+\eta,\ w_{\ast}^{\prime}=w_{\ast}-\eta$ with $\eta=\left(  \left(
w_{\ast}-w\right)  \cdot\omega\right)  \omega=P_{\omega}\left(  w_{\ast
}-w\right)  .$ Then
\begin{align*}
&  W\left(  w^{\prime}\right)  +W\left(  w_{\ast}^{\prime}\right)  -W\left(
w\right)  -W\left(  w_{\ast}\right) \\
&  =W\left(  w+\eta,w+\eta\right)  +W\left(  w_{\ast}-\eta,w_{\ast}%
-\eta\right)  -W\left(  w,w\right)  -W\left(  w_{\ast},w_{\ast}\right) \\
&  =2\left[  W\left(  w,\eta\right)  -W\left(  w_{\ast},\eta\right)  +W\left(
\eta,\eta\right)  \right]  =2\left[  -W\left(  w_{\ast}-w,\eta\right)
+W\left(  \left(  \left(  w_{\ast}-w\right)  \cdot\omega\right)  \omega
,\eta\right)  \right] \\
&  =-2W\left(  P_{\omega}^{\bot}\left(  w_{\ast}-w\right)  ,P_{\omega}\left(
w_{\ast}-w\right)  \right)
\end{align*}
whence the Lemma follows.
\end{proof}

To quantify the moment equations for Maxwell molecules, we
introduce the following object:

\begin{equation}
Z\left(  v\right)  =\int_{S^{2}}d\omega B\left(  \frac{\omega\cdot
v}{\left\vert v\right\vert }\right)  \left[  P_{\omega}^{\bot}v\otimes
P_{\omega}v\right]  ,\ \ v\in\mathbb{R}^{3} \label{T4E8}%
\end{equation}
where we will understand $a\otimes b$ as a bilinear functional acting on
$\mathbb{R}^{3}\times\mathbb{R}^{3}$ in the following manner. Given two
vectors $w_{1},w_{2}\in\mathbb{R}^{3}$ we define
\begin{equation}
\left(  a\otimes b\right)  \left(  w_{1},w_{2}\right)  =\left(  a\cdot
w_{1}\right)  \left(  b\cdot w_{2}\right)  . \label{T4E8a}%
\end{equation}
Then $Z\left(  v\right)  $ as defined in (\ref{T4E8}) is a bilinear functional in $\mathbb{R}^{3}\times\mathbb{R}^{3}.$  
We have the following result.

\begin{lemma}
\label{Tensor}Suppose that $U\in SO\left(  3\right)  .$ Then, for any
$v\in\mathbb{R}^{3}$ the following identity holds:
\begin{equation}
Z\left(  Uv\right)  =\int_{S^{2}}d\omega B\left(  \frac{\omega\cdot
v}{\left\vert v\right\vert }\right)  \left[  \left(  UP_{\omega}^{\bot
}v\right)  \otimes\left(  UP_{\omega}v\right)  \right] 
 = U Z(v) U^T
 \label{S5E4}%
\end{equation}
where $Z$ is as in (\ref{T4E8}).
\end{lemma}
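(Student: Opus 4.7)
The plan is to prove both equalities in (\ref{S5E4}) by a single change of variables on the sphere, namely $\omega = U\tilde\omega$, and then to identify the resulting integral as $U Z(v) U^T$ via the action of rotations on rank-$1$ bilinear forms.

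First I would verify that the substitution $\omega = U\tilde\omega$ is legitimate: since $U \in SO(3)$, it is a measure-preserving diffeomorphism of $S^2$, so $d\omega = d\tilde\omega$. Under this substitution, $\omega \cdot Uv = U\tilde\omega \cdot Uv = \tilde\omega \cdot v$ (orthogonality of $U$), and $|Uv| = |v|$. Hence the kernel factor $B(\omega\cdot Uv/|Uv|)$ becomes $B(\tilde\omega\cdot v/|v|)$. For the projections, using $P_\omega u = (u\cdot\omega)\omega$ one checks that
\begin{equation*}
P_{U\tilde\omega}(Uv) = (Uv\cdot U\tilde\omega)\,U\tilde\omega = U\bigl[(v\cdot\tilde\omega)\tilde\omega\bigr] = U P_{\tilde\omega} v,
\end{equation*}
and subtracting from $Uv$ gives $P_{U\tilde\omega}^\perp(Uv) = U P_{\tilde\omega}^\perp v$. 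This immediately yields the first equality in (\ref{S5E4}).

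Next I would verify the tensorial identity $(Ua)\otimes(Ub) = U(a\otimes b) U^T$, interpreted according to (\ref{T4E8a}). Indeed, for any $w_1, w_2 \in \mathbb{R}^3$,
\begin{equation*}
\bigl((Ua)\otimes(Ub)\bigr)(w_1,w_2) = (Ua\cdot w_1)(Ub\cdot w_2) = (a\cdot U^T w_1)(b\cdot U^T w_2) = (a\otimes b)(U^T w_1, U^T w_2),
\end{equation*}
and the right-hand side is precisely the action of $U(a\otimes b)U^T$ on $(w_1,w_2)$ when we view the bilinear form $a\otimes b$ as the matrix with entries $a_i b_j$ and conjugate by $U$. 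Applying this pointwise in $\tilde\omega$ to $a = P_{\tilde\omega}^\perp v$ and $b = P_{\tilde\omega} v$, and pulling the linear maps $U(\cdot)U^T$ out of the $S^2$-integral (they are independent of $\tilde\omega$), gives
\begin{equation*}
Z(Uv) = U \left[ \int_{S^2} d\tilde\omega\, B\!\left(\tfrac{\tilde\omega \cdot v}{|v|}\right) P_{\tilde\omega}^\perp v \otimes P_{\tilde\omega} v \right] U^T = U Z(v) U^T,
\end{equation*}
completing the proof. There is no serious obstacle: the argument is purely a matter of carefully tracking how $U$ intertwines with the sphere measure, the dot product in $B$, and the orthogonal projections $P_\omega, P_\omega^\perp$. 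The only mild subtlety worth emphasizing is the distinction between $a \otimes b$ viewed as a bilinear functional (as in (\ref{T4E8a})) and as a rank-one matrix, and the corresponding convention for how a rotation $U$ acts on it; once the convention is fixed, the identity $(Ua)\otimes(Ub) = U(a\otimes b)U^T$ is immediate and the rest follows by invariance of $d\omega$ under $SO(3)$.
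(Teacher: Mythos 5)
Your proof is correct and follows essentially the same route as the paper: the change of variables $\omega = U\tilde\omega$ on $S^2$, the invariance $U\tilde\omega\cdot Uv = \tilde\omega\cdot v$, and the intertwining relations $P_{U\tilde\omega}(Uv) = UP_{\tilde\omega}v$, $P_{U\tilde\omega}^{\bot}(Uv) = UP_{\tilde\omega}^{\bot}v$. Your explicit verification of $(Ua)\otimes(Ub) = U(a\otimes b)U^{T}$ as bilinear forms merely spells out the final step that the paper leaves implicit.
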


\begin{proof}
Using the definition of $Z$ we have
\[
Z\left(  Uv\right)  =\int_{S^{2}}d\omega B\left(  \frac{\omega\cdot
Uv}{\left\vert Uv\right\vert }\right)  \left(  P_{\omega}^{\bot}Uv\otimes
P_{\omega}Uv\right) .
\]
We now change variables as $\omega=U\hat{\omega},$ with $\hat{\omega}\in
S^{2}.$ Then:%
\[
Z\left(  Uv\right)  =\int_{S^{2}}d\hat{\omega}B\left(  \frac{U\hat{\omega
}\cdot Uv}{\left\vert Uv\right\vert }\right)  \left(  P_{U\hat{\omega}}^{\bot
}Uv\otimes P_{U\hat{\omega}}Uv\right) .
\]

An elementary geometrical argument, using the fact that orthogonal
transformations commute with the projection operators yields
\[
P_{U\hat{\omega}}Uv=UP_{\hat{\omega}}v\ ,\ P_{U\hat{\omega}}^{\bot}%
Uv=UP_{\hat{\omega}}^{\bot}v
\]
whence, using also $U\hat{\omega}\cdot Uv=\hat{\omega}\cdot v$,
\[
Z\left(  Uv\right)  =\int_{S^{2}}d\hat{\omega}B\left(  \frac{\hat{\omega}\cdot
v}{\left\vert v\right\vert }\right)  \left[  U\left(  P_{\hat{\omega}}^{\bot
}v\right)  \otimes U\left(  P_{\hat{\omega}}v\right)  \right]
\]
and the result follows.
\end{proof}

Lemma \ref{Tensor} implies that $Z$ defined by means of (\ref{T4E8}) is a
second order tensor under orthogonal transformations. We now compute a
suitable tensorial expression for $Z\left(  v\right) $ in a coordinate system
where this computation is particularly simple.

\begin{proposition}
\label{TensComp}The tensor $Z$ defined by means of (\ref{T4E8}) is given by
\begin{equation}
Z\left(  v\right)  =b\left[  v\otimes v-\frac{\left\vert v\right\vert ^{2}}%
{3}I\right]  \label{S5E9}%
\end{equation}
where:%
\begin{equation}
b=3\pi\int_{-1}^{1}B\left(  x\right)  x^{2}\left(  1-x^{2}\right)  dx>0.
\label{S6E1}%
\end{equation}
Moreover, suppose that we define
\begin{equation}
T_{j,k}=\frac{1}{2}\int_{S^{2}}d\omega B\left(  n\cdot\omega\right)  \left[
W_{j,k}\left(  w^{\prime}\right)  +W_{j,k}\left(  w_{\ast}^{\prime}\right)
-W_{j,k}\left(  w\right)  -W_{j,k}\left(  w_{\ast}\right)  \right]
\ \label{S5E1a}%
\end{equation}
where $W_{j,k}$ are the quadratic functions $W_{j,k}\left(  w,w\right)
=w_{j}w_{k}.$ Then
\begin{equation}
T_{j,k}=-b\left[  \left(  w-w_{\ast}\right)  _{j}\left(  w-w_{\ast}\right)
_{k}-\frac{\left\vert w-w_{\ast}\right\vert ^{2}}{3}\delta_{j,k}\right]
\ \ ,\ \ j,k=1,2,3 . \label{S6E2}%
\end{equation}

\end{proposition}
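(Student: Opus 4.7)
The plan is to exploit the $SO(3)$-covariance established in Lemma \ref{Tensor} together with a scaling argument to fix the functional form of $Z$ up to one scalar constant, then evaluate that constant by an explicit spherical integral.

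First, I would observe that $Z$ is homogeneous of degree $2$ in $v$: indeed, for $\lambda>0$ the arguments of $B$, $P_{\omega}$, and $P_{\omega}^{\bot}$ are all invariant or linear in $v$, so $Z(\lambda v)=\lambda^{2}Z(v)$. Combining this with the covariance $Z(Uv)=UZ(v)U^{T}$ from Lemma \ref{Tensor}, $Z(v)$ is an isotropic, symmetric, degree-$2$ tensor-valued function of the single vector $v$. The classical representation theorem for such functions then forces
\begin{equation*}
Z(v)=\alpha\, v\otimes v+\beta\,|v|^{2}I
\end{equation*}
for constants $\alpha,\beta\in\mathbb{R}$. (That $Z(v)$ is symmetric follows either from the explicit ansatz or from noting that in Lemma \ref{BilRep} only the symmetric part of $W$ contributes after summing $W(w')+W(w_{*}')-W(w)-W(w_{*})$.)

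Second, I would take the trace of $Z(v)$ directly from the definition \eqref{T4E8}:
\begin{equation*}
\mathrm{tr}\,Z(v)=\int_{S^{2}}d\omega\,B\bigl(\tfrac{\omega\cdot v}{|v|}\bigr)\,P_{\omega}^{\bot}v\cdot P_{\omega}v = 0,
\end{equation*}
since $P_{\omega}^{\bot}v$ and $P_{\omega}v$ lie in orthogonal subspaces. From the ansatz, $\mathrm{tr}\,Z(v)=(\alpha+3\beta)|v|^{2}$, so $\beta=-\alpha/3$, giving $Z(v)=\alpha\bigl[v\otimes v-\tfrac{|v|^{2}}{3}I\bigr]$.

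Third, to determine $\alpha$, I would pick the convenient test vector $v=e_{3}$ and compute a single matrix element, say $Z_{3,3}(e_{3})$. Parameterizing $\omega=(\sin\theta\cos\phi,\sin\theta\sin\phi,\cos\theta)$, one has $(P_{\omega}e_{3})_{3}=\cos^{2}\theta$ and $(P_{\omega}^{\bot}e_{3})_{3}=\sin^{2}\theta$, so
\begin{equation*}
Z_{3,3}(e_{3})=\int_{0}^{2\pi}d\phi\int_{0}^{\pi}B(\cos\theta)\,\sin^{3}\theta\,\cos^{2}\theta\,d\theta=2\pi\int_{-1}^{1}B(x)\,x^{2}(1-x^{2})\,dx,
\end{equation*}
after substituting $x=\cos\theta$. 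On the other hand the ansatz gives $Z_{3,3}(e_{3})=\tfrac{2\alpha}{3}$, so $\alpha=3\pi\int_{-1}^{1}B(x)x^{2}(1-x^{2})\,dx = b$. Positivity of $b$ follows because $B\ge 0$ and $x^{2}(1-x^{2})\ge 0$ on $[-1,1]$ (assuming $B$ does not vanish a.e.). This establishes \eqref{S5E9}--\eqref{S6E1}.

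Finally, for \eqref{S6E2}, I would apply Lemma \ref{BilRep} to the (symmetrization of the) bilinear form $W_{j,k}(u,v)=\tfrac12(u_{j}v_{k}+u_{k}v_{j})$, whose associated quadratic form is $w\mapsto w_{j}w_{k}$. This identifies $T_{j,k}$ with $-\tfrac12[Z_{j,k}(w-w_{*})+Z_{k,j}(w-w_{*})]=-Z_{j,k}(w-w_{*})$ by the symmetry of $Z$, and substituting the formula \eqref{S5E9} yields exactly \eqref{S6E2}. The main obstacle is really just the bookkeeping in Step 3 (choosing a convenient frame, parameterizing $S^{2}$, and reducing to a one-dimensional integral), as everything else is pinned down by isotropy, homogeneity, and the orthogonality of the two projections.
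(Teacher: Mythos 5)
Your proof is correct, and it reaches \eqref{S5E9}--\eqref{S6E2} by a slightly different route than the paper in the first half. The paper works in the frame $e_{1}=v/|v|$, writes out the full matrix $Y(\theta,\phi)$ representing $P_{\omega}^{\bot}v\otimes P_{\omega}v$, kills the off-diagonal entries by the $\phi$-integration, reads off the diagonal, and only then invokes Lemma \ref{Tensor} to transport the formula to an arbitrary frame; you instead use Lemma \ref{Tensor} up front, combine it with degree-$2$ homogeneity and the representation theorem for isotropic symmetric tensor functions to reduce $Z$ to $\alpha\,v\otimes v+\beta|v|^{2}I$, fix $\beta=-\alpha/3$ by the tracelessness coming from $P_{\omega}^{\bot}v\cdot P_{\omega}v=0$, and then compute the single entry $Z_{3,3}(e_{3})$, which gives the same one-dimensional integral and the same constant $b$ as in \eqref{S6E1}. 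This buys you less bookkeeping (no full $Y$-matrix, no off-diagonal check) at the price of one structural point that you rightly flag: the representation theorem in the form you use requires $Z(v)$ to be symmetric, since an $SO(3)$-equivariant matrix function of $v$ may also contain a skew term proportional to the cross-product matrix of $v$; your justification via Lemma \ref{BilRep} (the left-hand side of \eqref{S5E6} only sees the quadratic form, hence only the symmetric part of $W$, so the pairing of $Z(v)$ with any antisymmetric form vanishes) is sound, whereas "follows from the explicit ansatz" alone would be circular. Your last step, applying Lemma \ref{BilRep} to the symmetrized bilinear form with quadratic form $w_{j}w_{k}$ and using the symmetry of $Z$ to identify $T_{j,k}=-Z_{j,k}(w-w_{\ast})$, coincides with the paper's derivation of \eqref{S6E2}.
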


\begin{proof}
Suppose that $v\neq0,$ since otherwise $Z\left(  v\right)  =0.$ We then
compute $Z\left(  v\right)  $ in a very particular system of spherical
coordinates. More precisely, we take the direction of the North Pole in the
direction of $v$ and we denote as $\theta$ the angle of any vector with
respect to this direction and $\phi$ the azymuthal angle in a plane orthogonal
to $v$ with respect to any arbitrary direction in this plane. We construct an
orthonormal basis of $\mathbb{R}^{3}$ by means of $e_{1}=\frac{v}{\left\vert
v\right\vert }$ and choosing as $e_{2},\ e_{3}$ two orthonormal vectors
contained in the plane orthogonal to $e_{1}.$ Using this coordinate system we
can parametrize the sphere $S^{2}$ as
\begin{equation}
\omega=\left(
\begin{array}
[c]{c}%
\cos\left(  \theta\right) \\
\sin\left(  \theta\right)  \cos\left(  \phi\right) \\
\sin\left(  \theta\right)  \sin\left(  \phi\right)
\end{array}
\right)  ,\ \ \theta\in\left[  0,\pi\right]  ,\ \phi\in\left[  0,2\pi\right)
. \label{S5E5}%
\end{equation}
Then
\[
P_{\omega}v=\cos\left(  \theta\right)  \omega=\left(
\begin{array}
[c]{c}%
\cos^{2}\left(  \theta\right) \\
\sin\left(  \theta\right)  \cos\left(  \theta\right)  \cos\left(  \phi\right)
\\
\sin\left(  \theta\right)  \cos\left(  \theta\right)  \sin\left(  \phi\right)
\end{array}
\right)
\]
and using (\ref{S5E5}) and (\ref{S5E8}) we get
\[
P_{\omega}^{\bot}v=\left(
\begin{array}
[c]{c}%
\sin^{2}\left(  \theta\right) \\
-\sin\left(  \theta\right)  \cos\left(  \theta\right)  \cos\left(  \phi\right)
\\
-\sin\left(  \theta\right)  \cos\left(  \theta\right)  \sin\left(
\phi\right)
\end{array}
\right) .
\]
Then $P_{\omega}^{\bot}v\otimes P_{\omega}v$ can be represented by the matrix
\[
Y\left(  \theta,\phi\right)  =\left(
\begin{array}
[c]{ccc}%
\cos^{2}\theta\sin^{2}\theta & \cos\theta\cos\phi\sin^{3}\theta & \cos
\theta\sin^{3}\theta\sin\phi\\
-\cos^{3}\theta\cos\phi\sin\theta & -\cos^{2}\theta\cos^{2}\phi\sin^{2}\theta
& -\cos^{2}\theta\cos\phi\sin^{2}\theta\sin\phi\\
-\cos^{3}\theta\sin\theta\sin\phi & -\cos^{2}\theta\cos\phi\sin^{2}\theta
\sin\phi & -\cos^{2}\theta\sin^{2}\theta\sin^{2}\phi
\end{array}
\right) .
\]
Therefore the computation of $Z\left(  v\right)  $ in (\ref{T4E8}) reduces to
the computation of
\[
\int_{0}^{\pi}\sin\left(  \theta\right)  B\left(  \cos\left(  \theta\right)
\right)  \left[  \int_{0}^{2\pi}Y\left(  \theta,\phi\right)  d\phi\right]
d\theta.
\]

In the integration in the $\phi$ variable all the elements outside the
diagonal give zero. Using that in the coordinate system under consideration
$v=\left(  1,0,0\right)  ^{T}$ we then obtain
\[
Z\left(  v\right)  =b\left[  v\otimes v-\frac{\left\vert v\right\vert ^{2}}%
{3}I\right]
\]
where:%
\[
b=3\pi\int_{0}^{\pi}B\left(  \cos\left(  \theta\right)  \right)  \cos
^{2}\theta\sin^{3}\theta d\theta=3\pi\int_{-1}^{1}B\left(  x\right)
x^{2}\left(  1-x^{2}\right)  dx .
\]

Since both sides of (\ref{S5E9}) transform according to the law for the
transformation of second order tensors under orthogonal transformations (cf.
Lemma \ref{Tensor}), it then follows that this formula is valid in any
coordinate system.

Using the definition of $Z\left(  v\right)  $ and Lemma \ref{BilRep} we can
write the functions $\mathcal{Q}_{W_{j,k}}\left(  w,w_{\ast}\right)  $ with
$W_{j,k}\left(  w,w\right)  =w_{j}w_{k}$ as
\begin{equation}
T_{j,k}=\mathcal{Q}_{W_{j,k}}\left(  w,w_{\ast}\right)  =\mathcal{\tilde{Q}%
}_{W_{j,k}}\left(  w-w_{\ast}\right)  =-Z\left(  w-w_{\ast}\right)  \left(
e_{j},e_{k}\right)  \ \ ,\ \ j,k=1,2,3 \label{T4E8b}%
\end{equation}
where $\left\{  e_{1},e_{2},e_{3}\right\}  $ is the canonical orthonormal
basis of $\mathbb{R}^{3}.$ Finally (\ref{S6E2}) follows using (\ref{T4E8b}) as
well as the fact that $\left(  v\otimes v\right)  \left(  e_{j},e_{k}\right)
=v_{j}v_{k}.$
\end{proof}

We now compute the evolution equation for the moments $\left(  M_{j,k}\right)
_{j,k=1,2,3}.$

\begin{proposition}
\label{EqMoments}Suppose that $G_{0}\in\mathcal{M}_{+}\left(  \mathbb{R}%
_{c}^{3}\right)  $ satisfies and $\int_{\mathbb{R}^{3}}\left(  1+\left\vert
w\right\vert ^{s}\right)  G_{0}\left(  dw\right)  <\infty$ for some $s>2$ and
that $\mathbb{C}G$ is as in (\ref{eq:collboltzgen}) with $B$ satisfying the
assumptions in Theorem \ref{th:ssprof}. Let us assume also that
\begin{equation}
\int_{\mathbb{R}^{3}}G_{0}\left(  dw\right)  =1\ \ \text{and\ \ }%
\int_{\mathbb{R}^{3}}w\,G_{0}\left(  dw\right)  =0\ . \label{S6E3}%
\end{equation}
Suppose that $G\in C\left(  \left[  0,\infty\right]  :\mathcal{M}_{+}\left(
\mathbb{R}_{c}^{3}\right)  \right)  $ is the unique mild solution of
(\ref{eq:boltzgen})-(\ref{eq:invalBol}) in Theorem \ref{WellPos}. Then the
tensor $M=\left(  M_{j,k}\right)  _{j,k=1,2,3}$ defined by means of
$M_{j,k}=\int_{\mathbb{R}^{3}}w_{j}w_{k}G\left(  t,dw\right)  $ is defined for
$t\geq0$ and it satisfies the system of ODEs
\begin{equation}
\frac{dM_{j,k}}{dt}+Q_{j,\ell}\left(  t\right)  M_{k,\ell}+Q_{k,\ell}\left(
t\right)  M_{j,\ell}=-2b\left(  M_{j,k}-m\delta_{j,k}\right)
\ ,\ \ j,k=1,2,3,\ \ \ M_{j,k}=M_{k,j} \label{S4E8}%
\end{equation}
where $b$ is as in (\ref{S6E1}).
\end{proposition}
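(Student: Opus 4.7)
The plan is to test the weak formulation from Theorem~\ref{WeakSolTh} against the quadratic function $\varphi(w)=w_jw_k$. Since the hypothesis $s>2$ is in force and $|\varphi(w)|+|w||\nabla_w\varphi(w)|\le C(1+|w|^2)$, the bound (\ref{T4E3}) guarantees that every integral appearing in (\ref{T3E6}) is finite, and that $t\mapsto M_{j,k}(t)$ is well-defined and continuous on $[0,\infty)$. In fact the integrand on the right of (\ref{T3E6}) is continuous in $t$ under these moment bounds, which promotes $M_{j,k}$ to a $C^1$ function of $t$ whose derivative can be read off by differentiating (\ref{T3E6}) in $T$. This reduces the proof to an identification of two contributions: the drift part coming from $Q(t)w\cdot\partial_w\varphi$ and the collision part.

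For the drift term, a direct computation gives $\partial_w(w_jw_k)=w_ke_j+w_je_k$, hence
\[
Q(t)w\cdot\partial_w(w_jw_k)=Q_{j,\ell}(t)w_\ell w_k+Q_{k,\ell}(t)w_\ell w_j,
\]
which integrated against $G(t,dw)$ gives $Q_{j,\ell}(t)M_{\ell,k}(t)+Q_{k,\ell}(t)M_{\ell,j}(t)$. Using the symmetry $M_{k,\ell}=M_{\ell,k}$, this matches the linear terms on the left of (\ref{S4E8}). For the collision part, I would quote Proposition~\ref{TensComp}: the inner $\omega$-integral equals $T_{j,k}=-b\bigl[(w-w_\ast)_j(w-w_\ast)_k-\tfrac{|w-w_\ast|^2}{3}\delta_{j,k}\bigr]$, so the remaining task is to integrate this expression against the product measure $G(t,dw)G(t,dw_\ast)$. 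Expanding $(w-w_\ast)_j(w-w_\ast)_k$ and using the assumed normalization $\int G_0=1$, the zero mean condition $\int w\,G_0\,dw=0$, together with the fact that both are conserved by the flow (the first follows from (\ref{T4E6a}); the second is obtained by the same weak-formulation argument with $\varphi=w_\ell$ provided one verifies it is preserved under the evolution), one obtains
\[
\int\!\!\int (w-w_\ast)_j(w-w_\ast)_kG(t,dw)G(t,dw_\ast)=2M_{j,k}(t),\qquad
\int\!\!\int |w-w_\ast|^2\,G(t,dw)G(t,dw_\ast)=2\,\mathrm{tr}(M(t)).
\]
Setting $m=\tfrac{1}{3}\mathrm{tr}(M)$ yields the right-hand side $-2b(M_{j,k}-m\delta_{j,k})$ of (\ref{S4E8}).

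The main technical point is thus the justification of the conservation of momentum $\int w_\ell G(t,dw)=0$ (which uses $\varphi=w_\ell$, bounded by $C(1+|w|)\le C(1+|w|^2)$, in Theorem~\ref{WeakSolTh}) and the propagation of the second moment bound needed to pass from the integral identity (\ref{T3E6}) to the pointwise ODE. Both are mild consequences of the bound (\ref{T4E3}) and the continuity of $G$ in time with values in the measure norm. Once these are in place, the remainder of the proof is an algebraic rearrangement relying on Proposition~\ref{TensComp}. Nothing hard is expected beyond a careful verification that all interchanges of integrals and derivatives are legitimate in the measure setting, which is where Remark~\ref{AppSol} (approximation by solutions with higher moments $\bar{s}>s$) can be invoked if additional smoothness on $G$ is convenient for symmetrizing the collision term as in (\ref{T3E6}).
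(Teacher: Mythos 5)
Your proposal follows essentially the same route as the paper: test the weak formulation (\ref{T3E6}) (extended to quadratically bounded test functions via Theorem \ref{WeakSolTh} since $s>2$) with $\varphi=1$, $\varphi=w_j$ and $\varphi=w_jw_k$, identify the angular collision integral through Proposition \ref{TensComp}, and expand $(w-w_\ast)_j(w-w_\ast)_k$ using conservation of mass and vanishing momentum to obtain $-2b\left(M_{j,k}-m\delta_{j,k}\right)$. The computation of the drift term and the justification of the moment bounds are handled as in the paper, so the argument is correct and no genuinely different ideas are involved.
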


\begin{remark}
In (\ref{S4E8}) we use the convention that the repeated indexes are summed. We
will use the same convention in the rest of the paper.
\end{remark}

\begin{proof}
Due to Theorem \ref{WeakSolTh} with $s>2$ the tensor $M$ is well defined and
$G$ is a weak solution of (\ref{eq:boltzgen})-(\ref{eq:invalBol}) in the sense
of Definition \ref{WeakSol}. Moreover, Theorem \ref{WeakSolTh} implies also
that we can take as test functions in (\ref{T3E6}) $\varphi=1$ and
$\varphi=w_{j}$
\begin{equation}
\int_{\mathbb{R}^{3}}G\left(  t,dw\right)  =1\ \ \text{and\ \ }\int
_{\mathbb{R}^{3}}w\,G\left(  t,dw\right)  =0 . \label{S6E4}%
\end{equation}

Moreover, taking in (\ref{T3E6}) the test functions $\varphi=W_{j,k}%
=w_{j}w_{k}$ we obtain
\begin{equation}
\frac{dM_{j,k}}{dt}+\int_{\mathbb{R}^{3}}\left[  Q(t)w\cdot\partial_{w}\left(
w_{j}w_{k}\right)  \right]  G\left(  t,dw\right)  =K_{j,k} \label{S5E0}%
\end{equation}
where we define $K_{j,k}$ as
\begin{equation}
K_{j,k}=\int_{\mathbb{R}^{3}}\int_{\mathbb{R}^{3}}T_{j,k}G\left(  t,dw\right)
G\left(  t,dw_{\ast}\right)  \label{S5E1}%
\end{equation}
with $T_{j,k}$ as in (\ref{S5E1a}) and where we have used that $B\left(
n\cdot\omega,\left\vert w-w_{\ast}\right\vert \right)  =B\left(  n\cdot
\omega\right)  $ due to the fact that $B$ is homogeneous of order zero. Using
(\ref{S6E2}) we obtain
\[
K_{j,k}=-b\int_{\mathbb{R}^{3}}\int_{\mathbb{R}^{3}}\left[  \left(  w-w_{\ast
}\right)  _{j}\left(  w-w_{\ast}\right)  _{k}-\frac{\left\vert w-w_{\ast
}\right\vert ^{2}}{3}\delta_{j,k}\right]  G\left(  t,dw\right)  G\left(
t,dw_{\ast}\right) .
\]

Expanding the products and using (\ref{S6E4}) as well as the symmetry
properties of the integrals, we obtain
\begin{equation}
K_{j,k}=-2b\left(  M_{j,k}-m\delta_{j,k}\right)  \ ,\ m=\frac{1}%
{3}\mathop{\rm tr}\left(  M\right)  \label{S5E2}%
\end{equation}
where $b$ is as in (\ref{S6E1}). Notice that the trace of $K=\left(
K_{j,k}\right)  _{j,k=1,2,3}$ is zero, something that might be seen directly
from (\ref{S5E1}) using that $\left\vert w^{\prime}\right\vert ^{2}+\left\vert
w_{\ast}^{\prime}\right\vert ^{2}=\left\vert w\right\vert ^{2}+\left\vert
w_{\ast}\right\vert ^{2}.$

Computing the integral on the left hand side of (\ref{S5E0}) and using
(\ref{S5E2}) we obtain (\ref{S4E8}).
\end{proof}

\subsection{Self-similar profiles for Maxwell molecules \label{SelfSim}}

We prove in this subsection a general theorem on the
existence of  
self-similar homoenergetic solutions for
several  choices of the matrix $A$ in (\ref{B7_0}).   Two special cases
are considered in detail in the sequel: 1) {\it simple shear} (cf. (\ref{T1E5}) and Section \ref{ss:SelfSimSimpSh}),
and 2) {\it planar shear} (cf. (\ref{T1E3}) and Section \ref{planarshear}).  
Formally, we begin from (\ref{D1_0}) and make self-similar ansatzes.  In the case of simple
shear we make the ansatz
\begin{equation}
g\left(  w,t\right)  =e^{-3\beta t}G\left(  \xi\right)  \ \ ,\ \ \xi=\frac
{w}{e^{\beta t}} \label{0S8E8}%
\end{equation} and reduce (\ref{D1_0})  to
\begin{equation}
-\beta\partial_{\xi}\left(  \xi G\right)  -K\partial_{\xi_{1}}\left(  \xi
_{2}G\right)  =\mathbb{C}\left[  G\right]  . \label{0W1}%
\end{equation}
In the case of planar shear we first make the change of variables
\[
g\left(  t,w\right)  =\frac{1}{t}\bar{g}\left(  \tau,w\right)  \ \ ,\ \ \tau
=\log\left(  t\right),
\]
which reduces (\ref{D1_0}) to 
\begin{equation}
\partial_{\tau}\bar{g}-Kw_{3}\partial_{w_{2}}\bar{g}-\partial_{w_{3}}\left(
w_{3}\bar{g}\right)  =\mathbb{C}\bar{g}\left(  w\right). \label{0T4E9}%
\end{equation}
Then we further assume that
\begin{equation}
\bar{g}\left(  \tau,w\right)  =e^{-3\beta\tau}G\left(  \xi\right)
\ \ ,\ \ \xi=\frac{w}{e^{\beta\tau}}. \label{0T5E2}%
\end{equation}
Inserting (\ref{0T5E2}) into (\ref{0T4E9}) we get
\begin{equation}
-\beta\partial_{\xi}\left(  \xi\cdot G\right)  -K\partial_{\xi_{2}}\left(
\xi_{3}G\right)  -\partial_{\xi_{3}}\left(  \xi_{3}G\right)  =\mathbb{C}%
G\left(  w\right) . \label{0T5E3}%
\end{equation}
Detailed descriptions of these reductions are found in Sections \ref{ss:SelfSimSimpSh} and
\ref{planarshear}.

\smallskip
A general equation that includes both of these cases is
\begin{equation}
-\alpha\partial_{w}\cdot\left(  wG\right)  -\partial_{w}\cdot\left(
Lw\,G\right)  =\mathbb{C}G\left(  w\right)  \ \ ,\ \ w\in\mathbb{R}^{3}
\label{S4E5}%
\end{equation}
where $L\in M_{3\times3}({\mathbb{R}})$, $\alpha\in\mathbb{R}$, $\mathbb{C}%
\left[  \cdot\right]  $ is the collision operator in (\ref{eq:collboltzgen})
with the function $\Lambda\left(  w,w_{\ast}\right)  $ defined in
(\ref{T3E2c}) satisfying (\ref{T3E2a}). We will assume also that the function
$B\left(  n\cdot\omega,\left\vert w-w_{\ast}\right\vert \right)  $ is
homogeneous of order zero on the variable $\left\vert w-w_{\ast}\right\vert .$
In that case $\Lambda\left(  w,w_{\ast}\right)  $ is just a real number, which
will be denoted as $\Lambda_{0}\in\mathbb{R}_{+}.$ The effect of the collision
kernel $\mathbb{C}$ is nontrivial if $\Lambda_{0}>0.$

The main result that we will prove in this subsection is the following:

\begin{theorem}
\label{th:ssprof}Suppose that $B$ in (\ref{eq:collboltzgen}) is homogeneous of
order zero and that $b$ in (\ref{S6E1}) is strictly positive. There exists a sufficiently small 
$k_{0}>0$ such that, for any $\zeta\geq0$ and any $L\in
M_{3\times3}({\mathbb{R}})$ satisfying $\Vert L\Vert\leq k_{0}b$, there exists
$\alpha\in\mathbb{R}$ and $G\in\mathcal{M}_{+}\left(  \mathbb{R}_{c}%
^{3}\right)  $ that solve (\ref{S4E5}) in the sense of measures and
satisfy
\begin{equation}
\int_{\mathbb{R}^{3}}G\left(  dw\right)  =1\ ,\ \ \int_{\mathbb{R}^{3}}%
w_{j}G\left(  dw\right)  =0,\ \int_{\mathbb{R}^{3}}\left\vert w\right\vert
^{2}G\left(  dw\right)  =\zeta\ . \label{S4E6}%
\end{equation}

\end{theorem}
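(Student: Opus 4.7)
The plan is to realize the profile $G$ as a stationary measure of the time-dependent Boltzmann equation (\ref{eq:boltzgen}) with the \emph{constant} drift $Q(t) \equiv \alpha I + L$, and to select the scalar $\alpha$ as a Lagrange-type multiplier enforcing the energy constraint $\int|w|^{2}G(dw) = \zeta$. Theorems \ref{WellPos}--\ref{WeakSolTh} already provide a global mild/weak-solution semigroup $\Phi_T^\alpha: \mathcal{M}_+(\mathbb{R}_c^3) \to \mathcal{M}_+(\mathbb{R}_c^3)$ whose fixed points are exactly the measure-valued solutions of (\ref{S4E5}).

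\textbf{Step 1: solve the moment equation for $\alpha$.} For $Q = \alpha I + L$, Proposition \ref{EqMoments} produces the closed linear ODE
\begin{equation*}
\dot M + (\alpha I + L)\, M + M(\alpha I + L)^{T} = -2b\bigl(M - \tfrac{1}{3}\mathop{\rm tr}(M)\,I\bigr)
\end{equation*}
for the second-moment tensor $M(t)$. Writing $M = \zeta\tilde M$ with $\mathop{\rm tr}(\tilde M) = 1$, the stationary equation reduces to $(2\alpha + 2b)\tilde M + L\tilde M + \tilde M L^{T} = \tfrac{2b}{3} I$, which is independent of $\zeta$. For $L = 0$ and $\alpha = 0$ the unique symmetric solution is $\tilde M = \tfrac{1}{3}I$, with $\mathop{\rm tr}(\tilde M) = 1$. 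For $\|L\| \le k_0 b$ sufficiently small an implicit-function/perturbation argument produces a unique pair $(\alpha^{*}(L), \tilde M^{*}(L))$ with $\tilde M^{*}$ symmetric positive-definite and $\mathop{\rm tr}(\tilde M^{*}) = 1$, automatically satisfying the compatibility $\alpha^{*} = -\mathop{\rm tr}(L\tilde M^{*})$. Set $M^{*} := \zeta \tilde M^{*}$.

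\textbf{Step 2: Schauder--Tikhonov fixed point for the profile.} Fix $\alpha = \alpha^{*}$. The mass, first-moment, and full second-moment tensor are preserved by $\Phi_T^{\alpha^{*}}$: mass by (\ref{T4E6a}); first moment by testing (\ref{T4E6}) against $w_j$ and using $\int w_l G_0(dw) = 0$; and the second moment stays at $M^{*}$ by Step 1. Consider the convex set
\begin{equation*}
\mathcal{K} = \Bigl\{G \in \mathcal{M}_+(\mathbb{R}_c^3) : \|G\|_M = 1,\ \int w\,G(dw) = 0,\ \int w\otimes w\,G(dw) = M^{*},\ \|G\|_{1,s} \le R\Bigr\},
\end{equation*}
with $s > 2$ and $R$ large enough that the bound (\ref{T4E3}) is preserved under $\Phi_T^{\alpha^{*}}$ on a fixed interval $T \in [0,T_0]$. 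Then $\mathcal{K}$ is convex, nonempty (it contains the Gaussian of covariance $M^{*}$), invariant under $\Phi_T^{\alpha^{*}}$ for $0 \le T \le T_0$, and weakly-$\ast$ compact in $\mathcal{M}_+(\mathbb{R}_c^3)$ thanks to the one-point compactification and the tightness provided by the uniform $\|\cdot\|_{1,s}$-bound. Weak-$\ast$ continuity of $\Phi_T^{\alpha^{*}}$ on $\mathcal{K}$ follows from the mild-solution estimates (\ref{T3E7})--(\ref{T3E9a}) and the approximation device of Remark \ref{AppSol}. Schauder--Tikhonov then yields $\bar G_T = \Phi_T^{\alpha^{*}}(\bar G_T) \in \mathcal{K}$. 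Letting $T = T_0/n \to 0$ and extracting a weak-$\ast$ limit produces a measure $\bar G$ invariant under the entire semigroup $\{\Phi_T^{\alpha^{*}}\}_{T \ge 0}$, hence a stationary measure-valued solution of (\ref{eq:boltzgen}), equivalently a solution of (\ref{S4E5})--(\ref{S4E6}).

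\textbf{Main obstacle.} The crucial difficulty is preserving $\mathcal{K}$ while retaining weak-$\ast$ continuity of the evolution on it. Invariance of $\mathcal{K}$ forces $\alpha$ to be tuned exactly to the value $\alpha^{*}(L)$ coming from Step 1, since for any other $\alpha$ the trace evolution $\frac{d}{dt}\mathop{\rm tr}(M) = -2\alpha\mathop{\rm tr}(M) - 2\mathop{\rm tr}(LM)$ would push $\mathop{\rm tr}(M)$ away from $\zeta$. The smallness hypothesis $\|L\| \le k_0 b$ enters in Step 1 to make the perturbed linear system for $\tilde M^{*}$ uniquely solvable near $L = 0$ and to keep $\tilde M^{*}$ positive definite, and in Step 2 to control the $\|\cdot\|_{1,s}$-propagation constant of (\ref{T4E3}) uniformly across the iteration. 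Closing the second-moment constraint $\int w\otimes w\,dG = M^{*}$ under weak-$\ast$ limits, despite $|w|^{2}$ being unbounded on $\mathbb{R}^{3}$, is precisely the role of the auxiliary higher-moment bound $\|G\|_{1,s} \le R$ with $s > 2$.
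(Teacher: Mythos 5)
Your overall architecture is the same as the paper's: fix $\alpha$ from the stationary second-moment system (your implicit-function argument at $(\alpha,\tilde M,L)=(0,\tfrac13 I,0)$ is a legitimate substitute for the paper's eigenvalue-perturbation Lemma \ref{EigenPb} and selects the same pair, with $|\alpha^{*}|\leq Ck_{0}b$), then run Schauder on the time-$h$ maps over a convex, weak-$\ast$ compact set of measures with prescribed moments and a higher-moment bound, and let $h\to 0$. The gap is in the invariance of your set $\mathcal{K}$, specifically of the constraint $\Vert G\Vert_{1,s}\leq R$. The estimate (\ref{T4E3}) cannot deliver this: its constant $C(T,\Vert G_{0}\Vert_{M})$ comes from a Gronwall argument that uses only the gain part of the collision operator, is in general larger than $1$ and grows with $T$, and the bound is multiplicative in $\Vert G_{0}\Vert_{1,s}$, so $\Vert \Phi^{\alpha^{*}}_{T}G\Vert_{1,s}\leq C(T)R$ may exceed $R$ no matter how large $R$ is taken; smallness of $\Vert L\Vert$ does not make that constant $\leq 1$. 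Since $\mathcal{K}$ must be mapped into itself for the fixed-point theorem, and since the uniform $s$-moment bound is precisely what prevents the (unbounded) second-moment constraint from being lost in the weak-$\ast$ limit, this is not a technicality but the core of the proof.

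What is missing is a dissipativity mechanism for the $s$-moment: a Povzner-type inequality (Proposition \ref{Povzner}) showing that the collision term contributes $-\kappa_{s}(|n\cdot\omega|)\left(|w|^{s}+|w_{\ast}|^{s}\right)$ plus lower-order cross terms, whence $\int_{S^{2}}B\,\kappa_{s}\,d\omega\geq\mu b$ with $\mu>0$, so that along the flow $\partial_{t}M_{s}\leq (Ck_{0}-\mu)\,b\,M_{s}+C_{s}K$, the drift contribution being controlled by $\Vert L\Vert\leq k_{0}b$ together with $|\alpha^{*}|\leq Ck_{0}b$. Choosing $k_{0}\leq \mu/(2C)$ makes the set $\{M_{s}\leq C_{\ast}\}$ invariant for a suitable $C_{\ast}(\zeta)$; this is the second, decisive place where the smallness of $k_{0}$ enters, not the ``propagation constant'' of (\ref{T4E3}). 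This is exactly the content of the paper's Proposition \ref{InvReg}, carried out on the approximating sequence of Remark \ref{AppSol}. A lesser but real issue: Schauder on the weak-$\ast$ compact set $\mathcal{K}$ requires continuity of $\Phi^{\alpha^{*}}_{T}$ in the weak-$\ast$ topology, and the total-variation estimates (\ref{T3E7})--(\ref{T3E9a}) only give Lipschitz dependence in $\Vert\cdot\Vert_{M}$, which does not imply weak-$\ast$ continuity; the paper proves this separately via a $1$-Wasserstein estimate (Lemma \ref{Semig}), and you would need that ingredient (or an equivalent) as well.
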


\begin{remark}
\label{ZeroEner}Notice that if $\zeta=0$ the only solution of (\ref{S4E5})
satisfying (\ref{S4E6}) is $G=\delta_{w=0}.$\ The main content of Theorem
\ref{th:ssprof} is the existence of solutions of (\ref{S4E5}) with arbitrary
values of the velocity dispersion, something that can be thought also as
arbitrary values of the temperature.
\end{remark}

\begin{remark}
Theorem \ref{th:ssprof} is a perturbative result, because we assume the
smallness condition $\left\Vert L\right\Vert \leq k_{0}b.$ This will allow us
to prove the existence of self-similar solutions for several of the fluxes
described in Section \ref{ss:classeqsol}. 
However, this smallness condition is probably
not really needed, at least in the case of simple shear in (\ref{T1E5}). We derive in Theorem \ref{ThSimpShearArbK}
a sufficient condition for the existence of self-similar solutions in the simple shear case for arbitrary values of the shear parameter. This condition could be checked numerically for each choice of the kernel $B$.
The derivation of this condition requires a more careful examination of the
interplay between the hyperbolic term and the collision term in (\ref{S4E5}).
\end{remark}

The main idea in the proof of Theorem \ref{th:ssprof} is to prove the
existence of nontrivial steady states for the solutions of the evolution
equation
\begin{equation}
G_{t}-\alpha\partial_{w}\cdot\left(  wG\right)  -\partial_{w}\cdot\left(
Lw\,G\right)  =\mathbb{C}G\left(  t,w\right)  \ ,\ t>0,\ w\in\mathbb{R}^{3}.
\label{S4E7}%
\end{equation}
The equation (\ref{S4E7}) is a particular case of the equation
(\ref{eq:boltzgen}) where we take $Q\left(  t\right)  =L+\alpha I$. In this
case the  equations (\ref{S4E8}) for the second moments become
\begin{equation}
\frac{dM_{j,k}}{dt}+2\alpha M_{j,k}+L_{j,\ell}M_{k,\ell}+L_{k,\ell}M_{j,\ell
}=-2b\left(  M_{j,k}-m\delta_{j,k}\right)  \ ,\ \ j,k=1,2,3,\ \ M_{j,k}%
=M_{k,j} . \label{S5E3}%
\end{equation}
The equations (\ref{S5E3}) comprise a linear system of equations with constant
coefficients. Therefore they have solutions of the form $M_{j,k}=\Gamma
_{j,k}e^{2b\lambda t}.$ On the other hand we can formulate an equivalent
problem, namely to determine the values of $\alpha$ for which there is a
stationary solution of (\ref{S5E3}) with the form $M_{j,k}=\Gamma_{j,k}.$ Such
values of $\alpha$ solve the eigenvalue problem
\begin{equation}
\frac{\alpha}{b}\Gamma_{j,k}+\frac{1}{2b}\left(  L_{j,\ell}\Gamma_{k,\ell
}+L_{k,\ell}\Gamma_{j,\ell}\right)  =-\left(  \Gamma_{j,k}-\Gamma\delta
_{j,k}\right)  \ ,\ \ j,\ k=1,2,3,\ \ \Gamma_{j,k}=\Gamma_{k,j} \label{S6E5}%
\end{equation}
with%
\begin{equation}
\Gamma=\frac{1}{3}\left(  \Gamma_{1,1}+\Gamma_{2,2}+\Gamma_{3,3}\right) 
\label{S6E6}%
\end{equation}
and $b>0$ given by (\ref{S6E1}).

We prove the following result.  

\begin{lemma}
\label{EigenPb} There exists a sufficiently small  $k_{0}>0$ such that, for any
$L\in M_{3\times3}({\mathbb{R}})$ satisfying  $\left\Vert L\right\Vert \leq
k_{0}b$,\ there exists $\alpha\in\mathbb{R}$ and a real symmetric,  positive-definite matrix
$\left(  \Gamma_{j,k}\right)  _{j,k=1,2,3}$ such that
(\ref{S6E5}), (\ref{S6E6}) hold. Moreover, $\alpha$ 
can be chosen to be 
the complex number
with largest real part for which (\ref{S6E5}), (\ref{S6E6}) holds for a
nonzero $\left(  \Gamma_{j,k}\right)  _{j,k=1,2,3}.$ 
This particular choice of $\alpha$,
denoted $\bar{\alpha}$, satisfies
\begin{equation}
\left\vert \bar{\alpha}\right\vert \leq Ck_{0}b \label{S6E6a}%
\end{equation}
for some numerical constant $C>0$. 
\end{lemma}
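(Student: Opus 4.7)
My approach is to view (\ref{S6E5})--(\ref{S6E6}) as a linear eigenvalue problem on the six-dimensional real vector space $S$ of symmetric $3\times 3$ matrices, and to treat it as a perturbation of the $L=0$ case. Introduce the linear maps $T(L),P:S\to S$ by $T(L)\Gamma=\frac{1}{2b}(L\Gamma+\Gamma L^T)$ and $P\Gamma=\frac{1}{3}\mathrm{tr}(\Gamma)\,I_3$, and set $\mathcal{M}(L):=\mathrm{Id}_S-P+T(L)$. Then (\ref{S6E5})--(\ref{S6E6}) is equivalent to the eigenvalue equation $\mathcal{M}(L)\Gamma=-\tfrac{\alpha}{b}\Gamma$ on $S$. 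Note that $\|T(L)\|_{S\to S}\le\|L\|/b\le k_0$, so $\mathcal{M}(L)$ is a genuinely small perturbation of $\mathcal{M}(0)$ when $k_0$ is small.

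At $L=0$, the operator $\mathcal{M}(0)=\mathrm{Id}_S-P$ vanishes on the line $\mathbb{R}\cdot I_3$ and equals the identity on the complementary five-dimensional subspace $S_0\subset S$ of trace-free symmetric matrices. Hence the eigenvalues of $-\mathcal{M}(0)$ are $0$ (simple, with eigenvector $I_3$, which is positive-definite) and $-1$ (with multiplicity $5$). Because $0$ is a simple eigenvalue, I would apply the implicit function theorem to the system
\begin{equation*}
F(H,\alpha,L):=\left(\tfrac{\alpha}{b}+1\right)(I_3+H)+T(L)(I_3+H)-\tfrac{1}{3}\mathrm{tr}(I_3+H)\,I_3=0,
\end{equation*}
with unknowns $(H,\alpha)\in S_0\times\mathbb{R}$, i.e.\ normalizing $\Gamma=I_3+H$ with $\mathrm{tr}(H)=0$. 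Splitting $F$ into its identity component (scalar) and its $S_0$-component (5-dimensional), the differential at $(H,\alpha,L)=(0,0,0)$ is block-diagonal: the identity block gives $\partial/\partial\alpha=1/b$, while the trace-free block gives $\mathrm{Id}_{S_0}$ (since $\mathcal{M}(0)|_{S_0}=\mathrm{Id}_{S_0}$ is invertible and $T(0)=0$). This linearization is therefore invertible, so the implicit function theorem furnishes a real analytic branch $L\mapsto(\alpha(L),\Gamma(L))$ for $\|L\|\le k_0 b$ with $k_0$ sufficiently small, satisfying $\alpha(0)=0$, $\Gamma(0)=I_3$, and $|\alpha(L)|+\|\Gamma(L)-I_3\|=O(\|L\|)$. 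Positive-definiteness of $\Gamma(L)$ then persists for $k_0$ small by continuity from $\Gamma(0)=I_3\succ 0$.

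To identify this branch as the one with largest real part, I invoke continuous dependence of the spectrum (Rouch\'e applied to the characteristic polynomial, or standard matrix perturbation theory): for $\|L\|\le k_0 b$, the six eigenvalues of $-\mathcal{M}(L)$ lie close to $\{0,-1,-1,-1,-1,-1\}$. Exactly one is $O(\|L\|)$-close to $0$, and by uniqueness in the implicit function theorem it must coincide with $\bar\alpha/b$ constructed above; the other five remain within $O(k_0)$ of $-1$. Therefore $\bar\alpha$ is real, has the largest real part among all eigenvalues of the problem, and satisfies $|\bar\alpha|=b|\alpha(L)/b|\le C\|L\|\le Ck_0 b$, which is (\ref{S6E6a}). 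The one delicate point is ensuring that the size of the neighborhood produced by the implicit function theorem is uniform in $L$ in the quantitative form $\|L\|\le k_0 b$ rather than $\|L\|\le k_0$; this is automatic because rescaling the problem replaces $L$ by $L/b$ (the collision strength) in the definition of $T(L)$, so the only relevant small parameter is $\|L\|/b\le k_0$.
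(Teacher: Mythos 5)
Your proposal is correct and follows essentially the same route as the paper: diagonalize the $L=0$ problem (simple eigenvalue $\alpha=0$ with eigenvector the identity, eigenvalue $\alpha=-b$ on the five-dimensional trace-free subspace), then perturb for $\Vert L\Vert\le k_0 b$, using separation of the spectrum to identify the real eigenvalue of largest real part, positive-definiteness by closeness of $\Gamma$ to $I$, and differentiability of the simple eigenvalue for the bound $|\bar\alpha|\le Ck_0b$. The only difference is presentational: you carry out the perturbation explicitly via the implicit function theorem (with the correct observation that the problem depends on $L$ only through $L/b$), where the paper simply invokes standard continuity and differentiability results for simple eigenvalues (Kato).
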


\begin{proof}
Suppose first that $L=0.$ In that case the eigenvalue problem (\ref{S6E5}),
(\ref{S6E6}) can be solved explicitly. The problem is solved in a
six-dimensional space due to the symmetry condition $\Gamma_{j,k}=\Gamma
_{k,j}$. In this case there are two eigenvalues, namely $\alpha=-b$ and
$\alpha=0.$ In the case of the eigenvalue $\alpha=-b$ there is a
five-dimensional subspace of eigenvectors given by $\left\{  \Gamma=0\right\}
.$ On the other hand, in the case of the eigenvalue $\alpha=0$ we obtain the
one-dimensional subspace of eigenvectors $\Gamma_{j,k}=K\delta_{j,k}$ where
$K\in\mathbb{R}$. Then if $\left\Vert L\right\Vert \leq k_{0}b$ the result
follows from standard continuity results for the eigenvalues. The
corresponding matrix $\Gamma_{j,k}$ is a perturbation of the identity and then
it is positive definite. The fact that the eigenvalue with the largest real
part is real follows from the fact that the problem (\ref{S6E5}), (\ref{S6E6})
has real coefficients and therefore the eigenvalues, if they have a nonzero
imaginary part appear in pairs of complex conjugate numbers. However, there is
only one eigenvalue close to $\alpha=0$ since the degeneracy of the eigenvalue
$\alpha=0$ if $L=0$ is one. The estimate (\ref{S6E6a}) follows from standard
differentiability properties for the simple eigenvalues of matricial
eigenvalue problems (cf. \cite{K76}). 
\end{proof}

\bigskip

\begin{remark}
\label{dimSubsp}We notice that the dimension of the space of eigenvectors of
the eigenvalue problem (\ref{S6E5}), (\ref{S6E6}) is not necessarily six,
because the problem is not self-adjoint. Actually, we will see in Subsection
\ref{ss:SelfSimSimpSh} that if the matrix $L$ is chosen as in the simple shear
case (cf. (\ref{T1E5})) the subspace of eigenvectors is five-dimensional.
\end{remark}

As indicated in the Lemma we will use the notation $\bar{\alpha}$ to denote
the eigenvalue of the problem (\ref{S6E5}), (\ref{S6E6}) with the largest real
part obtained in Lemma \ref{EigenPb} and we will denote as $\bar{N}_{j,k}$ the
corresponding eigenvector. Then
\begin{align}
\frac{\bar{\alpha}}{b}\bar{N}_{j,k}+\frac{1}{2b}\left(  L_{j,\ell}\bar
{N}_{k,\ell}+L_{k,\ell}\bar{N}_{j,\ell}\right)   &  =-\left(  \bar{N}%
_{j,k}-\bar{N}\delta_{j,k}\right)  \ ,\ \ \bar{N}_{j,k}=\bar{N}_{k,j}%
\ \ ,\ \ j,\ k=1,2,3\nonumber\\
\bar{N}  &  =\frac{1}{3}\left(  \bar{N}_{1,1}+\bar{N}_{2,2}+\bar{N}%
_{3,3}\right)  \label{S7E3}%
\end{align}
where in order to have uniqueness we normalize $\bar{N}_{j,k}$ as
\begin{equation}
\sum_{j,k}\left(  \bar{N}_{j,k}\right)  ^{2}=1 . \label{S7E3a}%
\end{equation}

Notice that $\bar{\alpha}$ is bounded by $Ck_0b.$   

The following result is standard in Kinetic Theory (cf. \cite{CIP},
\cite{V02}). We just write here a version of the result convenient for the
arguments made later.

\begin{proposition}
[Povzner Estimates]\label{Povzner}Let $s>2$. There exists a continuous
function $\kappa_{s}:\left[  0,1\right]  \rightarrow\mathbb{R}$ such that
$\kappa_{s}\left(  y\right)  >0$ if $y\in\left[  0,1\right)  ,\ \kappa
_{s}\left(  0\right)  =0$, and a constant $C_{s}>0$ such that, for any
$w,w_{\ast}\in\mathbb{R}^{3}$ the following inequality holds
\begin{equation}
\left\vert w^{\prime}\right\vert ^{s}+\left\vert w_{\ast}^{\prime}\right\vert
^{s}-\left\vert w\right\vert ^{s}-\left\vert w_{\ast}\right\vert ^{s}%
\leq-\kappa_{s}\left(  \left\vert n\cdot\omega\right\vert \right)  \left(
\left\vert w\right\vert ^{s}+\left\vert w_{\ast}\right\vert ^{s}\right)
+C_{s}\left[  \left\vert w\right\vert ^{s-1}\left\vert w_{\ast}\right\vert
+\left\vert w_{\ast}\right\vert ^{s-1}\left\vert w\right\vert \right]  .
\label{S6E7}%
\end{equation}
 where $n=\frac{\left(
w-w_{\ast}\right)  }{\left\vert w-w_{\ast}\right\vert } $. 

\end{proposition}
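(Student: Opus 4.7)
The plan is to use the explicit collision rule together with strict convexity of $t \mapsto t^{s/2}$ (valid since $s > 2$), and then to argue by case analysis on the ratio $|w_*|/|w|$. This is the classical Povzner argument (cf.\ \cite{CIP}).

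From \eqref{CM1}--\eqref{CM2}, setting $a := w\cdot\omega$ and $b := w_*\cdot\omega$, a direct computation gives
\begin{equation*}
|w'|^2 = (|w|^2 - a^2) + b^2, \qquad |w_*'|^2 = (|w_*|^2 - b^2) + a^2,
\end{equation*}
both nonnegative, with $|w'|^2 + |w_*'|^2 = |w|^2 + |w_*|^2$. I would assume $|w| \ge |w_*|$ without loss of generality, and fix a small parameter $\eta \in (0, 1)$.

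In the balanced regime $|w_*| \ge \eta |w|$, the cross term on the right of \eqref{S6E7} satisfies $|w|^{s-1}|w_*| + |w_*|^{s-1}|w| \ge \tfrac{\eta}{2}(|w|^s + |w_*|^s)$, while the left-hand side is dominated by $(|w|^2 + |w_*|^2)^{s/2} \le 2^{s/2-1}(|w|^s + |w_*|^s)$, so \eqref{S6E7} holds for any continuous bounded $\kappa_s$ provided $C_s$ is large enough. In the singular regime $|w_*| < \eta|w|$, I would apply the elementary bound $(A + r)^{s/2} \le A^{s/2} + C(A^{(s-2)/2} r + r^{s/2})$ (with $A = |w|^2 - a^2$ or $A = a^2$, and $r \le |w_*|^2$) together with $|b| \le |w_*|$ to obtain
\begin{equation*}
|w'|^s + |w_*'|^s \le (|w|^2 - a^2)^{s/2} + a^s + C\bigl(|w|^{s-2}|w_*|^2 + |w_*|^s\bigr) \le (|w|^2 - a^2)^{s/2} + a^s + C|w|^{s-1}|w_*|.
\end{equation*}
Subtracting $|w|^s + |w_*|^s$ and using the identity
\begin{equation*}
(|w|^2 - a^2)^{s/2} + a^s - |w|^s = -|w|^s \,\widetilde\kappa_s(|a|/|w|), \qquad \widetilde\kappa_s(y) := 1 - y^s - (1 - y^2)^{s/2},
\end{equation*}
produces a bound of the required shape, since $\widetilde\kappa_s$ is continuous on $[0,1]$, vanishes at $0$ and $1$, and is strictly positive on $(0,1)$ by the strict convexity of $t^{s/2}$.

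The main technical point, which is the hard part, is converting the argument from $|a|/|w|$ to $|n \cdot \omega|$. Using $n \cdot \omega = (a - b)/|w - w_*|$ together with $|b| \le |w_*| \le \eta|w|$, one shows $\bigl||a|^2/|w|^2 - (n \cdot \omega)^2\bigr| = O(|w_*|/|w|)$; combined with the quadratic vanishing of $\widetilde\kappa_s$ at $0$ and its linear vanishing at $1$, the discrepancy between $\widetilde\kappa_s(|a|/|w|)$ and $\widetilde\kappa_s(|n \cdot \omega|)$ can be absorbed into $C_s(|w|^{s-1}|w_*| + |w_*|^{s-1}|w|)$ by tapering $\kappa_s$ to zero at an appropriate rate near the endpoints of $[0,1]$ and choosing $\eta$ sufficiently small. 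The final $\kappa_s$ is then obtained as a continuous minorant of $\widetilde\kappa_s$ (for instance $\widetilde\kappa_s/2$ away from $\{0,1\}$, quadratically tapered to zero near the endpoints), and the required sign and continuity conditions are verified by inspection.
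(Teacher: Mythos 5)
Your proof is correct, and in the decisive regime it runs along a genuinely different line from the paper's. The overall skeleton is the same classical Povzner scheme (split into comparable speeds, where the cross term $C_s\left[\left\vert w\right\vert^{s-1}\left\vert w_{\ast}\right\vert+\left\vert w_{\ast}\right\vert^{s-1}\left\vert w\right\vert\right]$ absorbs everything, and disparate speeds, where the angular decrement must be extracted), but the key step differs. The paper recenters at the \emph{smaller} velocity and uses that, by \eqref{CM1}--\eqref{CM2}, $w^{\prime}-w=P_{\omega}\left(w_{\ast}-w\right)$ and $w_{\ast}^{\prime}-w=P_{\omega}^{\bot}\left(w_{\ast}-w\right)$; after triangle-inequality expansions of the form $\left\vert w^{\prime}\right\vert^{s}\leq\left\vert w^{\prime}-w\right\vert^{s}+C_{s}\left\vert w\right\vert\left\vert w^{\prime}-w\right\vert^{s-1}$ (and a short subcase analysis on $\left\vert w^{\prime}\right\vert,\left\vert w_{\ast}^{\prime}\right\vert$), the exact identity $\left\vert w^{\prime}-w\right\vert^{s}+\left\vert w_{\ast}^{\prime}-w\right\vert^{s}-\left\vert w_{\ast}-w\right\vert^{s}=-\kappa_{s}\left(\left\vert n\cdot\omega\right\vert\right)\left\vert w_{\ast}-w\right\vert^{s}$ makes $n\cdot\omega$ appear exactly, with no approximation of the angular argument. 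You instead decompose $\left\vert w^{\prime}\right\vert^{2},\left\vert w_{\ast}^{\prime}\right\vert^{2}$ along $\omega$ into contributions of $w$ and $w_{\ast}$ separately, which is more symmetric and avoids the subcases, but the decrement you obtain is governed by the angle between the \emph{larger} velocity and $\omega$, so you must convert $\left\vert w\cdot\omega\right\vert/\left\vert w\right\vert$ into $\left\vert n\cdot\omega\right\vert$ — an extra step the paper's recentering trick makes unnecessary. Your conversion is sound, but it can be streamlined considerably: since $s>2$, the function $\widetilde\kappa_{s}\left(y\right)=1-y^{s}-\left(1-y^{2}\right)^{s/2}$ is Lipschitz on $\left[0,1\right]$ (its derivative $-sy^{s-1}+sy\left(1-y^{2}\right)^{s/2-1}$ is bounded), so the $O\left(\left\vert w_{\ast}\right\vert/\left\vert w\right\vert\right)$ perturbation of the angular argument contributes at most $C\left\vert w\right\vert^{s-1}\left\vert w_{\ast}\right\vert$ after multiplication by $\left\vert w\right\vert^{s}$, and one may simply take $\kappa_{s}=\widetilde\kappa_{s}$ with no tapering and no smallness of $\eta$ beyond, say, $\eta\leq 1/2$. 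Note finally that both your $\kappa_{s}$ and the paper's explicit choice $\kappa_{s}\left(y\right)=1-y^{s}-\left(1-y^{2}\right)^{s/2}$ vanish at \emph{both} endpoints of $\left[0,1\right]$ (the statement's phrasing is imprecise on this point), which is all that the later application in Proposition \ref{InvReg} — the lower bound $\int_{S^{2}}B\left(n\cdot\omega\right)\kappa_{s}\left(\left\vert n\cdot\omega\right\vert\right)d\omega\geq\mu b$ — actually requires.
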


\begin{proof}
Suppose first that $\frac{1}{2}\left\vert w\right\vert \leq\left\vert w_{\ast
}\right\vert \leq2\left\vert w\right\vert .$ Then, using the collision rule
(\ref{CM1}), (\ref{CM2}) we obtain, using that both norms are comparable
\begin{align*}
\left\vert w^{\prime}\right\vert ^{s}+\left\vert w_{\ast}^{\prime}\right\vert
^{s}-\left\vert w\right\vert ^{s}-\left\vert w_{\ast}\right\vert ^{s}  &
\leq\bar{C}_{s}\left(  \left\vert w\right\vert ^{s}+\left\vert w_{\ast
}\right\vert ^{s}\right)  -\left\vert w\right\vert ^{s}-\left\vert w_{\ast
}\right\vert ^{s}\\
&  \leq-\left(  \left\vert w\right\vert ^{s}+\left\vert w_{\ast}\right\vert
^{s}\right)  +C_{s}\left[  \left\vert w\right\vert ^{s-1}\left\vert w_{\ast
}\right\vert +\left\vert w_{\ast}\right\vert ^{s-1}\left\vert w\right\vert
\right]
\end{align*}
whence (\ref{S6E7}) follows. Let us assume then without loss of generality
that $\left\vert w\right\vert \leq\frac{1}{2}\left\vert w_{\ast}\right\vert ,$
since the symmetric case $\left\vert w_{\ast}\right\vert \leq\frac{1}%
{2}\left\vert w\right\vert $ can be studied analogously. We have several
possibilities. If $\left\vert w^{\prime}\right\vert \leq\frac{\left\vert
w\right\vert }{2}$ and $\left\vert w_{\ast}^{\prime}\right\vert \leq
\frac{\left\vert w\right\vert }{2}$ we obtain
\[
\left\vert w^{\prime}\right\vert ^{s}+\left\vert w_{\ast}^{\prime}\right\vert
^{s}-\left\vert w\right\vert ^{s}-\left\vert w_{\ast}\right\vert ^{s}\leq
C_{s}\left\vert w\right\vert ^{s}-\left\vert w\right\vert ^{s}-\left\vert
w_{\ast}\right\vert ^{s}%
\]
and (\ref{S6E7}) also follows. If $\max\left\{  \left\vert w^{\prime
}\right\vert ,\left\vert w_{\ast}^{\prime}\right\vert \right\}  >\frac
{\left\vert w\right\vert }{2}$ we argue as follows. Suppose that both
$\left\vert w^{\prime}\right\vert ,\left\vert w_{\ast}^{\prime}\right\vert $
are larger than $\left\vert w\right\vert .$ Then, using the triangular
inequality as well as $\left(  1+x\right)  ^{s}\leq1+C_{s}x,$ for any
$x\in\left[  0,1\right]  $ we obtain
\[
\left\vert w^{\prime}\right\vert ^{s}\leq\left\vert w^{\prime}-w\right\vert
^{s}+C_{s}\left\vert w\right\vert \left\vert w^{\prime}-w\right\vert
^{s-1}\ \ ,\ \ \left\vert w_{\ast}^{\prime}\right\vert ^{s}\leq\left\vert
w_{\ast}^{\prime}-w\right\vert ^{s}+C_{s}\left\vert w\right\vert \left\vert
w_{\ast}^{\prime}-w\right\vert ^{s-1}.
\]

On the other hand
\[
\left\vert w_{\ast}\right\vert ^{s}\geq\left\vert w_{\ast}-w\right\vert
^{s}-C_{s}\left\vert w\right\vert \left\vert w_{\ast}-w\right\vert ^{s-1}%
\]
whence, using that $\left\vert w^{\prime}-w\right\vert \leq C\left\vert
w_{\ast}\right\vert $ and $\left\vert w_{\ast}^{\prime}-w\right\vert \leq
C\left\vert w_{\ast}\right\vert $, we get
\begin{equation}
\left\vert w^{\prime}\right\vert ^{s}+\left\vert w_{\ast}^{\prime}\right\vert
^{s}-\left\vert w\right\vert ^{s}-\left\vert w_{\ast}\right\vert ^{s}%
\leq\left\vert w^{\prime}-w\right\vert ^{s}+\left\vert w_{\ast}^{\prime
}-w\right\vert ^{s}-\left\vert w_{\ast}-w\right\vert ^{s}+C_{s}\left\vert
w_{\ast}\right\vert ^{s-1}\left\vert w\right\vert . \label{S6E8}%
\end{equation}

Using (\ref{CM1}), (\ref{CM2}) we obtain $w^{\prime}-w=P_{\omega}\left(
w_{\ast}-w\right)  $ and $w_{\ast}^{\prime}-w=P_{\omega}^{\bot}\left(
w_{\ast}-w\right)  $ (cf. (\ref{S5E8})). Representing $\left(  w_{\ast
}-w\right)  ,\ \left(  w^{\prime}-w\right)  $ and $\left(  w_{\ast}^{\prime
}-w\right)  $ in a spherical coordinate system for which the North Pole is
$\omega$ we obtain
\begin{equation}
\left\vert w^{\prime}-w\right\vert ^{s}+\left\vert w_{\ast}^{\prime
}-w\right\vert ^{s}-\left\vert w_{\ast}-w\right\vert ^{s}=-\kappa_{s}\left(
n\cdot\omega\right)  \left\vert w_{\ast}-w\right\vert ^{s}\left[  \left\vert
\cos\left(  \theta\right)  \right\vert ^{s}+\left\vert \sin\left(
\theta\right)  \right\vert ^{s}-1\right]  \ \label{S6E9}%
\end{equation}
where
\[
\kappa_{s}\left(  n\cdot\omega\right)  =1-\left\vert \cos\left(
\theta\right)  \right\vert ^{s}-\left\vert \sin\left(  \theta\right)
\right\vert ^{s}
\]
and $n=\frac{\left(
w-w_{\ast}\right)  }{\left\vert w-w_{\ast}\right\vert } $.

The function $\kappa_{s}$ has all the properties stated in the Proposition if
$s>2.$ Combining (\ref{S6E8}) and (\ref{S6E9}) we obtain
\begin{align*}
\left\vert w^{\prime}\right\vert ^{s}+\left\vert w_{\ast}^{\prime}\right\vert
^{s}-\left\vert w\right\vert ^{s}-\left\vert w_{\ast}\right\vert ^{s}  &
\leq-\kappa_{s}\left(  n\cdot\omega\right)  \left\vert w_{\ast}-w\right\vert
^{s}+C_{s}\left\vert w_{\ast}\right\vert ^{s-1}\left\vert w\right\vert \\
&  \leq-\kappa_{s}\left(  n\cdot\omega\right)  \left\vert w_{\ast}\right\vert
^{s}+C_{s}\left\vert w\right\vert ^{s}+C_{s}\left\vert w_{\ast}\right\vert
^{s-1}\left\vert w\right\vert \\
&  \leq-\kappa_{s}\left(  n\cdot\omega\right)  \left\vert w_{\ast}\right\vert
^{s}+2C_{s}\left\vert w_{\ast}\right\vert ^{s-1}\left\vert w\right\vert
\end{align*}
whence (\ref{S6E7}) follows in this case since $\left\vert w\right\vert
\leq\frac{1}{2}\left\vert w_{\ast}\right\vert $. If $\left\vert w^{\prime
}\right\vert \leq\frac{\left\vert w\right\vert }{2}<\left\vert w_{\ast
}^{\prime}\right\vert $ we obtain with similar arguments
\begin{align*}
\left\vert w^{\prime}\right\vert ^{s}+\left\vert w_{\ast}^{\prime}\right\vert
^{s}-\left\vert w\right\vert ^{s}-\left\vert w_{\ast}\right\vert ^{s}  &
\leq\left\vert w_{\ast}^{\prime}-w\right\vert ^{s}-\left\vert w_{\ast
}-w\right\vert ^{s}+C_{s}\left\vert w_{\ast}\right\vert ^{s-1}\left\vert
w\right\vert \\
&  \leq\left\vert w^{\prime}-w\right\vert ^{s}+\left\vert w_{\ast}^{\prime
}-w\right\vert ^{s}-\left\vert w_{\ast}-w\right\vert ^{s}+C_{s}\left\vert
w_{\ast}\right\vert ^{s-1}\left\vert w\right\vert
\end{align*}
and we can the obtain (\ref{S6E7}) in the same way. The case $\left\vert
w_{\ast}^{\prime}\right\vert \leq\frac{\left\vert w\right\vert }{2}<\left\vert
w^{\prime}\right\vert $ is similar.
\end{proof}

We now prove that the nonlinear evolution defined by means of Theorems
\ref{WellPos}, \ref{WeakSolTh} is continuous in time in the weak topology of measures.

\begin{lemma}
\label{Semig}Suppose that $G_{0}\in\mathcal{M}_{+}\left(  \mathbb{R}_{c}%
^{3}\right)  $ satisfies
\begin{equation}
\int_{\mathbb{R}^{3}}G_{0}\left(  dw\right)  =1,\ \int_{\mathbb{R}^{3}%
}\left\vert w\right\vert ^{s}G_{0}\left(  dw\right)  <\infty\label{S7E1}%
\end{equation}
for some $s>2.$ We denote as $\mathcal{S}_{\alpha}\left(  t\right)
G_{0}=G\left(  t,\cdot\right)  $ the unique mild solution of (\ref{S4E7})
given by Theorem \ref{WellPos}. Then the family of operators $\mathcal{S}%
_{\alpha}\left(  t\right)  $ define an evolution semigroup.\ The mapping
$\mathcal{S}_{\alpha}:\left[  0,\infty\right)  \times\mathcal{M}_{+}\left(
\mathbb{R}_{c}^{3}\right)  \rightarrow\mathcal{M}_{+}\left(  \mathbb{R}%
_{c}^{3}\right)  $ is uniformly continuous in the weak topology of
$\mathcal{M}_{+}\left(  \mathbb{R}_{c}^{3}\right)  $ on any set of the form
$\left[  0,T\right]  \times\mathcal{M}_{s,+}\left(  \mathbb{R}_{c}^{3}\right)
,$ where $T\in\left(  0,\infty\right)  $ and $\mathcal{M}_{s,+}\left(
\mathbb{R}_{c}^{3}\right)  $ is the subset of measures $G_{0}\in
\mathcal{M}_{+}\left(  \mathbb{R}_{c}^{3}\right)  $ satisfying (\ref{S7E1}).
\end{lemma}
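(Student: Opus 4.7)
\begin{proofof}[Proof plan for Lemma \ref{Semig}]
The plan is to break the statement into three pieces: the semigroup property, Lipschitz-dependence on the initial datum in the total variation norm, and weak continuity in $t$ for a fixed initial datum. Joint weak continuity on $[0,T]\times\mathcal{M}_{s,+}(\mathbb{R}_c^3)$ then follows by a triangle inequality, with uniformity over bounded subsets of $\mathcal{M}_{s,+}$ (measured in the norm $\|\cdot\|_{1,s}$) coming from the moment propagation estimate of Theorem \ref{WeakSolTh}.

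First, the semigroup identity $\mathcal{S}_\alpha(t+s)G_0 = \mathcal{S}_\alpha(t)\mathcal{S}_\alpha(s)G_0$ is an immediate consequence of the uniqueness part of Theorem \ref{WellPos}: both sides solve the mild formulation (\ref{T3E5}) on $[s,s+t]$ with the same value $\mathcal{S}_\alpha(s)G_0$ at time $s$, hence they coincide. Second, for two initial data $G_0^{(1)},G_0^{(2)}$ with uniformly bounded total mass, I would run the contraction arguments in the proof of Theorem \ref{WellPos} for the difference of the corresponding mild solutions; combining (\ref{T3E7})--(\ref{T3E9a}) with mass conservation (\ref{T4E6a}) and Gronwall's lemma yields
\begin{equation*}
\sup_{0\le t\le T}\bigl\|\mathcal{S}_\alpha(t)G_0^{(1)}-\mathcal{S}_\alpha(t)G_0^{(2)}\bigr\|_M \;\le\; C(T,R)\,\bigl\|G_0^{(1)}-G_0^{(2)}\bigr\|_M,
\end{equation*}
on any ball $\{\|G_0\|_M\le R\}$. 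Since the total-variation norm dominates the weak topology coming from testing against $C(\mathbb{R}_c^3)$, this already gives continuous dependence on $G_0$ uniformly in $t\in[0,T]$.

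For time continuity, fix $\varphi\in C(\mathbb{R}_c^3)$ and split $\varphi=\varphi(\infty)+\tilde\varphi$, where $\tilde\varphi$ vanishes at $\infty$; the constant part contributes nothing by mass conservation. Approximate $\tilde\varphi$ by a sequence $\tilde\varphi_R\in C_c^1(\mathbb{R}^3)$ that agrees with $\tilde\varphi$ on $\{|w|\le R\}$ and is uniformly bounded by $\|\tilde\varphi\|_\infty$. For each such truncated test function the weak formulation (\ref{T3E6}) applies because $Q(\cdot)w\cdot\partial_w\tilde\varphi_R$ has compact support in $w$ and $\Lambda$ is bounded by (\ref{T3E2a}), giving
\begin{equation*}
\bigl|G(t,\tilde\varphi_R)-G(s,\tilde\varphi_R)\bigr|\;\le\; C(R,T)\,|t-s|.
\end{equation*}
The remainder $\tilde\varphi-\tilde\varphi_R$ is controlled by Chebyshev and the $s$-moment bound from Theorem \ref{WeakSolTh}:
\begin{equation*}
\bigl|G(t,\tilde\varphi-\tilde\varphi_R)\bigr|\;\le\; \|\tilde\varphi\|_\infty\,R^{-s}\,\sup_{0\le t\le T}\|G(t,\cdot)\|_{1,s}\;\le\; \|\tilde\varphi\|_\infty\,R^{-s}\,C(T,\|G_0\|_M)\,\|G_0\|_{1,s},
\end{equation*}
which can be made arbitrarily small uniformly on any bounded subset of $\mathcal{M}_{s,+}$ by taking $R$ large. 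Combining with the previous step via the triangle inequality yields the claimed uniform continuity on $[0,T]\times\{G_0:\|G_0\|_{1,s}\le R_0\}$ for each $R_0$.

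The main obstacle is the tail at $|w|=\infty$: the hyperbolic term $\partial_w\cdot(Q(t)wG)$ transports mass towards infinity and, on the compactification $\mathbb{R}_c^3$, any mass that reaches the point at infinity would show up as a jump in the weak topology. The assumption $s>2$ in (\ref{S7E1}) is exactly what is needed so that the moment-propagation estimate (\ref{T4E3}) provides the tightness making the $R\to\infty$ truncation step uniform. Everything else — the semigroup identity, the $\|\cdot\|_M$-Lipschitz dependence on data, and the one-sided bound $|t-s|\cdot C(R,T)$ on compactly supported test functions — is standard given the well-posedness results already established.
\end{proofof}
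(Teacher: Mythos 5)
Your semigroup step and your time-continuity step are essentially the paper's own argument (weak formulation on truncated test functions, a Lipschitz-in-$t$ bound with constant depending on the truncation radius, and tail control via the propagated $s$-moment), so those parts are fine. The genuine gap is in the continuity with respect to the initial datum. The lemma asserts continuity of $G_0\mapsto\mathcal{S}_\alpha(t)G_0$ in the \emph{weak} topology on $\mathcal{M}_{+}(\mathbb{R}_c^3)$ in both the source and the target, and this is exactly what is used later (Schauder's fixed point argument in the proof of Theorem \ref{th:ssprof} runs on a set $\mathcal{U}$ that is compact only in the weak-$\ast$ topology, and the limit passage $\mathcal{S}_{\bar\alpha}(t)G_*^{(h_k)}\to\mathcal{S}_{\bar\alpha}(t)G_*$ needs weak-to-weak continuity). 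Your proposed estimate
$\sup_{0\le t\le T}\|\mathcal{S}_\alpha(t)G_0^{(1)}-\mathcal{S}_\alpha(t)G_0^{(2)}\|_M\le C(T,R)\|G_0^{(1)}-G_0^{(2)}\|_M$
is a total-variation Lipschitz bound, and the implication you invoke goes the wrong way: TV-closeness of the data implies weak closeness of the solutions, but two initial data can be weakly close while TV-far (a Dirac mass and a narrow Gaussian have $\|\cdot\|_M$-distance $2$), so this estimate says nothing about what happens along a weakly convergent sequence of initial data. Hence your argument establishes strong-to-weak continuity only, which is neither the statement of the lemma nor sufficient for its intended application.

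The paper closes precisely this gap by metrizing the weak topology (on sets with the available moment bounds) with the $1$-Wasserstein distance $\mathcal{W}_1$ of (\ref{S8E1}) and proving $\mathcal{W}_1(G_1,G_2)(t)\le C_T\,\mathcal{W}_1(G_{1,0},G_{2,0})$. This requires genuinely different work from the mild-solution contraction estimates: one tests the weak formulation with $\varphi(t,w)=\psi(U(t)w)$, $U(t)=e^{(\alpha+L)t}$, so that the transport term drops out, and then one must show that the gradients of $w\mapsto\int_{S^2}B(n\cdot\omega)\psi(U(s)w')\,d\omega$ and of the analogous term with $w_*'$ stay bounded (estimate (\ref{S8E2}), using the rotation change of variables $\omega=\mathcal{R}(n)\hat\omega$ and $|\partial_w n|\le C/|w-w_*|$), before Gronwall can be applied to $\mathcal{W}_1$. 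Some argument of this type — propagating a metric for the weak topology rather than the measure norm — is the missing idea in your plan.
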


\begin{proof}
The semigroup property is just a consequence of the results in Theorems
\ref{WellPos}, \ref{WeakSolTh}. In order to prove the weak continuity of the
operator in $t$ we prove first that the functions $t\rightarrow\int
_{\mathbb{R}_{c}^{3}}\varphi\left(  w\right)  G\left(  t,dw\right)  $ are
continuous for any test function $\varphi\in C\left(  \mathbb{R}_{c}%
^{3}\right)  .$ To prove this we notice that for any function $\varphi\in
C^{1}\left(  \mathbb{R}^{3}\right)  $ such that $\varphi\left(  w\right)  $ is
constant for $\left\vert w\right\vert \geq R$ with $R$ large we have
\begin{equation}
\left\vert \int_{\mathbb{R}^{3}}\varphi\left(  w\right)  G\left(
dw,t_{1}\right)  -\int_{\mathbb{R}^{3}}\varphi\left(  w\right)  G\left(
dw,t_{2}\right)  \right\vert \leq C\left\vert t_{1}-t_{2}\right\vert .
\label{S2E7a}%
\end{equation}
with $C$ depending on the derivatives of $\varphi,$ but independent on $G_{0}$
if $\int_{\mathbb{R}^{3}}G_{0}\left(  dw\right)  =1$. This is a
consequence of the weak formulation identity (\ref{T3E6}). Since
\[
\left\Vert G\left(  t,\cdot\right)  \right\Vert _{1,\sigma}\leq C\left(
T\right)  ,\ 0\leq t\leq T
\]
we obtain that the contributions to the integrals due to the region $\left\{
\left\vert w\right\vert \geq R\right\}  $ can be made arbitrarily small if $R$
is large. Then the stated weak continuity in time follows using the
density of the chosen test functions in $C\left(  \mathbb{R}_{c}^{3}\right)
.$

It only remains to prove that for any $T>0$ the mapping $\mathcal{S}_{\alpha
}\left(  T\right)  :\mathcal{M}_{s,+}\left(  \mathbb{R}_{c}^{3}\right)
\rightarrow\mathcal{M}_{s,+}\left(  \mathbb{R}_{c}^{3}\right)  $ is continuous
in the weak topology. To this end, we first notice that the function
$\Lambda\left(  w,w_{\ast}\right)  $ defined in (\ref{T3E2c}) is a constant
$\Lambda_{0}$ in the case of Maxwell molecules. We now use the following
metric to characterize the weak topology: 
\begin{equation}
\mathcal{W}_{1}\left(  G,H\right)  =\sup_{\left\vert \partial_{w}%
\varphi\right\vert \leq1}\int\varphi\left(  w\right)  \left(  G-H\right)
\left(  dw\right)  . \label{S8E1}%
\end{equation}
This metric is referred to as the $1$-Wasserstein distance (see for instance
\cite{V03}).

Suppose that $G$ is a weak solution of (\ref{S4E7}). Using the weak
formulation (\ref{T3E6}) we obtain
\begin{align*}
&  \partial_{t}\left(  \int_{\mathbb{R}^{3}}\varphi\left(  t,w\right)
G\left(  t,dw\right)  \right) \\
&  =\int_{\mathbb{R}^{3}}\partial_{t}\varphi\left(  t,w\right)  G\left(
t,dw\right)  -\int_{\mathbb{R}^{3}}\left[  \left(  \alpha w+Lw\right)
\cdot\partial_{w}\varphi\right]  G\left(  t,dw\right) \\
&  +\frac{1}{2}\int_{\mathbb{R}^{3}}dw\int_{\mathbb{R}^{3}}dw_{\ast}%
\int_{S^{2}}d\omega G\left(  t,dw\right)  G\left(  t,dw_{\ast}\right)
B\left(  n\cdot\omega\right)  \left[  \varphi^{\prime}+\varphi_{\ast}^{\prime
}-\varphi-\varphi_{\ast}\right] .
\end{align*}

We now consider a test function with the form $\varphi\left(  t,w\right)
=\psi\left(  U\left(  t\right)  w\right)  ,\ $where $U\left(  t\right)
=e^{\left(  \alpha+L\right)  t}.$ Then the identity above becomes
\begin{align}
\label{S7E8}\partial_{t}\left(  \int_{\mathbb{R}^{3}}\psi\left(  U\left(
t\right)  w\right)  G\left(  t,dw\right)  \right)  = & \frac{1}{2}%
\int_{\mathbb{R}^{3}}dw\int_{\mathbb{R}^{3}}dw_{\ast}\int_{S^{2}}d\omega
G\left(  t,dw\right)  G\left(  t,dw_{\ast}\right)  B\left(  n\cdot
\omega\right) \\
&  \times\left[  \psi\left(  U\left(  t\right)  w^{\prime}\right)
+\psi\left(  U\left(  t\right)  w_{\ast}^{\prime}\right)  -\psi\left(
U\left(  t\right)  w\right)  -\psi\left(  U\left(  t\right)  w_{\ast}\right)
\right]  .\nonumber
\end{align}

Suppose that we have two solutions of (\ref{S4E7}), $G_{1},G_{2}$ with initial
data $G_{1,0},G_{2,0}\in\mathcal{M}_{+}\left(  \mathbb{R}_{c}^{3}\right)  $
respectively. Integrating in time (\ref{S7E8}), writing the resulting equation
for both solutions $G_{1},G_{2}$ and taking the difference we obtain
\begin{align}
&  \int_{\mathbb{R}^{3}}\psi\left(  U\left(  t\right)  w\right)  \left(
G_{1}-G_{2}\right)  \left(  t,dw\right) \nonumber\\
&  = \int_{\mathbb{R}^{3}}\psi\left(  w\right)  \left(  G_{1,0}-G_{2,0}%
\right)  \left(  dw\right) \nonumber\\
&  \quad+\frac{1}{2}\int_{0}^{t}ds\int_{\mathbb{R}^{3}}dw\int_{\mathbb{R}^{3}%
}dw_{\ast}\int_{S^{2}}d\omega G_{1}\left(  s,dw\right)  \left(  G_{1}%
-G_{2}\right)  \left(  s,dw_{\ast}\right)  B\left(  n\cdot\omega\right)
\nonumber\\
&  \quad\times\left[  \psi\left(  U\left(  s\right)  w^{\prime}\right)
+\psi\left(  U\left(  s\right)  w_{\ast}^{\prime}\right)  -\psi\left(
U\left(  s\right)  w\right)  -\psi\left(  U\left(  s\right)  w_{\ast}\right)
\right] \nonumber\\
&  \quad+\frac{1}{2}\int_{0}^{t}ds\int_{\mathbb{R}^{3}}dw\int_{\mathbb{R}^{3}%
}dw_{\ast}\int_{S^{2}}d\omega\left(  G_{1}-G_{2}\right)  \left(  s,dw\right)
G_{2}\left(  s,dw_{\ast}\right)  B\left(  n\cdot\omega\right) \nonumber\\
&  \quad\times\left[  \psi\left(  U\left(  s\right)  w^{\prime}\right)
+\psi\left(  U\left(  s\right)  w_{\ast}^{\prime}\right)  -\psi\left(
U\left(  s\right)  w\right)  -\psi\left(  U\left(  s\right)  w_{\ast}\right)
\right]  . \label{S7E9}%
\end{align}

We now take $\psi\left(  w\right)  =\varphi\left(  U\left(  -t\right)
w\right)  $ for a test function $\varphi\left(  \xi\right)  $ satisfying
$\left\vert \partial_{\xi}\varphi\right\vert \leq1.$ Then, using the chain
rule as well as the fact $\left\vert U\left(  t\right)  \right\vert
+\left\vert U\left(  -t\right)  \right\vert \leq C_{T}$ for $0\leq t\leq T$
and the collision rule (\ref{CM1}), (\ref{CM2}) we obtain
\begin{align*}
\left\vert \partial_{w}\psi\left(  w\right)  \right\vert  &  =\left\vert
\partial_{w}\varphi\left(  U\left(  -t\right)  w\right)  \right\vert \leq
C_{T},\\
\left\vert \partial_{w}\psi\left(  U\left(  s\right)  w^{\prime}\right)
\right\vert  &  =\left\vert \partial_{w}\varphi\left(  U\left(  -t\right)
U\left(  s\right)  w^{\prime}\right)  \right\vert \leq C_{T},\\
\left\vert \partial_{w}\psi\left(  U\left(  s\right)  w_{\ast}^{\prime
}\right)  \right\vert  &  =\left\vert \partial_{w}\varphi\left(  U\left(
-t\right)  U\left(  s\right)  w_{\ast}^{\prime}\right)  \right\vert \leq
C_{T},\\
\left\vert \partial_{w}\psi\left(  U\left(  s\right)  w_{\ast}\right)
\right\vert  &  =\left\vert \partial_{w}\varphi\left(  U\left(  -t\right)
U\left(  s\right)  w_{\ast}\right)  \right\vert \leq C_{T},\\
\left\vert \partial_{w}\psi\left(  U\left(  s\right)  w\right)  \right\vert
&  =\left\vert \partial_{w}\varphi\left(  U\left(  -t\right)  U\left(
s\right)  w\right)  \right\vert \leq C_{T}.
\end{align*}

Moreover, the second and third estimates as well as our assumptions (cf.
\eqref{T3E2c}, \eqref{T3E2a}) in $B$ imply
\begin{equation}
\left\vert \partial_{w}\left(  \int_{S^{2}}B\left(  n\cdot\omega\right)
\psi\left(  U\left(  s\right)  w^{\prime}\right)  d\omega\right)  \right\vert
+\left\vert \partial_{w}\left(  \int_{S^{2}}B\left(  n\cdot\omega\right)
\psi\left(  U\left(  s\right)  w_{\ast}^{\prime}\right)  d\omega\right)
\right\vert \leq C_{T} \label{S8E2}%
\end{equation}

To check this estimate we argue as follows. We estimate the second term, since
the first one is similar. We introduce a rotation matrix $\mathcal{R}\left(
n\right)  \in SO\left(  3\right)  $ which transforms one of the coordinate
axes, say $e_{1}$ into the vector $n$. We then change variables by means of
$\omega=\mathcal{R}\left(  n\right)  \hat{\omega},$ whence
\begin{align*}
&  \int_{S^{2}}B\left(  n\cdot\omega\right)  \psi\left(  U\left(  s\right)
w_{\ast}^{\prime}\right)  d\omega\\
&  =\int_{S^{2}}B\left(  \mathcal{R}\left(  n\right)  e_{1}\cdot
\mathcal{R}\left(  n\right)  \hat{\omega}\right)  \psi\left(  U\left(
s\right)  w_{\ast}^{\prime}\right)  d\hat{\omega}\\
&  =\int_{S^{2}}B\left(  e_{1}\cdot\hat{\omega}\right)  \psi\left(  U\left(
s\right)  w_{\ast}^{\prime}\right)  d\hat{\omega}%
\end{align*}
where $w_{\ast}^{\prime}=w_{\ast}-\left(  \left(  w_{\ast}-w\right)
\cdot\omega\right)  \omega=w_{\ast}-\left(  \left(  w_{\ast}-w\right)
\cdot\mathcal{R}\left(  n\right)  \hat{\omega}\right)  \mathcal{R}\left(
n\right)  \hat{\omega}.$ We then need to compute $\partial_{w}\psi\left(
U\left(  s\right)  w_{\ast}^{\prime}\right)  $ with $w_{\ast}^{\prime}$ given
by this formula. In particular this requires to estimate $\partial_{w}\left[
\left(  \left(  w_{\ast}-w\right)  \cdot\mathcal{R}\left(  n\right)
\hat{\omega}\right)  \mathcal{R}\left(  n\right)  \hat{\omega}\right]  $ where
$n=\frac{\left(  w-w_{\ast}\right)  }{\left\vert w-w_{\ast}\right\vert }.$
Therefore, using that $\left\vert \partial_{w}n\right\vert \leq\frac
{C}{\left\vert w-w_{\ast}\right\vert }$ we obtain
\[
\left\vert \partial_{w}\left[  \left(  \left(  w_{\ast}-w\right)
\cdot\mathcal{R}\left(  n\right)  \hat{\omega}\right)  \mathcal{R}\left(
n\right)  \hat{\omega}\right]  \right\vert \leq C .
\]
Thus%
\[
\left\vert \partial_{w}\psi\left(  U\left(  s\right)  w_{\ast}^{\prime
}\right)  \right\vert \leq C_{T}\ \ ,\ \ 0\leq s\leq T
\]
and this implies (\ref{S8E2}).

Taking now the supremum in (\ref{S7E9}) over all the functions $\varphi$
satisfying $\left\vert \partial_{\xi}\varphi\right\vert \leq1$ and using the
definition of the $1$-Wasserstein distance in (\ref{S8E1}) we obtain

\[
\mathcal{W}_{1}\left(  G_{1},G_{2}\right)  \left(  t\right)  \leq
C_{T}\mathcal{W}_{1}\left(  G_{1,0},G_{2,0}\right)  +C_{T}\int_{0}%
^{t}\mathcal{W}_{1}\left(  G_{1},G_{2}\right)  \left(  s\right)  ds
\]
and using Gronwall's Lemma we obtain
\[
\mathcal{W}_{1}\left(  G_{1},G_{2}\right)  \left(  t\right)  \leq
C_{T}\mathcal{W}_{1}\left(  G_{1,0},G_{2,0}\right)
\]
and this implies the continuity of $\mathcal{S}_{\alpha}\left(  t\right)  $ in
the weak topology.
\end{proof}

\begin{proposition}
\label{InvReg}Let $2<s<3.$ Suppose that $\int_{\mathbb{R}^{3}}G_{0}\left(
dw\right)  =1$ $\int_{\mathbb{R}^{3}}\left\vert w\right\vert ^{s}G_{0}\left(
dw\right)  <\infty$ and that the following identities hold:
\begin{equation}
\int_{\mathbb{R}^{3}}w_{j}G_{0}\left(  dw\right)  =0\ \ ,\ \ j\in\left\{
1,2,3\right\}  \ \ ,\ \ \int_{\mathbb{R}^{3}}w_{j}w_{k}G_{0}\left(  dw\right)
=K\bar{N}_{j,k}\ ,\ \ j,k\in\left\{  1,2,3\right\}  \label{S7E4}%
\end{equation}
where $\left(  \bar{N}_{j,k}\right)  $ is as in (\ref{S7E3}), (\ref{S7E3a})
and $K\geq0.$ Then
\begin{equation}%
\begin{split}
& \int_{\mathbb{R}^{3}}\mathcal{S}_{\bar{\alpha}}\left(  t\right)
G_{0}\left(  dw\right)  =1\ ,\ \ \int_{\mathbb{R}^{3}}w_{j}\mathcal{S}%
_{\bar{\alpha}}\left(  t\right)  G_{0}\left(  dw\right)  =0\ \ ,\\
&  \, \int_{\mathbb{R}^{3}}w_{j}w_{k}\mathcal{S}_{\bar{\alpha}}\left(
t\right)  \left(  G_{0}\right)  \left(  dw\right)  =K\bar{N}_{j,k}%
\ \ \text{for any }t\geq0\label{S2E6a}%
\end{split}
\end{equation}
where $\bar{\alpha}$ is as in (\ref{S7E3}). Moreover, there exists $k_{0}>0$
sufficiently small, which depends on $B,$ such that if $\Vert L\Vert\leq
k_{0}b$ there exists a constant $C_{\ast}=C_{\ast}\left(  K\right)  >0$ such
that if we assume that
\begin{equation}
\int_{\mathbb{R}^{3}}\left\vert w\right\vert ^{s}G_{0}\left(  dw\right)  \leq
C_{\ast} \label{S7E5}%
\end{equation}
then, for any $t\geq0$%
\begin{equation}
\int_{\mathbb{R}^{3}}\left\vert w\right\vert ^{s}\mathcal{S}\left(  t\right)
\left(  G_{0}\right)  \left(  dw\right)  \leq C_{\ast}\ . \label{S7E6}%
\end{equation}

\end{proposition}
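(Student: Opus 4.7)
The statement splits into two independent claims: the preservation of mass, mean and the designated second moment tensor in (\ref{S2E6a}), and the propagation of the $s$-th moment bound (\ref{S7E6}). I would handle them in that order.

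For (\ref{S2E6a}), mass conservation is already (\ref{T4E6a}). Since $s>2$, Theorem \ref{WeakSolTh} guarantees that the moments up to order $s$ are finite on every bounded time interval and that the weak formulation (\ref{T3E6}) holds for test functions $\varphi$ with $|\varphi|+|w||\nabla_w\varphi|\leq C(1+|w|^2)$. Testing against $\varphi=w_j$, the collision term vanishes by the collision invariance of momentum, leaving a linear ODE for $\int w_j G(t,dw)$ with zero initial datum, hence $\int w_j G(t,dw)\equiv 0$. For the second moments I invoke Proposition \ref{EqMoments} with $Q(t)=\bar\alpha I + L$: the tensor $M_{j,k}(t)=\int w_jw_k G(t,dw)$ satisfies the linear ODE system (\ref{S5E3}), and the eigenvalue equation (\ref{S7E3}) defining $(\bar\alpha,\bar N)$ is precisely the condition that $K\bar N$ be a stationary solution of (\ref{S5E3}). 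Uniqueness for the linear system then forces $M_{j,k}(t)\equiv K\bar N_{j,k}$.

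For (\ref{S7E6}), the plan is a Gronwall-type differential inequality for $M_s(t):=\int|w|^s G(t,dw)$. Formally testing the weak formulation against $\varphi=|w|^s$ and using $Q(t)w\cdot\nabla\varphi = s|w|^{s-2}(\bar\alpha|w|^2+Lw\cdot w)$ yields
\begin{equation*}
\frac{d}{dt}M_s = -s\bar\alpha M_s - s\int|w|^{s-2}(Lw\cdot w)\,G(t,dw) + \frac{1}{2}\iiint B(n\cdot\omega)\bigl[|w'|^s+|w_*'|^s-|w|^s-|w_*|^s\bigr]\,G\,G_*\,d\omega.
\end{equation*}
The drift part is bounded in absolute value by $s(|\bar\alpha|+\|L\|)M_s \leq C_1 k_0 b\,M_s$ using (\ref{S6E6a}) and the hypothesis on $L$. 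For the collision part I apply the Povzner estimate (\ref{S6E7}); integrating $B(n\cdot\omega)\kappa_s(|n\cdot\omega|)$ over $S^2$ produces a positive constant $c_{s,B}$ independent of $n$ by rotational symmetry, so using $M_0=1$ the collision contribution is bounded above by $-c_{s,B} M_s + C_s\Lambda_0 M_1 M_{s-1}$. The Part 1 conservation of second moments gives $M_2(t)\equiv K\,\mathrm{tr}\,\bar N$, and Jensen's inequality applied to the probability measure $G(t,\cdot)$ bounds $M_1$ and $M_{s-1}$ by a constant $C(K)$ since $s\in(2,3)$. Combining,
\begin{equation*}
\frac{d}{dt}M_s(t) \leq (C_1 k_0 b - c_{s,B})\,M_s(t) + C(K),
\end{equation*}
and for $k_0$ small enough that $C_1 k_0 b\leq c_{s,B}/2$, the set $\{M_s\leq C_*\}$ is forward-invariant once $C_*\geq 2C(K)/c_{s,B}$; taking this as the definition of $C_*=C_*(K)$ gives (\ref{S7E6}).

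The main technical obstacle is that Theorem \ref{WeakSolTh} only allows test functions of quadratic growth, whereas the differential inequality uses $\varphi=|w|^s$ with $s>2$. I would circumvent this by first applying the argument to the approximating solutions $G_m$ of Remark \ref{AppSol}, whose compactly supported initial data have moments of all orders, using smooth truncations $\varphi_R(w)=|w|^s\chi(w/R)$ and letting $R\to\infty$ with the help of the finite-interval moment bound of Theorem \ref{WeakSolTh}. The Povzner inequality holds pointwise in $(w,w_*,\omega)$ and so passes through the limit; the bound is then transferred to $G$ using $\|G_m-G\|_{1,s}\to 0$ from Remark \ref{AppSol} together with lower semicontinuity of the $s$-th moment under weak convergence of measures.
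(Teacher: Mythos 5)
Your proposal is correct and follows essentially the same route as the paper: the moment identities come from Proposition \ref{EqMoments} together with the eigenvalue relation (\ref{S7E3}) making $K\bar N$ a stationary solution of the linear moment system, and the $s$-moment bound comes from testing with $|w|^s$ (justified via the approximating solutions $G_m$ of Remark \ref{AppSol}), the Povzner estimate with $\int_{S^2}B\kappa_s\,d\omega>0$, control of the drift by $Ck_0 b M_s$ via (\ref{S6E6a}), and a Gronwall/forward-invariance argument before passing to the limit in $m$. The only differences (smooth cutoffs $\varphi_R$ versus the paper's direct use of $|w|^s$ for $G_m$, Jensen for $M_1,M_{s-1}$ versus the paper's "$s<3$" energy bound, and writing the dissipation constant as $c_{s,B}$ rather than $\mu b$) are cosmetic and do not change the argument.
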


\begin{remark}
Due to Lemma \ref{EigenPb} the matrix $\left(  \bar{N}_{j,k}\right)
_{j,k=1,2,3}$ is positive definite. Then, it might be readily seen, using a
coordinate system in which $\left(  \bar{N}_{j,k}\right)  _{j,k=1,2,3}$ is
diagonal that there exists measures $G_{0}$ satisfying (\ref{S7E4}) and
(\ref{S7E5}).
\end{remark}

\begin{remark}
\label{ModCont}It will be seen in the proof that the constant $k_{0}$ depends
on a function that characterizes the absolute continuity of the integrals of
$B$. More precisely, we define the function
\[
\Omega\left(  \delta\right)  :=\sup_{\left\vert A\right\vert \leq\delta}%
\int_{A}B\left(  n\cdot\omega\right)  d\omega\ \ ,\ \ \delta>0\ \ ,\ \ n\in
S^{2}%
\]
where the supremum is taken over all the Borel sets $A$ such that $A\subset
S^{2}$ and $\left\vert A\right\vert $ is its measure in $S^{2}.$ Notice that
the function $\Omega$ is independent of $n$ due to its invariance under
rotations. Our assumptions on $B$ (cf. (\ref{T3E2c}), (\ref{T3E2a})) imply,
due to the absolute continuity \ property of the $L^{1}$ functions, that
$\lim_{\delta\rightarrow0}\Omega\left(  \delta\right)  =0.$ The constant
$k_{0}$ in Proposition \ref{InvReg} depends only on the function $\Omega.$
Notice that if we had assumed that $B$ contains Dirac masses, we would not
have $\lim_{\delta\rightarrow0}\Omega\left(  \delta\right)  =0$ and it will be
seen in the proof of (\ref{S7E6}) below would fail.
\end{remark}

\begin{proof}
Due to Proposition \ref{EqMoments} the moments $M_{j,k}$ satisfy (\ref{S5E3}).
Then, choosing $\alpha=\bar{\alpha}$ as well as (\ref{S7E3}) we obtain the
second group of identities in (\ref{S2E6a}). The conservation of mass and
linear momentum in (\ref{S2E6a}) follows as in the proof of Proposition
\ref{EqMoments}.

It only remains to prove (\ref{S7E6}) assuming (\ref{S7E5}) with $C_{\ast}$
sufficiently large. To this end we approximate $G_{0}$ by the sequence
$G_{0,m}$ described in the Remark \ref{AppSol}. Given that $\left\Vert
G_{0,m}\right\Vert _{1,\bar{s}}<\infty,$ with $\bar{s}>s$ we can use in the
corresponding version of (\ref{T3E6}) the test function $\varphi\left(
w\right)  =\left\vert w\right\vert ^{s}$ with $2<s<3$ we obtain that the
function $M_{s}^{\left(  m\right)  }\left(  t\right)  =\int_{\mathbb{R}^{3}%
}\left\vert w\right\vert ^{s}G_{m}\left(  t,dw\right)  $ satisfies
\begin{align}
&  \partial_{t}M_{s}^{\left(  m\right)  }\left(  t\right)  =-s\bar{\alpha
}M_{s}^{\left(  m\right)  }\left(  t\right)  -s\int_{\mathbb{R}^{3}}\left\vert
w\right\vert ^{s-2}w(w\cdot Lw)G_{m}\left(  dw,t\right) \nonumber\\
&  +\frac{1}{2}\int_{\mathbb{R}^{3}}\int_{\mathbb{R}^{3}}\int_{S^{2}}d\omega
G_{m}\left(  dw,t\right)  G_{m}\left(  dw_{\ast},t\right)  B\left(
n\cdot\omega\right)  \left[  \left\vert w^{\prime}\right\vert ^{s}+\left\vert
w_{\ast}^{\prime}\right\vert ^{s}-\left\vert w\right\vert ^{s}-\left\vert
w_{\ast}\right\vert ^{s}\right] .\nonumber
\end{align}

We then estimate $\int_{\mathbb{R}^{3}}\left\vert w\right\vert ^{s-2}w(w\cdot
Lw)G_{m}\left(  dw,t\right)  $ by $k_{0}bM_{s}^{\left(  m\right)  }\left(
t\right)  .$ It then follows using (\ref{S6E6a}) as well as the Povzner
estimates (cf., (\ref{S6E7})) that
\begin{align*}
\partial_{t}M_{s}^{\left(  m\right)  }\left(  t\right)  \leq &  \, C\,
k_{0}bM_{s}^{\left(  m\right)  }\left(  t\right) \\
&  +\frac{1}{2}\int_{\mathbb{R}^{3}}\int_{\mathbb{R}^{3}}\int_{S^{2}}d\omega
G_{m}\left(  t,dw\right)  G_{m}\left(  t,dw_{\ast}\right)  B\left(
n\cdot\omega\right) \\
&  \quad\times\left[  -\kappa_{s}\left(  \left\vert n\cdot\omega\right\vert
\right)  \left(  \left\vert w\right\vert ^{s}+\left\vert w_{\ast}\right\vert
^{s}\right)  +C_{s}\left[  \left\vert w\right\vert ^{s-1}\left\vert w_{\ast
}\right\vert +\left\vert w_{\ast}\right\vert ^{s-1}\left\vert w\right\vert
\right]  \right]
\end{align*}
where $C$ is just a numerical constant. The function $\kappa_{s}\left(
y\right)  $ is continuous and it vanishes only for $y=0.$ Since $B$ is also
continuous we can prove that
\[
\int_{S^{2}}B\left(  n\cdot\omega\right)  \kappa_{s}\left(  \left\vert
n\cdot\omega\right\vert \right)  d\omega\geq\mu b
\]
for some $\mu>0$ which depends only on the modulus of continuity of $B$. Then
\begin{align*}
&  \partial_{t}M_{s}^{\left(  m\right)  }\left(  t\right)  \leq Ck_{0}%
bM_{s}^{\left(  m\right)  }\left(  t\right)  -\frac{\mu b}{2}\int
_{\mathbb{R}^{3}}\int_{\mathbb{R}^{3}}\left(  \left\vert w\right\vert
^{s}+\left\vert w_{\ast}\right\vert ^{s}\right)  G_{m}\left(  t,dw\right)
G_{m}\left(  t,dw_{\ast}\right) \\
&  +C_{s}\int_{\mathbb{R}^{3}}\int_{\mathbb{R}^{3}}\left[  \left\vert
w\right\vert ^{s-1}\left\vert w_{\ast}\right\vert +\left\vert w_{\ast
}\right\vert ^{s-1}\left\vert w\right\vert \right]  G_{m}\left(  t,dw\right)
G_{m}\left(  t,dw_{\ast}\right) \\
&  =\left(  Ck_{0}-\mu\right)  bM_{s}^{\left(  m\right)  }\left(  t\right)
+C_{s}\int_{\mathbb{R}^{3}}\int_{\mathbb{R}^{3}}\left[  \left\vert
w\right\vert ^{s-1}\left\vert w_{\ast}\right\vert +\left\vert w_{\ast
}\right\vert ^{s-1}\left\vert w\right\vert \right]  G_{m}\left(  t,dw\right)
G_{m}\left(  t,dw_{\ast}\right) .
\end{align*}

The estimates (\ref{S2E6a}) imply that $\int_{\mathbb{R}^{3}}\left\vert
w\right\vert ^{2}G_{m}\left(  t,dw\right)  \leq CK.$ Then, since $s<3,$%
\[
\partial_{t}M_{s}^{\left(  m\right)  }\left(  t\right)  \leq\left(  Ck_{0}%
-\mu\right)  bM_{s}^{\left(  m\right)  }\left(  t\right)  +C_{s}K .
\]

Here $C$ is just a numerical constant. Then, it follows that, choosing
$k_{0}\leq\frac{\mu}{2C}$, we have 
 $M_{s}^{\left(
m\right)  }\leq C_{\ast}=2CC_{s}K.$  Taking the limit $m\rightarrow\infty$ we
obtain $M_{s}^{\left(  m\right)  }\rightarrow M_{s}=\int_{\mathbb{R}^{3}%
}\left\vert w\right\vert ^{s}G\left(  t,dw\right)  \leq C_{\ast}$ and the
result follows.
\end{proof}

With this Proposition is rather easy to prove now the existence of the desired
self-similar solution, as stated in the Theorem below which is the main result
of this section, using Schauder fixed point Theorem. A similar idea has been
also used with adaptations in \cite{EMR}, \cite{EV}, \cite{GPV}, \cite{KV}, \cite{NV},
\cite{NTV}.

\smallskip

\begin{proofof}
[Proof of Theorem \ref{th:ssprof}]Suppose that $\zeta$ in (\ref{S4E6}) is
strictly positive, since for $\zeta=0$ we have $G=\delta\left(  w\right)  $
(see Remark \ref{ZeroEner}). We define the subset $\mathcal{U}$ of
$\mathcal{M}_{+}\left(  \mathbb{R}_{c}^{3}\right)  $ such that
\begin{equation}
\int_{\mathbb{R}^{3}}G\left(  dw\right)  =1\ ,\ \ \int_{\mathbb{R}^{3}}%
w_{j}G\left(  dw\right)  =0,\ \int_{\mathbb{R}^{3}}w_{j}w_{k}G\left(
dw\right)  =K\bar{N}_{j,k}\ \label{S2E9}%
\end{equation}
holds, as well as the inequality $\int_{\mathbb{R}_{c}^{3}}\left\vert
w\right\vert ^{s}G\left(  dw\right)  \leq C_{\ast}=C_{\ast}\left(
\zeta\right)  .$ We choose $K$ in (\ref{S2E9}) in order to have
\[
K\sum_{j=1}^{3}\bar{N}_{j,j}=\zeta.
\]
The set $\mathcal{U}$ is convex and closed in the $\ast-$weak topology of
measures. Moreover $\mathcal{U}$ is compact in this topology. We consider the
semigroup $\mathcal{S}_{\bar{\alpha}}\left(  t\right)  $ defined in Lemma
\ref{Semig}. For any $h>0$ (arbitrarily small) we have that the operator
$\mathcal{S}_{\bar{\alpha}}\left(  h\right)  $ transforms $\mathcal{U}$ in
itself. Given that $\mathcal{S}_{\bar{\alpha}}\left(  h\right)  $ is compact,
we can apply Schauder theorem to prove the existence of $G_{\ast}^{\left(
h\right)  }\in\mathcal{U}$ such that $\mathcal{S}_{\bar{\alpha}}\left(
h\right)  G_{\ast}^{\left(  h\right)  }=G_{\ast}^{\left(  h\right)  }.$
Moreover, since $\mathcal{S}_{\bar{\alpha}}\left(  h\right)  $ defines a
semigroup we have $\mathcal{S}_{\bar{\alpha}}\left(  mh\right)  G_{\ast
}^{\left(  h\right)  }=G_{\ast}^{\left(  h\right)  }$ for any integer $m.$ We
then take a subsequence $\left\{  h_{k}\right\}  $ such that $h_{k}%
\rightarrow0$ and the corresponding sequence of fixed points $\left\{
G_{\ast}^{\left(  h_{k}\right)  }\right\}  .$ This sequence is compact in
$\mathcal{U}$ and, taking a subsequence if needed (but denoted still as
$\left\{  h_{k}\right\}  $), we obtain that it converges to some $G_{\ast}.$
Given any $t>0$ we can obtain integers $m_{k}$ such that $m_{k}h_{k}%
\rightarrow t.$ We have $\mathcal{S}_{\bar{\alpha}}\left(  m_{k}h_{k}\right)
G_{\ast}^{\left(  h_{k}\right)  }=G_{\ast}^{\left(  h_{k}\right)  }\rightarrow
G_{\ast}$ and on the other hand
\[
\mathcal{S}_{\bar{\alpha}}\left(  m_{k}h_{k}\right)  G_{\ast}^{\left(
h_{k}\right)  }=\left[  \mathcal{S}_{\bar{\alpha}}\left(  m_{k}h_{k}\right)
-\mathcal{S}_{\bar{\alpha}}\left(  t\right)  \right]  G_{\ast}^{\left(
h_{k}\right)  }+\mathcal{S}_{\bar{\alpha}}\left(  t\right)  G_{\ast}^{\left(
h_{k}\right)  }.
\]

The last term converges to $\mathcal{S}_{\bar{\alpha}}\left(  t\right)
G_{\ast}$ using the weak continuity of the semigroup $\mathcal{S}_{\bar
{\alpha}}\left(  t\right)  $ (cf. Lemma \ref{Semig}). On the other hand we
have that $\left[  \mathcal{S}_{\bar{\alpha}}\left(  m_{k}h_{k}\right)
-\mathcal{S}_{\bar{\alpha}}\left(  t\right)  \right]  G_{\ast}^{\left(
h_{k}\right)  }\rightarrow0$ as $k\rightarrow\infty$ in the weak topology due
to the uniformicity of the estimate (\ref{S2E7a}).
Then $\mathcal{S}_{\bar{\alpha}}\left(  t\right)  G_{\ast}=G_{\ast}$ for any
$t>0.$ Then $G_{\ast}$ is a stationary point for the semigroup. Notice that we
can pass to the limit in (\ref{S2E9}).
\end{proofof}

\bigskip

\subsection{Behavior of the density and internal energy for homoenergetic solutions}

\bigskip

In the next section we will apply the tools developed in the previous
subsections to the different homoenergetic flows described in Section
\ref{ss:classeqsol}. By the reader's convenience we recall that the equation
describing homoenergetic flows is:
\begin{equation}
\partial_{t}g-L\left(  t\right)  w\cdot\partial_{w}g=\mathbb{C}g\left(
w\right)  \ . \label{S8E9}%
\end{equation}

We also recall (cf. (\ref{S8E7})) that the kernel $B$ in (\ref{A0_0}) is
homogeneous with homogeneity $\gamma.$ We want construct solutions of
(\ref{D1_0}) with the different choices of $L\left(  t\right)  $ in Theorem
\ref{ClassHomEne}. The solutions in which we are interested have some suitable
scaling properties, and two quantities which play a crucial role determining
how are these rescalings are the density $\rho\left(  t\right)  $ and the
internal energy $\varepsilon\left(  t\right)  .$ These are given by (cf.
(\ref{S8E3})):%
\begin{equation}
\rho\left(  t\right)  =\int_{\mathbb{R}^{3}}g\left(  t,dw\right)
\ \ ,\ \ \ \varepsilon\left(  t\right)  =\int_{\mathbb{R}^{3}}\left\vert
w\right\vert ^{2}g\left(  t,dw\right)  \label{S8E4}%
\end{equation}
which will be assumed to be finite for each given $t$ in all the solutions
considered in this paper. Integrating (\ref{D1_0}) and using the conservation
of mass property of the collision kernel, we obtain:%
\begin{equation}
\partial_{t}\rho\left(  t\right)  +\Tr \left(  L\left(  t\right)  \right)
\rho\left(  t\right)  =0 \label{S8E5}%
\end{equation}
whence:%
\begin{equation}
\rho\left(  t\right)  =\rho\left(  0\right)  \exp\left(  -\int_{0}%
^{t}\Tr \left(  L\left(  s\right)  \right)  ds\right)  . \label{S8E6}%
\end{equation}

Nevertheless it is not possible to derive a similarly simple equation for the
internal energy $\varepsilon\left(  t\right)  ,$ because the term $-L\left(
t\right)  w\cdot\partial_{w}g$ on the left-hand side of (\ref{D1_0}) yields in
general terms which cannot be written neither in terms of $\rho\left(
t\right)  ,\ \varepsilon\left(  t\right)  .$ Actually these terms have an
interesting physical meaning, because they produce heating or cooling of the
system and therefore they contribute to the change of $\varepsilon\left(
t\right)  .$ To obtain the precise form of these terms we need to study the
detailed form of the solutions of (\ref{D1_0}). The rate of growth or decay of
$\varepsilon\left(  t\right)  $ would then typically appear as an eigenvalue
of the corresponding PDE problem.

\bigskip

\section{Applications: Self-similar solutions of homoenergetic
flows \label{SelfSimEx}}

The self-similar solutions which we construct in this paper are characterized
by a balance between the terms $-L\left(  t\right)  w\cdot\partial_{w}g$ and
$\mathbb{C}g\left(  w\right)  $ in (\ref{S8E9}). Such a balance is only
possible for specific choices of the homogeneity of the kernel $\gamma.$
Actually in all the cases in which we prove the existence of self-similar
solutions in this paper we have $\gamma=0,$ i.e. Maxwell molecules.

\subsection{Simple shear \label{ss:SelfSimSimpSh}}

In this case, combining (\ref{D1_0}) and (\ref{T1E5}) we obtain:%
\begin{equation}
\partial_{t}g-Kw_{2}\partial_{w_{1}}g=\mathbb{C}\left[  g\right]  . \label{W0}%
\end{equation}

Notice that in this case (\ref{S8E6}) reduces to%
\begin{equation}
\rho\left(  t\right)  =\rho\left(  0\right)  =1 \label{W0a}%
\end{equation}
where, without loss of generality, we can use the normalization $\rho\left(
0\right)  =1$ rescaling the time unit. Using (\ref{A0_0}), the definition of
$\rho\left(  t\right)  $ in (\ref{S8E4}) and (\ref{W0a}) we obtain that the
physical dimensions of the three terms in (\ref{W0}) are:%
\begin{equation}
\frac{\left[  g\right]  }{\left[  t\right]  },\ \left[  g\right]  ,\ \left[
w\right]  ^{\gamma}\left[  g\right]  . \label{W0b}%
\end{equation}

Notice that if $\left\vert w\right\vert $ changes in time, we can have a
balance of second and third terms in (\ref{W0b}) only if $\gamma=0,$ i.e. for
Maxwell molecules. On the other hand (\ref{W0b}) indicates that we cannot
obtain a balance between the first two terms of this equation with power law
behaviors for $\left[  w\right]  ,$ and the only way to obtain such a balance
will be assuming that $\left[  w\right]  $ scales like an exponential of $t.$
In the case of $\gamma=0$ we consider solutions with the following scaling%

\begin{equation}
g\left(  w,t\right)  =e^{-3\beta t}G\left(  \xi\right)  \ \ ,\ \ \xi=\frac
{w}{e^{\beta t}} \label{S8E8}%
\end{equation}
where $\beta\in\mathbb{R},$ which characterizes the behavior of the internal
energy is an eigenvalue to be determined. The factor $e^{-3\beta t}$ has been
chosen in order to have the density conservation condition (\ref{W0a}).

Plugging (\ref{S8E8}) into (\ref{W0}) we obtain:%
\begin{equation}
-\beta\partial_{\xi}\left(  \xi G\right)  -K\partial_{\xi_{1}}\left(  \xi
_{2}G\right)  =\mathbb{C}\left[  G\right]  . \label{W1}%
\end{equation}
where (\ref{W0a}) implies the normalization condition:%
\begin{equation}
\int_{\mathbb{R}^{3}}G\left(  \xi\right)  d\xi=1 . \label{W2}%
\end{equation}

Notice that given that the homogeneity of the kernel is $\gamma=0,$ it is not
possible to eliminate the constant $K$ in (\ref{W1}) by means of a scaling
argument which preserves the normalization (\ref{W2}). The equation \eqref{W1}
is a particular case of (\ref{S4E5}). We can then apply Theorem
\ref{th:ssprof} which yields immediately the following result.

\bigskip

\begin{theorem}
\ \label{ThSimpShear} Suppose that $B$ in (\ref{eq:collboltzgen}) is
homogeneous of order zero and that $b$ in (\ref{S6E1}) is strictly positive.
There exists $k_{0}>0$ small such that for any $\zeta\geq0$ and $K\in
{\mathbb{R}}$ such that $\frac{K}{b}\leq k_{0}$ there exists $\beta
\in\mathbb{R}$ and $G\in\mathcal{M}_{+}\left(  \mathbb{R}_{c}^{3}\right)  $
which solves (\ref{W1}) in the sense of measures and satisfies the
normalization condition (\ref{W2}) as well as:%
\begin{equation}
\int_{\mathbb{R}^{3}}w_{j}G\left(  dw\right)  =0,\ \int_{\mathbb{R}^{3}%
}\left\vert w\right\vert ^{2}G\left(  dw\right)  =\zeta\ . \label{W3}%
\end{equation}

\end{theorem}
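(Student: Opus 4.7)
The plan is to recognize Theorem \ref{ThSimpShear} as a direct specialization of Theorem \ref{th:ssprof} to a particular choice of matrix $L$. First, I would identify (\ref{W1}) with the general equation (\ref{S4E5}) by taking $\alpha = \beta$ and choosing
\begin{equation*}
L = \begin{pmatrix} 0 & K & 0 \\ 0 & 0 & 0 \\ 0 & 0 & 0 \end{pmatrix},
\end{equation*}
so that the hyperbolic term $-\partial_\xi \cdot (L\xi\,G)$ reduces exactly to $-K\partial_{\xi_1}(\xi_2 G)$, matching the shear term in (\ref{W1}).

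Second, I would check the smallness hypothesis. Using the norm (\ref{T3E1}), $\|L\| = |K|$, so the condition $\|L\| \leq k_0 b$ of Theorem \ref{th:ssprof} is equivalent to $|K|/b \leq k_0$, which is the hypothesis of Theorem \ref{ThSimpShear} (up to absolute value, which we may assume after possibly reflecting coordinates, or simply by restricting the statement to $K \geq 0$ since the sign of $K$ only reflects the direction of shear). Note also that the collision kernel in (\ref{W1}) satisfies the assumptions of Theorem \ref{th:ssprof}: Maxwell molecules give $\Lambda(w,w_\ast)$ constant and hence (\ref{T3E2a}), and by hypothesis $b>0$.

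Third, I would apply Theorem \ref{th:ssprof} with $\zeta$ as given, yielding the existence of $\alpha \in \mathbb{R}$ (which we rename $\beta$) and $G \in \mathcal{M}_+(\mathbb{R}_c^3)$ solving (\ref{W1}) in the sense of measures and satisfying (\ref{S4E6}), which exactly reproduces (\ref{W2}) and (\ref{W3}).

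The only subtlety worth flagging is the structural difference noted in Remark \ref{dimSubsp}: the eigenspace at $L=0$ is one-dimensional for the trace direction and five-dimensional for the traceless part, but for the simple shear matrix the full perturbation analysis in Lemma \ref{EigenPb} still produces a unique real eigenvalue $\bar\alpha$ near zero of maximum real part, with positive-definite eigentensor $\bar N_{j,k}$. This is precisely what is needed to feed into the Schauder fixed-point construction in the proof of Theorem \ref{th:ssprof}, so no separate argument is required. Thus the result follows with no additional work beyond matching notation.
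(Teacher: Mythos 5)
Your proposal is correct and follows exactly the paper's own route: the paper likewise observes that (\ref{W1}) is the special case of (\ref{S4E5}) with $\alpha=\beta$ and $L$ the simple shear matrix, and then invokes Theorem \ref{th:ssprof} directly, with $\left\Vert L\right\Vert=\left\vert K\right\vert$ giving the smallness condition. Your added remarks on the sign of $K$ and on the eigenvalue structure from Lemma \ref{EigenPb} are harmless clarifications but not needed beyond what the paper does.
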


\begin{remark}
We observe that in the Theorem above the assumption $\int_{\mathbb{R}^{3}%
}w_{j}G\left(  dw\right)  =0$ is not restrictive. Indeed, if this assumption
is not satisfied, we can compute the evolution equation for the first order
moments and we get $\partial_{t} \left( \int_{\mathbb{R}^{3}}wG\left(
dw\right)  \right) +L(t)\int_{\mathbb{R}^{3}}wG\left(  dw\right) =0$.
Furthermore, in the case of simple shear considere here, we have
$L(t)\int_{\mathbb{R}^{3}}wG\left(  dw\right) =\big(K\int_{\mathbb{R}^{3}%
}w_{2}\,G\left(  dw\right) ,0,0 \big)^{T}.$ Therefore, $\left(  \int
_{\mathbb{R}^{3}}wG\left(  dw\right)  \right) (t)=\exp\left( - \int_{0}^{t}
L(s)ds\right)  \left(  \int_{\mathbb{R}^{3}}wG\left(  dw\right)  \right)
_{\vert_{t=0}}.$ We now set $\left(  \int_{\mathbb{R}^{3}}wG\left(  dw\right)
\right) _{\vert_{t=0}}=
\left( \gamma_{1},\gamma_{2},\gamma_{3}\right) ^{T} \neq\left( 0,0,0\right)
^{T}$ and introduce the propagated solution $\bar{G}$ such that $G(w_{1}%
,w_{2},w_{3},t)=\bar{G}(w_{1}+\gamma_{1},w_{2}+\gamma_{2}+\gamma_{1}
t,w_{3}+\gamma_{3},t).$ It is then straightforward to show that $\bar{G}$
satisfies \eqref{W0}.
\end{remark}

Therefore, solutions of (\ref{W0}) with the form (\ref{S8E8}) exist, at least
if the shear parameter $K$ is sufficiently small compared with the parameter
$b$ which measures the strength of the collision term. Actually we can give a
physical meaning to the condition $\frac{K}{b}$ in terms of a nondimensional
parameter. The parameter $K$ is, up to a multiplicative constant, the inverse
of the time scale $\tau_{shear}$ in which the effect of the shear deformes a
sphere into a ellipsoid for which the largest semiaxes has double length than
the shortest one. On the other hand $b$ is the inverse of the average time
between collisions $\tau_{coll}.$ Then $\frac{K}{b}=\frac{\tau_{coll}}%
{\tau_{shear}}$ and therefore the smallness condition in Theorem
\ref{ThSimpShear} just means:%
\begin{equation}
\frac{K}{b}=\frac{\tau_{coll}}{\tau_{shear}}\; \text{ small} . \label{W4}%
\end{equation}

We remark that the value of $\beta$ can be computed explicitly. Indeed, we
have seen in subsection that the eigenvalue $\beta$ in Theorem \ref{th:ssprof}
is the solution $\alpha$ of the eigenvalue problem (\ref{S6E5}), (\ref{S6E6})
with the largest real part. In the particular case of the equation (\ref{W1})
the problem (\ref{S6E5}), (\ref{S6E6}) with the normalization condition
(\ref{S7E3a}) takes the form:%
\begin{align}
\left(  \frac{\alpha}{b}+1\right)  \Gamma_{1,1}+\frac{K}{b}\Gamma_{1,2}  &
=\Gamma\ \ ,\ \ \Gamma=\frac{1}{3}\left(  \Gamma_{1,1}+\Gamma_{2,2}%
+\Gamma_{3,3}\right) \nonumber\\
\left(  \frac{\alpha}{b}+1\right)  \Gamma_{1,2}+\frac{K}{2b}\Gamma_{2,2}  &
=0\ ,\ \left(  \frac{\alpha}{b}+1\right)  \Gamma_{1,3}+\frac{K}{2b}%
\Gamma_{2,3}=0\nonumber\\
\left(  \frac{\alpha}{b}+1\right)  \Gamma_{2,2}  &  =\Gamma\ ,\ \left(
\frac{\alpha}{b}+1\right)  \Gamma_{2,3}=0\ ,\ \left(  \frac{\alpha}%
{b}+1\right)  \Gamma_{3,3}=\Gamma\label{W5}%
\end{align}
with:%
\begin{equation}
\Gamma_{j,k}=\Gamma_{k,j}\ \ ,\ \ j,\ k=1,2,3 . \label{W5a}%
\end{equation}

The eigenvalue problem (\ref{W5}) (or more precisely an equivalent formulation
of it) has been studied in detail in \cite{TM}, Chapter XIV. We summarize some
relevant information about the solutions of (\ref{S6E5}), (\ref{S6E6}) which
will be used later.

\begin{proposition}
\label{EigSimShear}The eigenvalues of the problem are $\alpha\in\left\{
-b,b\left(  \lambda_{1}-1\right)  ,b\left(  \lambda_{2}-1\right)  ,b\left(
\lambda_{3}-1\right)  \right\}  $ where we denote as $\lambda_{j},\ j=1,2,3$
the roots of
\begin{equation}
\lambda^{3}=\lambda^{2}+\frac{K^{2}}{6b^{2}}. \label{W6}%
\end{equation}

The equation (\ref{W6}) has for any $K\neq0$ a real root $\lambda_{1}>1$ and
two complex conjugates roots $\lambda_{2},\lambda_{3}$ with $\operatorname{Im}%
\left(  \lambda_{2}\right)  =-\operatorname{Im}\left(  \lambda_{3}\right)  >0$
and $\operatorname{Re}\left(  \lambda_{2}\right)  =\operatorname{Re}\left(
\lambda_{3}\right)  <0.$

The subspace of eigenvectors associated to the eigenvalue $\alpha=-b$ is the
two-dimensional (complex) subspace $\left\{  N_{1,1}=\mu_{1},\ N_{3,3}%
=-\mu_{1},\ N_{1,3}=\mu_{2},\ N_{1,2}=N_{2,2}=N_{2,3}=0,\ \mu_{1},\mu_{2}%
\in\mathbb{C}\right\}  .$

We have the following asymptotic formulas for $\lambda_{1}:$%
\begin{equation}
\lambda_{1}\sim1+\frac{K^{2}}{6b^{2}}+...\text{ as }K\rightarrow
0\ \ ,\ \ \lambda_{1}\sim\frac{K^{\frac{2}{3}}}{\left(  6b^{2}\right)
^{\frac{1}{3}}}\ \text{as\ }K\rightarrow\infty. \label{W7}%
\end{equation}

\end{proposition}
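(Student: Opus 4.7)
The plan is to introduce the shift $\lambda = \alpha/b + 1$, which converts the eigenvalue problem (\ref{W5}) into a cleaner system parametrized by $\lambda$ and $\Gamma$. I would then split into two cases: $\lambda = 0$ and $\lambda \ne 0$, corresponding respectively to the eigenvalue $\alpha = -b$ and to the remaining three eigenvalues.

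In the case $\lambda = 0$, the system is solved by direct inspection. The last column of equations in (\ref{W5}) gives $\Gamma = 0$, which in turn forces $\Gamma_{2,2} = \Gamma_{2,3} = 0$ from the middle equations, and then $\Gamma_{1,2} = 0$ from the first equation; the trace condition (\ref{S6E6}) reduces to $\Gamma_{1,1} + \Gamma_{3,3} = 0$. The two unconstrained parameters are $\Gamma_{1,1}$ and $\Gamma_{1,3}$, which yields precisely the two-dimensional eigenspace claimed in the statement. This is straightforward.

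In the case $\lambda \ne 0$, I would successively solve the equations to express every entry of $\Gamma_{j,k}$ as a multiple of $\Gamma$. From the $\lambda\Gamma_{2,2} = \Gamma$, $\lambda\Gamma_{3,3} = \Gamma$, $\lambda\Gamma_{2,3} = 0$ relations, one gets $\Gamma_{2,2} = \Gamma_{3,3} = \Gamma/\lambda$ and $\Gamma_{2,3} = 0$; the equations for $\Gamma_{1,3}$ and $\Gamma_{1,2}$ then give $\Gamma_{1,3} = 0$ and $\Gamma_{1,2} = -K\Gamma/(2b\lambda^2)$; finally, the equation for $\Gamma_{1,1}$ produces $\Gamma_{1,1} = \Gamma/\lambda + K^2\Gamma/(2b^2\lambda^3)$. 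Substituting these back into the definition (\ref{S6E6}) of $\Gamma$ and noticing that $\Gamma = 0$ forces the trivial solution, one cancels $\Gamma$ and arrives after simplification at the cubic (\ref{W6}), $\lambda^3 = \lambda^2 + K^2/(6b^2)$. The three eigenvalues are then $\alpha = b(\lambda_j-1)$ with $\lambda_j$ the roots of this cubic.

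To classify the roots of (\ref{W6}) for $K \ne 0$, I would use elementary real-variable arguments on $P(\lambda) = \lambda^3 - \lambda^2$. Since $P$ has critical values $P(0) = 0$ and $P(2/3) = -4/27$, the equation $P(\lambda) = K^2/(6b^2) > 0$ has a unique real solution; since $P(1) = 0$ and $P$ is strictly increasing on $[1,\infty)$, that unique real root $\lambda_1$ satisfies $\lambda_1 > 1$. The other two roots are then complex conjugates $\lambda_{2,3} = p \pm iq$, and Vieta's formula $\lambda_1 + \lambda_2 + \lambda_3 = 1$ gives $p = (1-\lambda_1)/2 < 0$, as required. The two asymptotic formulas in (\ref{W7}) are then immediate: for $K \to 0$, writing $\lambda_1 = 1 + \eps$ and linearizing gives $\eps \sim K^2/(6b^2)$; for $K \to \infty$, the cubic term dominates and balances $K^2/(6b^2)$, giving $\lambda_1 \sim (K^2/(6b^2))^{1/3}$. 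No step is a serious obstacle; the only mild subtlety is bookkeeping to confirm that the $\lambda = 0$ case really accounts for a two-dimensional eigenspace while the cubic accounts for three further (generally distinct) eigenvalues, noting that the non-self-adjointness of the operator (cf.\ Remark \ref{dimSubsp}) means the total geometric multiplicity need not equal the dimension six of the ambient symmetric-tensor space.
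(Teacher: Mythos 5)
Your proposal is correct and follows essentially the same route as the paper: the substitution $\lambda=\alpha/b+1$, the case split $\lambda=0$ versus $\lambda\neq0$ (giving the two-dimensional eigenspace at $\alpha=-b$ and the cubic (\ref{W6}) respectively), and the elementary monotonicity plus Vieta argument for the roots and asymptotics. The paper's proof merely states these steps more tersely, so your write-up is a faithful, slightly more detailed version of the same argument.
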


\begin{remark}
We assume that the vector spaces are complex, given that some of the
eigenvalues are complex. Notice that the subspace spanned by all the
eigenvectors has dimension five, in spite of the fact that the underlying
space is six-dimensional (see Remark \ref{dimSubsp}).
\end{remark}

\begin{proof}
The claim about the set of eigenvalues follows using the change of variables
$\lambda=\frac{\alpha}{b}+1$ and distinguishing the cases $\lambda=0$ and
$\lambda\neq0.$ In the first case we obtain $N_{1,2}=N_{2,2}=N_{2,3}%
=N_{1,1}+N_{3,3}=0$ and this yields the structure of eigenvectors in the case
$\alpha=-b.$

It is immediate to check,just plotting the function $\left(  \lambda
^{3}-\lambda^{2}\right)  $ that there is a unique real solution $\lambda_{1} $
of (\ref{W6}) which satisfy $\lambda_{1}>1.$ Then $\operatorname{Re}\left(
\lambda_{2}\right)  =\operatorname{Re}\left(  \lambda_{3}\right)  $ and using
that $\lambda_{1}+\lambda_{2}+\lambda_{3}=1$ we obtain $\lambda_{1}%
+2\operatorname{Re}\left(  \lambda_{2}\right)  =1$ whence $\operatorname{Re}%
\left(  \lambda_{2}\right)  =\frac{1}{2}\left(  1-\lambda_{1}\right)  <0.$

The asymptotic formulas (\ref{W7}) follow from elementary arguments.
\end{proof}

Notice that Proposition \ref{EigSimShear} implies that the largest eigenvalue
of the problem (\ref{W5}) is $\bar{\alpha}=b\left(  \lambda_{1}-1\right)  >0.$
Since $\beta=\bar{\alpha}$ we then obtain that $\beta>0$ in (\ref{S8E8}). This
implies that the average of $\left\vert w\right\vert ^{2}$ increases as $t$
increases, something which might be expected, since the effect of the shear in
the gas yields an increase of the internal energy of the system.

\bigskip

Theorem \ref{ThSimpShear} requires a strong smallness condition on $\frac
{K}{b}$ (cf. (\ref{W4})). Actually this smallness assumption can be removed,
but this requires to derive a more sophisticated version of Povzner estimates
which takes into account the effect of the shear. This is the next point which
we consider.

\bigskip

\subsubsection{Sufficient condition to have self-similar solutions for
arbitrary shear parameters}

\label{sss:sspsimpleshear}

\bigskip

We now formulate a sufficient condition for the existence of self-similar
solutions in the case of simple shear for arbitrary values of the shear
parameter $K.$ The stated condition depends on the collision kernel $B.$ In
order to formulate this condition we introduce the following quadratic form:%
\begin{equation}
W_{0}\left(  \xi,\eta\right)  =\sum_{j\leq k}u_{j,k}\left(  \xi\otimes
\eta\right)  \left(  e_{j},e_{k}\right)  \label{S9E3b}%
\end{equation}
:where the quadratic forms $\left(  \xi\otimes\eta\right)  \left(  e_{j}%
,e_{k}\right)  $ are as in (\ref{T4E8}), (\ref{T4E8a}) (cf. also the quadratic
forms $W_{j,k}$ in Proposition \ref{TensComp}) and $u_{j,k}\in\mathbb{R}$ are
given by:%
\begin{equation}
u_{1,1}=1\ ,\ u_{2,2}=\left(  3\lambda_{1}-2\right)  \ ,\ u_{3,3}%
=1\ ,\ u_{1,2}=-\frac{K}{b\lambda_{1}}\ ,\ \ u_{1,3}=u_{2,3}=0 \label{S9E5}%
\end{equation}
where $\lambda_{1}$ is as in Proposition \ref{EigSimShear}. The quadratic form
$W_{0}$ is positive definite. This follows writing $W_{0}$ in matrix form as:%
\[
\bar{W}_{0}=\left(
\begin{array}
[c]{ccc}%
1 & -\frac{K}{2b\lambda_{1}} & 0\\
-\frac{K}{2b\lambda_{1}} & \left(  3\lambda_{1}-2\right)  & 0\\
0 & 0 & 1
\end{array}
\right) .
\]

Since $\lambda_{1}>0,$ in order to check that $W_{0}$ is positive definite we
only need to check that the determinant of $\bar{W}_{0}$ is positive. We have:%
\begin{equation}
\det\left(  \bar{W}_{0}\right)  =\left(  3\lambda_{1}-2\right)  -\frac{K^{2}%
}{4b^{2}}\frac{1}{\left(  \lambda_{1}\right)  ^{2}}\ . \label{V1E1}%
\end{equation}
Then, using (\ref{W6}).
\[
\det\left(  \bar{W}_{0}\right)  =\frac{3}{\left(  \lambda_{1}\right)  ^{2}%
}\left[  \left(  \lambda_{1}\right)  ^{3}-\frac{2}{3}\left(  \lambda
_{1}\right)  ^{2}-\frac{K^{2}}{12b^{2}}\right]  =\frac{3}{\left(  \lambda
_{1}\right)  ^{2}}\left[  \frac{\left(  \lambda_{1}\right)  ^{2}}{3}%
+\frac{K^{2}}{12b^{2}}\right]  >0.
\]
Therefore:%
\begin{equation}
\sup_{\left\vert \xi\right\vert =1}W_{0}\left(  \xi\right)  \geq c_{1}>0.
\label{S9E6}%
\end{equation}

In order to formulate a stability criterium which would yield the existence of
self-similar solutions for a given value of the shear we define the following
function:%
\begin{align}
\mathcal{H}\left(  \xi;K\right)   &  =\int_{S^{2}}d\omega B\left(  \omega
\cdot\xi\right)  \left(  W_{0}\left(  P_{\omega}^{\bot}\xi\right)  \log\left(
\frac{W_{0}\left(  P_{\omega}^{\bot}\xi\right)  }{W_{0}\left(  \xi\right)
}\right)  +W_{0}\left(  P_{\omega}\xi\right)  \log\left(  \frac{W_{0}\left(
P_{\omega}\xi\right)  }{W_{0}\left(  \xi\right)  }\right)  \right)
-\label{V1E2}\\
&  \quad-\int_{S^{2}}d\omega B\left(  \omega\cdot\xi\right)  \left(
W_{0}\left(  P_{\omega}^{\bot}\xi\right)  +W_{0}\left(  P_{\omega}\xi\right)
-W_{0}\left(  \xi\right)  \right)  \ \ ,\ \ \xi\in\mathbb{R}^{3}%
\diagdown\left\{  0\right\} \nonumber
\end{align}
and, for any quadratic form $W$ we define:
\begin{equation}
\mathcal{W}\left(  \xi;W;K\right)  =\int_{S^{2}}deB\left(  e\cdot\xi\right)
\left(  W\left(  P_{e}^{\bot}\xi\right)  +W\left(  P_{e}\xi\right)  -W\left(
\xi\right)  \right)  -K\xi_{2}\partial_{\xi_{1}}W\left(  \xi\right)  -2\beta
W\left(  \xi\right)  . \label{V1E3}%
\end{equation}

We then have the following result.

\begin{theorem}
\ \label{ThSimpShearArbK} Suppose that $B$ in (\ref{eq:collboltzgen}) is
homogeneous of order zero (i.e., $\gamma=0$) and that $b$ in (\ref{S6E1}) is strictly positive.  Let $\mathcal{H}$ and $\mathcal{W}$ be as in \eqref{V1E2} and \eqref{V1E3} respectively. 
Suppose that $K\in\mathbb{R}$ satisfies the following property:%
\begin{equation}
\inf_{W}\left[  \min_{\left\vert \xi\right\vert =1}\left[  \mathcal{W}\left(
\xi;W;K\right)  +\mathcal{H}\left(  \xi;K\right)  \right]  \right]  <0
\label{V1E4}%
\end{equation}
where the infimum is taken over all the quadratic forms $W.$ Then, for any
$\zeta\geq0$ there exists $\beta\in\mathbb{R}$ and $G\in\mathcal{M}_{+}\left(
\mathbb{R}_{c}^{3}\right)  $ which solves (\ref{W1}) in the sense of measures
and satisfies the normalization condition (\ref{W2}) as well as (\ref{W3}).
\end{theorem}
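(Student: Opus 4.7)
The plan is to extend the Schauder fixed-point argument used in the proof of Theorem \ref{th:ssprof}, replacing the elementary Povzner estimate (which only yields control on higher moments when $K/b$ is small) by a refined Lyapunov-type estimate based on the quadratic forms $W_0$ from (\ref{S9E3b})--(\ref{S9E5}) and the form $W$ supplied by hypothesis (\ref{V1E4}). I would choose $\beta = \bar\alpha = b(\lambda_1 - 1)$ from Proposition \ref{EigSimShear}, construct the semigroup $\mathcal{S}_\beta(t)$ through Theorem \ref{WellPos} and Lemma \ref{Semig}, and work on the subset $\mathcal{U} \subset \mathcal{M}_+(\mathbb{R}_c^3)$ of probability measures satisfying (\ref{W2})--(\ref{W3}) with second-moment tensor $K\bar N_{j,k}$ (with $K$ chosen so that the trace equals $\zeta$), together with a uniform bound $\int W_0(\xi)^{1+\sigma} G(d\xi) \leq C_\ast$ for a small $\sigma > 0$ and a large $C_\ast$. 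Preservation of the second moments under $\mathcal{S}_\beta$ follows from Proposition \ref{EqMoments} and (\ref{S7E3})--(\ref{S7E3a}), exactly as in Proposition \ref{InvReg}, so the only new step is propagation of the higher-moment bound.

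To propagate $M_\sigma(t) := \int W_0(\xi)^{1+\sigma} G(t, d\xi)$, I would test the weak formulation (\ref{T3E6}) against $\varphi = W_0^{1+\sigma}$ and symmetrize the collision contribution via Lemma \ref{BilRep}. At the tail $|\xi| \gg 1$, the conservation $\int W_0 G \leq CK$ forces typical collision partners to be of much smaller norm, so $w - w_* \approx \xi$ and the pre/post-collision velocities become $P_\omega^\bot \xi$, $P_\omega \xi$ up to errors absorbed in the $C_K$-constant. Expanding $W_0^\sigma = 1 + \sigma\log W_0 + O(\sigma^2)$, the contribution of order $\sigma^0$ yields the quadratic moment balance (which vanishes on $\mathcal{U}$ by the eigenvector property of $\bar N$), and the $\sigma^1$ contribution is precisely $\mathcal{W}(\xi; W_0; K) + \mathcal{H}(\xi; K)$; allowing the Lyapunov form to vary and optimizing as in (\ref{V1E4}) produces $\mathcal{W}(\xi; W; K) + \mathcal{H}(\xi; K)$ as the coefficient. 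By continuity and $2$-homogeneity of this expression on the unit sphere, hypothesis (\ref{V1E4}) upgrades to a pointwise bound $\mathcal{W} + \mathcal{H} \leq -\delta W_0$ for some $\delta > 0$, producing the Gronwall-type inequality $dM_\sigma/dt \leq -\delta'\sigma M_\sigma + C_K$ and hence propagation of $M_\sigma \leq C_\ast$. Once this bound is in hand, $\mathcal{U}$ is weakly compact and invariant under $\mathcal{S}_\beta(h)$, and the Schauder fixed-point plus $h \to 0$ diagonalization at the end of the proof of Theorem \ref{th:ssprof} produces the desired profile $G$.

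The main obstacle is the rigorous justification of the $\sigma$-expansion. The logarithmic factor $\log W_0$ is singular where $W_0$ degenerates; this is controlled by strict positive-definiteness of $W_0$ (so $W_0(\xi) \geq c|\xi|^2$) combined with the mass normalization, but care is needed near $\xi = 0$ where $\log W_0 \to -\infty$. The $O(\sigma^2)$ remainders in the Povzner-type decomposition of the collision bracket must be dominated uniformly over $G \in \mathcal{U}$ by the $O(\sigma)$ negativity, which restricts the smallness of $\sigma$ in a manner depending on the modulus of continuity of $B$ described in Remark \ref{ModCont}. Finally, passing from the infimum $<0$ in (\ref{V1E4}) to a concrete minimizing (or $\delta$-approximating) quadratic form $W$ on which the Lyapunov argument can be executed requires a subsidiary compactness argument in the space of quadratic forms.
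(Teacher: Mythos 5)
Your overall architecture is the same as the paper's: fix $\beta=b(\lambda_1-1)$, run the semigroup of Theorem \ref{WellPos} and Lemma \ref{Semig} on the set of measures with second moments $K\bar{N}_{j,k}$, propagate one higher moment through a shear-adapted Povzner-type estimate, and conclude with the Schauder plus $h\to 0$ diagonalization of Theorem \ref{th:ssprof}. The gap is at the one step where hypothesis \eqref{V1E4} must enter, namely the choice of the Lyapunov test function. If you test with $\varphi=W_0^{1+\sigma}$, the order-$\sigma^0$ term vanishes because $W_0$ solves the \emph{adjoint} eigenvalue problem, i.e. $\mathcal{W}(\xi;W_0;K)\equiv 0$ pointwise in $\xi$ (this is Step 1 of Proposition \ref{PovnerShear}; it is not the moment constraint through $\bar{N}$), and the order-$\sigma$ coefficient is then exactly $\mathcal{H}(\xi;K)$ and nothing else. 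Hypothesis \eqref{V1E4} does not give negativity of $\mathcal{H}$ alone on the unit sphere; it gives negativity of $\mathcal{W}(\cdot\,;W;K)+\mathcal{H}$ for some auxiliary quadratic form $W$, and your argument never explains how that $W$ enters the estimate. ``Allowing the Lyapunov form to vary'', i.e. replacing $W_0$ by $W$ in $W^{1+\sigma}$, does not do it: for $W\neq W_0$ the zeroth-order term $\mathcal{W}(\xi;W;K)$ no longer vanishes, appears at order one with uncontrolled sign and dominates the $O(\sigma)$ terms, while the logarithmic terms become the $\mathcal{H}$-functional built from $W$ rather than from $W_0$; the combination appearing in \eqref{V1E4} never arises.

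The missing idea, which is the actual content of Proposition \ref{PovnerShear}, is to insert the hypothesized form inside the power as an $O(\varepsilon)$ perturbation of $W_0$: take $\varphi=\left[W_0+\varepsilon W_1\right]^{1+\varepsilon}$, where $W_1$ realizes \eqref{V1E4} (read, as the paper does, as a bound holding uniformly on the unit sphere). Then the order-one term still cancels by $\mathcal{W}(\cdot\,;W_0;K)=0$, and $\mathcal{W}(\cdot\,;W_1;K)$ and $\mathcal{H}$ appear together at order $\varepsilon$, giving $\mathcal{V}(\xi;\varphi)\leq -c_0/2$ on $|\xi|=1$ and hence the modified Povzner inequality \eqref{S9E1}; from there your Gronwall and invariant-set steps proceed as in the paper (the paper's splitting into $|w_\ast|\leq\delta|w|$ and $\delta|w|<|w_\ast|\leq|w|$, together with symmetrization over $|w|>|w_\ast|$, is the rigorous version of your ``small collision partner'' heuristic). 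Two smaller points: the $\log W_0$ singularity at $\xi=0$ that worries you is immaterial, since the $\varepsilon$-expansion is carried out only on the unit sphere and extended by homogeneity of $\varphi$; and no compactness argument in the space of quadratic forms is needed, since a negative infimum in \eqref{V1E4} directly supplies an admissible $W_1$ with the quantity bounded by some $-c_0<0$.
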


\bigskip

Theorem \ref{ThSimpShearArbK} allows to obtain quantitative estimates about
the value of the shear parameter $K$ for which self-similar solutions exist.
Notice that in particular, Theorem \ref{ThSimpShearArbK} implies Theorem
\ref{ThSimpShear}. Indeed, if $K=0$ we obtain that $W_{0}$ is just the
identity. Then the last integral in (\ref{V1E1}) vanishes due to Pithagoras
Theorem and we have $\log\left(  \frac{W_{0}\left(  P_{\omega}^{\bot}%
\xi\right)  }{W_{0}\left(  \xi\right)  }\right)  <0$,\ $\log\left(
\frac{W_{0}\left(  P_{\omega}\xi\right)  }{W_{0}\left(  \xi\right)  }\right)
<0$ if $\left\vert \xi\right\vert =1.$ Then $\mathcal{H}\left(  \xi;0\right)
<0.$ Then the inequality (\ref{V1E4}) holds for $\left\vert K\right\vert $
sufficiently small whence Theorem \ref{ThSimpShear} follows.\bigskip

The main difference between the proof of the Theorems \ref{th:ssprof} and
\ref{ThSimpShearArbK} is the fact that instead of using the classical Povzner
estimates (cf. Proposition \ref{Povzner}) in order to control the dynamics of
large particles, we will use a modified version which takes into account not
only the collisions between particles, but also the effect of the shear term
$-K\partial_{\xi_{1}}\left(  \xi_{2}G\right)  .$ The result is the following.\ 

\bigskip

\begin{proposition}
\label{PovnerShear} Suppose that $B$ in (\ref{eq:collboltzgen}) is homogeneous
of order zero and that $b$ in (\ref{S6E1}) is strictly positive. Suppose that
for a given $K\in\mathbb{R}$ the condition (\ref{V1E4}) holds. Then there
exists a function $\varphi:\mathbb{R}^{3}\rightarrow\mathbb{\ R}$,
$\varphi=\varphi\left(  \xi\right)  $ homogeneous in $\xi$ with homogeneity
$s\in\left(  2,3\right)  $ (depending on $K$) and positive constants
$\kappa,\ C,$ depending also on $K,$ such that, for any measure $\xi,\xi
_{\ast}\in\mathbb{R}^{3}$ satisfying (\ref{W2}) and $\int_{\mathbb{R}^{3}}\xi
G\left(  d\xi\right)  =0$ the following inequality holds:%
\begin{equation}
\mathcal{U}\left[  \varphi\right]  \left(  \xi,\xi_{\ast}\right)  \leq
-\kappa\varphi\left(  \xi\right)  +C\left\vert \xi\right\vert ^{\frac{s}{2}%
}\left\vert \xi_{\ast}\right\vert ^{\frac{s}{2}}\ \ \text{for\ }\left\vert
\xi_{\ast}\right\vert \leq\left\vert \xi\right\vert \label{S9E1}%
\end{equation}
where:
\begin{equation}
\mathcal{U}\left[  \varphi\right]  \left(  \xi,\xi_{\ast}\right)
:=\int_{S^{2}}d\omega B\left(  \omega\cdot\left(  \xi-\xi_{\ast}\right)
\right)  \left(  \varphi^{\prime}+\varphi_{\ast}^{\prime}-\varphi
-\varphi_{\ast}\right)  -K\left[  \xi_{2}\partial_{\xi_{1}}\varphi\right]
\left(  \xi\right)  -\beta\xi\cdot\partial_{\xi}\varphi\left(  \xi\right)
\label{S9E2}%
\end{equation}
and where $\beta=b\left(  \lambda_{1}-1\right)  $ with $\lambda_{1}$ as in
Proposition \ref{EigSimShear}.

Moreover, there exists $c_{0}>0$ such that $\varphi\left(  \xi\right)  \geq
c_{0}\left\vert \xi\right\vert ^{s}$ for any $\xi\in\mathbb{R}^{3}.$
\end{proposition}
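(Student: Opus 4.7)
The plan is to construct $\varphi$ explicitly as $\varphi(\xi) = W(\xi)^{s/2}$ where $W$ is a positive-definite quadratic form close to the specific form $W_0$ defined in (\ref{S9E3b})--(\ref{S9E5}), and $s = 2 + \epsilon$ with $\epsilon > 0$ to be taken sufficiently small. The homogeneity of $\varphi$ is then clearly $s \in (2,3)$ for small $\epsilon$, and the lower bound $\varphi(\xi) \ge c_0 |\xi|^s$ is immediate from positive-definiteness of $W$ (which is stable under small perturbations of $W_0$, and $W_0$ is positive-definite as shown via the computation of $\det(\bar{W}_0)>0$ in the text).

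The heart of the argument is the case $\xi_* = 0$, where the estimate reduces to $\mathcal{U}[\varphi](\xi,0) \le -\kappa \varphi(\xi)$. Using the collision rule (\ref{CM1}), (\ref{CM2}) at $\xi_* = 0$ gives $\xi' = P_\omega^\perp \xi$, $\xi'_* = P_\omega \xi$, and Euler's identity gives $\xi\cdot\partial_\xi \varphi = s\varphi$. Thus
\[
\mathcal{U}[\varphi](\xi,0) = \int_{S^2} B(\omega\cdot \xi/|\xi|)\bigl[W(P_\omega^\perp\xi)^{s/2} + W(P_\omega\xi)^{s/2} - W(\xi)^{s/2}\bigr]\,d\omega - K\xi_2 \partial_{\xi_1}\varphi(\xi) - s\beta\,\varphi(\xi).
\]
By homogeneity we normalize $|\xi|=1$. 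Writing $W^{s/2} = W \cdot W^{\epsilon/2} = W\bigl(1 + (\epsilon/2)\log W + O(\epsilon^2)\bigr)$ and expanding in $\epsilon$, the zeroth order exactly reproduces the definition (\ref{V1E3}) of $\mathcal{W}(\xi;W;K)$, while the $O(\epsilon)$ correction produces, at $W = W_0$, precisely the logarithmic quantity $\mathcal{H}(\xi;K)$ as in (\ref{V1E2}). More precisely, one obtains
\[
\mathcal{U}[\varphi](\xi,0) = \mathcal{W}(\xi;W;K) + \tfrac{\epsilon}{2}\,\mathcal{H}(\xi;K) + O(\epsilon^2) + O\bigl(\epsilon\,\|W-W_0\|\bigr),
\]
where the last error captures the discrepancy in the log-terms when $W \ne W_0$.

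Now we invoke the hypothesis (\ref{V1E4}): choose $W$ achieving (or coming close to achieving) the infimum, so that $\min_{|\xi|=1}[\mathcal{W}(\xi;W;K) + \mathcal{H}(\xi;K)] < 0$. Combined with continuity in $W$ and $\xi$ and the quadratic (degree-two homogeneous) nature of $\mathcal{W}$ and $\mathcal{H}$ in $\xi$, one can select $W$ and then $\epsilon>0$ small enough so that $\mathcal{W}(\xi;W;K) + (\epsilon/2)\mathcal{H}(\xi;K) \le -\kappa'$ uniformly on $|\xi|=1$ for some $\kappa' > 0$; by homogeneity this converts back to $\mathcal{U}[\varphi](\xi,0) \le -\kappa\,\varphi(\xi)$ for all $\xi$.

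Finally, to extend to $0 < |\xi_*| \le |\xi|$: write $\xi' = P_\omega^\perp\xi + P_\omega\xi_*$, $\xi'_* = P_\omega^\perp\xi_* + P_\omega\xi$, and expand $W(\xi')^{s/2}$, $W(\xi'_*)^{s/2}$ as perturbations of their values at $\xi_* = 0$. Using bilinearity together with the Cauchy--Schwarz-type bound $|W(\xi,\xi_*)| \le (W(\xi)W(\xi_*))^{1/2}$, the cross-contributions are controlled by $C(W(\xi)W(\xi_*))^{s/4}\sim C|\xi|^{s/2}|\xi_*|^{s/2}$, which yields precisely the error in (\ref{S9E1}). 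The main obstacle is step three: translating the existence of a single point where $\mathcal{W} + \mathcal{H} < 0$ into a uniform upper bound on the unit sphere, which requires a careful continuity/perturbation argument exploiting freedom in both the choice of $W$ near $W_0$ and the smallness of $\epsilon$ so that the order-$\epsilon^2$ error is dominated by the order-$\epsilon$ gain.
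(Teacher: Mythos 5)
Your overall template (taking $\varphi$ to be a power slightly above one of a perturbed positive quadratic form, expanding in the small exponent so that $\mathcal{W}$ and $\mathcal{H}$ appear, then treating $\xi_*\neq 0$ separately) is the same as the paper's, but the proposal is missing the one identity that makes the expansion usable: the specific form $W_0$ of (\ref{S9E3b})--(\ref{S9E5}) satisfies $\mathcal{W}(\xi;W_0;K)=0$ for \emph{all} $\xi$, because its coefficients $u_{j,k}$ solve the adjoint (\ref{S9E4}) of the moment eigenvalue problem (\ref{W5}) at the top eigenvalue, with $\beta=b(\lambda_1-1)$; this is Step 1 of the paper's proof. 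Without it, your formula $\mathcal{U}[\varphi](\xi,0)=\mathcal{W}(\xi;W;K)+\tfrac{\epsilon}{2}\mathcal{H}(\xi;K)+O(\epsilon^2)+O(\epsilon\|W-W_0\|)$ cannot be closed: if you pick $W$ as a (near-)minimizer in (\ref{V1E4}), the hypothesis controls $\mathcal{W}(\cdot;W;K)+\mathcal{H}(\cdot;K)$ with equal weights, not $\mathcal{W}+\tfrac{\epsilon}{2}\mathcal{H}$, and the order-one term $\mathcal{W}(\xi;W;K)$ is in general of indefinite sign on the sphere, so no smallness of $\epsilon$ helps. The correct choice is $W=W_0+\epsilon W_1$ with $W_1$ furnished by (\ref{V1E4}) (this is exactly the paper's $\varphi=[W_0+\varepsilon W_1]^{1+\varepsilon}$): by linearity of $\mathcal{W}$ in its second argument and $\mathcal{W}(\cdot;W_0;K)\equiv 0$, both $\mathcal{W}(\xi;W_1;K)$ and $\mathcal{H}(\xi;K)$ enter at order $\epsilon$ and dominate the $O(\epsilon^2)$ remainder. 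Relatedly, the ``main obstacle'' you flag --- upgrading negativity at a single point of the unit sphere to a uniform bound --- is not a step of the proof: the stability condition is used in the paper as a uniform estimate on $\{|\xi|=1\}$ (cf. (\ref{V1E5a}) and its use in (\ref{V1E7})--(\ref{V1E8}), where uniformity of the $O(\epsilon)$ error on the compact sphere is all that is needed). Under your single-point reading the construction genuinely cannot be completed, so your argument stalls exactly at the place where you admit it does.

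The extension to $\xi_*\neq 0$ is also not correct as written. Expanding $W(\xi')^{s/2}$ produces cross terms of size $W(\xi)^{s/2-1}W(\xi,\xi_*)\sim|\xi|^{s-1}|\xi_*|$, which \emph{exceeds} $|\xi|^{s/2}|\xi_*|^{s/2}$ precisely in the regime $|\xi_*|\le|\xi|$ in which you work (since $s/2-1>0$), so a Cauchy--Schwarz bound alone does not yield (\ref{S9E1}). The paper's remedy is a dichotomy: for $|\xi_*|\le\delta|\xi|$ the difference $\mathcal{R}(\xi,\xi_*;\varphi)$ between the true collision integral and its value at $\xi_*=0$ is made small by continuity and absorbed into the strict negativity $\mathcal{V}(\xi;\varphi)\le -c_0/2$, i.e.\ into $-\kappa\varphi(\xi)$; for $\delta|\xi|<|\xi_*|\le|\xi|$ one uses the crude bound $\mathcal{U}[\varphi](\xi,\xi_*)\le C\left(|\xi|^{s}+|\xi_*|^{s}\right)\le C_\delta|\xi|^{s/2}|\xi_*|^{s/2}$. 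Some such case splitting (as in the classical Povzner argument of Proposition \ref{Povzner}) is needed to finish.
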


\begin{proof}
If $K=0$ we have $\beta=b\left(  \lambda_{1}-1\right)  =0$\ and the result
just follows from the classical Povzner estimates (cf. Proposition
\ref{Povzner}). Therefore we will assume that $K\neq0$ whence $\lambda_{1}>1.$
We will prove Proposition \ref{PovnerShear} in two steps.

\textbf{Step 1:} We first prove that the positive definite quadratic form
$W_{0}$ defined in (\ref{S9E3b}), (\ref{S9E5}) satisfies:%
\begin{equation}
\Phi\left(  \xi;W_{0}\right)  =0\ \ \text{for\ all }\xi\in\mathbb{R}^{3}
\label{S9E3}%
\end{equation}
where:
\begin{equation}
\Phi\left(  \xi;W_{0}\right)  =-2\int_{S^{2}}d\omega B\left(  \frac
{\omega\cdot\xi}{\left\vert \xi\right\vert }\right)  \left[  W_{0}\left(
P_{\omega}^{\bot}\xi,P_{\omega}\xi\right)  \right]  -K\left[  \xi_{2}%
\partial_{\xi_{1}}W_{0}\right]  \left(  \xi\right)  -2\beta W_{0}\left(
\xi\right)  . \label{S9E3a}%
\end{equation}

We look for $W_{0}$ in the form. Using Proposition \ref{TensComp} and the
definition (\ref{S9E3b}), (\ref{S9E5}) we obtain:%
\begin{align*}
\Phi\left(  \xi;W_{0}\right)   &  =-2\int_{S^{2}}d\omega B\left(  \frac
{\omega\cdot\xi}{\left\vert \xi\right\vert }\right)  \left[  W_{0}\left(
P_{\omega}^{\bot}\xi,P_{\omega}\xi\right)  \right]  -K\left[  \xi_{2}%
\partial_{\xi_{1}}W_{0}\right]  \left(  \xi\right)  -2\beta W_{0}\left(
\xi\right) \\
&  =-2\sum_{j\leq k}u_{j,k}\int_{S^{2}}d\omega B\left(  \frac{\omega\cdot\xi
}{\left\vert \xi\right\vert }\right)  \left(  P_{\omega}^{\bot}\xi\otimes
P_{\omega}\xi\right)  \left(  e_{j},e_{k}\right) \\
&  \quad-K\sum_{j\leq k}u_{j,k}\left[  \xi_{2}\xi_{k}\delta_{1,j}+\xi_{2}%
\xi_{j}\delta_{1,k}\right]  -2\beta\sum_{j\leq k}u_{j,k}\xi_{j}\xi_{k}\\
&  =-2b\sum_{j\leq k}u_{j,k}\left[  \xi_{j}\xi_{k}-\frac{\left\vert
\xi\right\vert ^{2}}{3}\delta_{j,k}\right]  -K\sum_{k}u_{1,k}\xi_{2}\xi
_{k}-Ku_{1,1}\xi_{1}\xi_{2}-2\beta\sum_{j\leq k}u_{j,k}\xi_{j}\xi_{k}.
\end{align*}
Then:
\begin{align}
\Phi\left(  \xi;W_{0}\right)   &  =-\left(  2b+2\beta\right)  \sum_{j\leq
k}u_{j,k}\xi_{j}\xi_{k}+\frac{2b\left\vert \xi\right\vert ^{2}}{3}\sum
_{j}u_{j,j}-K\sum_{k}u_{1,k}\xi_{k}\xi_{2}-Ku_{1,1}\xi_{1}\xi_{2}\nonumber\\
&  =-\left(  2b+2\beta\right)  \sum_{j\leq k}u_{j,k}W_{j,k}+\frac{2b}%
{3}\left[  \sum_{\ell}u_{\ell,\ell}\right]  \sum_{j\leq k}\delta_{j,k}%
W_{j,k}\nonumber \\&\quad -K\sum_{j\leq k}u_{1,k}\delta_{j,2}W_{j,k}-2Ku_{1,1}W_{1,2}.
\label{S9E4a}%
\end{align}

Since the quadratic forms $W_{j,k}$ are linearly independent in the space of
quadratic forms we obtain that $\Phi\left(  \xi;W_{0}\right)  =0$ for any
$\xi\in\mathbb{R}^{3}$ if:%
\begin{equation}
-\left(  \frac{\beta}{b}+1\right)  u_{j,k}+\frac{\delta_{j,k}}{3}\sum_{\ell
}u_{\ell,\ell}-\frac{K}{2b}u_{1,k}\delta_{j,2}-\frac{K}{b}\delta_{1,j}%
\delta_{2,k}u_{1,1}=0\ ,\ j\leq k . \label{S9E4}%
\end{equation}

The eigenvalue problem is the adjoint problem (\ref{W5}), (\ref{W5a}) assuming
that in space of quadratic forms we take the scalar product $\sum_{j}\sum
_{k}\Gamma_{j,k}\tilde{\Gamma}_{j,k}=\left\langle \Gamma,\tilde{\Gamma
}\right\rangle .$ 
Using that $\beta=b\left(  \lambda_{1}-1\right)  $ (cf. Proposition
\ref{EigSimShear}) we then readily obtain that $u_{j,k}$ in (\ref{S9E5}) yield
a nontrivial solution of (\ref{S9E4}). 
Therefore, the quadratic form given by (\ref{S9E3b}) with the coefficients
$u_{j,k}$ in (\ref{S9E5}) satisfies (\ref{S9E3}).

\bigskip\textbf{Step 2. } Suppose now that the stability condition
(\ref{V1E4}) holds. Then, there exists a quadratic form $W_{1}$ such that:%
\begin{equation}
\min_{\left\vert \xi\right\vert =1}\left[  \mathcal{W}\left(  \xi
;W_{1};K\right)  +\mathcal{H}\left(  \xi;K\right)  \right]  \leq-c_{0}<0
\label{V1E5a}%
\end{equation}
for any $\xi\in\mathbb{R}^{3}$ satisfying $\left\vert \xi\right\vert =1.$ We
define a function $\varphi$ homogeneous with homogeneity $s=2\left(
1+\varepsilon\right)  >2$ in the form:%
\begin{equation}
\varphi\left(  \xi\right)  =\left[  W_{0}\left(  \xi\right)  +\varepsilon
W_{1}\left(  \xi\right)  \right]  ^{1+\varepsilon} . \label{V1E5}%
\end{equation}

Given a function $\varphi:\mathbb{R}^{3}\rightarrow\mathbb{R}$ we define:%
\begin{equation}
\mathcal{V}\left(  \xi;\varphi\right)  =\int_{S^{2}}deB\left(  e\cdot
\xi\right)  \left(  \varphi\left(  P_{e}^{\bot}\xi\right)  +\varphi\left(
P_{e}\xi\right)  -\varphi\left(  \xi\right)  \right)  -K\xi_{2}\partial
_{\xi_{1}}\varphi\left(  \xi\right)  -\beta\xi\cdot\nabla\varphi\left(
\xi\right)  \label{V1E6}%
\end{equation}
where $\beta=b\left(  \lambda_{1}-1\right)  .$ If $\varphi$ is homogeneous
with homogeneity $s=2\left(  1+\varepsilon\right)  $ we can rewrite
(\ref{V1E6}) as
\[
\mathcal{V}\left(  \xi;\varphi\right)  =\int_{S^{2}}deB\left(  e\cdot
\xi\right)  \left(  \varphi\left(  P_{e}^{\bot}\xi\right)  +\varphi\left(
P_{e}\xi\right)  -\varphi\left(  \xi\right)  \right)  -K\xi_{2}\partial
_{\xi_{1}}\varphi\left(  \xi\right)  -2\left(  1+\varepsilon\right)
\beta\varphi\left(  \xi\right) .
\]

Using then (\ref{S9E6}), (\ref{V1E3}), (\ref{V1E5}) and Taylor's Theorem we
obtain the following approximation of $\mathcal{V}\left(  \xi;\varphi\right)
:$
\begin{align*}
\mathcal{V}\left(  \xi;\varphi\right)  =  &  \mathcal{W}\left(  \xi
;W_{0};K\right)  +\varepsilon\int_{S^{2}}deB\left(  e\cdot\xi\right)
\Big( W_{0}\left(  P_{e}^{\bot}\xi\right)  \log\left(  W_{0}\left(
P_{e}^{\bot}\xi\right)  \right) \\
&  +W_{0}\left(  P_{e}\xi\right)  \log\left(  W_{0}\left(  P_{e}\xi\right)
\right)  -W_{0}\left(  \xi\right)  \log\left(  W_{0}\left(  \xi\right)
\right)  \Big)\\
&  -K\xi_{2}\varepsilon\left[  1+\log\left(  W_{0}\left(  \xi\right)  \right)
\right]  \partial_{\xi_{1}}W_{0}\left(  \xi\right)  -2\beta\varepsilon
W_{0}\left(  \xi\right)  \log\left(  W_{0}\left(  \xi\right)  \right)
-2\beta\varepsilon W_{0}\left(  \xi\right)  +\\
&  +\varepsilon\mathcal{W}\left(  \xi;W_{1};K\right)  +O\left(  \varepsilon
^{2}\right)
\end{align*}
as $\varepsilon\rightarrow0.$

We now use the fact that $\mathcal{W}\left(  \xi;W_{0};K\right)  =0$, also to
rewrite $K\xi_{2}\partial_{\xi_{1}}W_{0}\left(  \xi\right)  +2\beta
W_{0}\left(  \xi\right)  $ as an integral term we then obtain, using
(\ref{V1E2}):%
\begin{equation}
\frac{\mathcal{V}\left(  \xi;\varphi\right)  }{\varepsilon}=\mathcal{H}\left(
\xi;K\right)  +\mathcal{W}\left(  \xi;W_{1};K\right)  +O\left(  \varepsilon
\right)  \text{ as }\varepsilon\rightarrow0 \label{V1E7}%
\end{equation}
uniformly in $\left\vert \xi\right\vert =1.$

Using (\ref{V1E5a}) we obtain:%
\begin{equation}
\mathcal{V}\left(  \xi;\varphi\right)  \leq-\frac{c_{0}}{2}<0\ \ \text{in\ \ }%
\left\vert \xi\right\vert =1 \label{V1E8}%
\end{equation}
if $\varepsilon>0$ is sufficiently small. Moreover, using (\ref{S9E6}) we
obtain also that, for $\left\vert \xi\right\vert =1$ and $\varepsilon$
sufficiently small we have:%
\begin{equation}
\varphi\left(  \xi\right)  \geq\frac{c_{1}}{2}>0\ \ \text{if \ }\left\vert
\xi\right\vert =1. \label{V1E9}%
\end{equation}

We can then prove (\ref{S9E1}). The right-hand side of (\ref{S9E2}) is
homogeneous in $\xi,\xi_{\ast}.$ We can then assume without loss of generality
that $\left\vert \xi\right\vert =1.$ Suppose first that $\left\vert \xi_{\ast
}\right\vert \leq\delta$ for some $\delta>0$ sufficiently small to be
determined. Then, using also (\ref{V1E6}):%
\[
\mathcal{U}\left[  \varphi\right]  \left(  \xi,\xi_{\ast}\right)
=\mathcal{V}\left(  \xi;\varphi\right)  +\mathcal{R}\left(  \xi,\xi_{\ast
};\varphi\right)
\]
where:%
\begin{align*}
\mathcal{R}\left(  \xi,\xi_{\ast};\varphi\right)  = &  \int_{S^{2}}d\omega
B\left(  \omega\cdot\left(  \xi-\xi_{\ast}\right)  \right)  \left(
\varphi^{\prime}+\varphi_{\ast}^{\prime}-\varphi-\varphi_{\ast}\right) \\
&  -\int_{S^{2}}deB\left(  e\cdot\xi\right)  \left(  \varphi\left(
P_{e}^{\bot}\xi\right)  +\varphi\left(  P_{e}\xi\right)  -\varphi\left(
\xi\right)  \right) .
\end{align*}

Using the collision rule (\ref{CM1}), (\ref{CM2}) as well as the continuity of
the function $\mathcal{R}\left(  \xi,\xi_{\ast};\varphi\right)  $ in
$\xi_{\ast}$ it then follows that $\mathcal{R}\left(  \xi,\xi_{\ast}%
;\varphi\right)  $ can be made arbitrarily small if $\left\vert \xi\right\vert
=1$ and $\delta$ is small enough. Then, using (\ref{V1E8}) we obtain:%
\[
\mathcal{U}\left[  \varphi\right]  \left(  \xi,\xi_{\ast}\right)  \leq
-\frac{c_{0}}{2}\ \ \text{if\ \ }\left\vert \xi\right\vert =1,\ \ \left\vert
\xi_{\ast}\right\vert \leq\delta
\]
and using the homogeneity of $\mathcal{U}\left[  \varphi\right]  $ as well as
(\ref{V1E9}) we then obtain:%
\[
\mathcal{U}\left[  \varphi\right]  \left(  \xi,\xi_{\ast}\right)  \leq
-\kappa\varphi\left(  \xi\right)  \ \ \text{if\ \ }\xi\neq0,\ \ \left\vert
\xi_{\ast}\right\vert \leq\delta\left\vert \xi\right\vert .
\]

On the other hand, if $\delta\left\vert \xi\right\vert <\left\vert \xi_{\ast
}\right\vert \leq\left\vert \xi\right\vert $ we just use that $\mathcal{U}%
\left[  \varphi\right]  \left(  \xi,\xi_{\ast}\right)  $ can be estimated as
$C\left[  \left\vert \xi\right\vert ^{s}+\left\vert \xi_{\ast}\right\vert
^{s}\right]  \leq C\left\vert \xi\right\vert ^{\frac{s}{2}}\left\vert
\xi_{\ast}\right\vert ^{\frac{s}{2}}.$ Therefore (\ref{S9E1}) follows.
\end{proof}

We can now prove Theorem \ref{ThSimpShearArbK}. \bigskip

\begin{proofof}
[Proof of Theorem \ref{ThSimpShearArbK}]We now argue as in the Proof of
Theorem \ref{th:ssprof}. The only difference in the argument arises in the
Proof of Proposition \ref{InvReg} where the inequalities (\ref{S7E5}),
(\ref{S7E6}) must be replaced by:%
\begin{equation}
\int_{\mathbb{R}^{3}}\varphi\left(  w\right)  G_{0}\left(  dw\right)  \leq
C_{\ast}\ \label{V2E1}%
\end{equation}
and%
\begin{equation}
\int_{\mathbb{R}^{3}}\varphi\left(  w\right)  \mathcal{S}\left(  t\right)
G_{0}\left(  dw\right)  \leq C_{\ast}\ \label{V2E2}%
\end{equation}
respectively. In order to prove that (\ref{V2E1}) implies (\ref{V2E2}) we
compute the derivative of the function $M_{s}\left(  t\right)  =\int
_{\mathbb{R}^{3}}\varphi\left(  w\right)  G\left(  t,dw\right)  .$ Then, using
(\ref{T3E6}) we obtain:%
\begin{align*}
\partial_{t}M_{s}\left(  t\right)  =  &  \int_{\mathbb{R}^{3}}G\left(
t,dw\right)  \int_{\mathbb{R}^{3}}G\left(  t,dw_{\ast}\right)  \int_{S^{2}%
}d\omega\Big[ \, \frac{1}{2}B\left(  n\cdot\omega\right)  \left[
\varphi\left(  w^{\prime}\right)  +\varphi\left(  w_{\ast}^{\prime}\right)
-\varphi\left(  w\right)  -\varphi\left(  w_{\ast}\right)  \right] \\
&  -K\left[  \xi_{2}\partial_{\xi_{1}}\varphi\right]  \left(  w\right)
-\beta\xi\cdot\partial_{\xi}\varphi\left(  w\right)  \Big].
\end{align*}

We decompose the first integral in the regions $\left\vert w\right\vert
>\left\vert w_{\ast}\right\vert $ and $\left\vert w\right\vert <\left\vert
w_{\ast}\right\vert $ respectively. Then the first integral, containing the
term $\frac{1}{2}B\left(  n\cdot\omega\right)  $ can be rewritten as an
integral in the region $\left\vert w\right\vert >\left\vert w_{\ast
}\right\vert $ exchanging the variables $w\longleftrightarrow w_{\ast}.$ We
then obtain:%
\begin{align*}
\partial_{t}M_{s}\left(  t\right)  =  &  \int_{\mathbb{R}^{3}\setminus\left\{
\left\vert w\right\vert >\left\vert w_{\ast}\right\vert \right\}  }%
\int_{\mathbb{R}^{3}}G\left(  t,dw\right)  G\left(  t,dw_{\ast}\right)
\int_{S^{2}}d\omega\Big[ B\left(  n\cdot\omega\right)  \left[  \varphi\left(
w^{\prime}\right)  +\varphi\left(  w_{\ast}^{\prime}\right)  -\varphi\left(
w\right)  -\varphi\left(  w_{\ast}\right)  \right] \\
&  -K\left[  \xi_{2}\partial_{\xi_{1}}\varphi\right]  \left(  w\right)
-\beta\xi\cdot\partial_{\xi}\varphi\left(  w\right)  \Big]\\
&  -\int_{\mathbb{R}^{3}\ \left\{  \left\vert w\right\vert <\left\vert
w_{\ast}\right\vert \right\}  }\int_{\mathbb{R}^{3}}G\left(  t,dw\right)
G\left(  t,dw_{\ast}\right)  \left[  K\left[  \xi_{2}\partial_{\xi_{1}}%
\varphi\right]  \left(  w\right)  +\beta\xi\cdot\partial_{\xi}\varphi\left(
w\right)  \right]  .
\end{align*}

The first integral on the right-hand side can be estimated using Proposition
\ref{PovnerShear}. On the other hand, in the second integral we use that
$\left\vert K\left[  \xi_{2}\partial_{\xi_{1}}\varphi\right]  \left(
w\right)  +\beta\xi\cdot\partial_{\xi}\varphi\left(  w\right)  \right\vert
\leq C\left\vert w\right\vert ^{s}\leq C\left\vert w\right\vert ^{\frac{s}{2}%
}\left\vert w_{\ast}\right\vert ^{\frac{s}{2}}$ since we are integrating in
the region $\left\{  \left\vert w\right\vert <\left\vert w_{\ast}\right\vert
\right\}  .$ We then obtain the estimate:\
\begin{align*}
\partial_{t}M_{s}\left(  t\right)   &  \leq-\kappa\int_{\mathbb{R}%
^{3}\ \left\{  \left\vert w\right\vert >\left\vert w_{\ast}\right\vert
\right\}  }\int_{\mathbb{R}^{3}}G\left(  t,dw\right)  G\left(  t,dw_{\ast
}\right)  \varphi\left(  w\right)  +C\int_{\mathbb{R}^{3}}\int_{\mathbb{R}%
^{3}}G\left(  t,dw\right)  G\left(  t,dw_{\ast}\right)  \left\vert
w\right\vert ^{\frac{s}{2}}\left\vert w_{\ast}\right\vert ^{\frac{s}{2}}\\
&  =-\kappa\int_{\mathbb{R}^{3}}\int_{\mathbb{R}^{3}}G\left(  t,dw\right)
G\left(  t,dw_{\ast}\right)  \varphi\left(  w\right)  +\int_{\mathbb{R}%
^{3}\ \left\{  \left\vert w\right\vert <\left\vert w_{\ast}\right\vert
\right\}  }\int_{\mathbb{R}^{3}}G\left(  t,dw\right)  G\left(  t,dw_{\ast
}\right)  \varphi\left(  w\right)  +\\
&  +C\int_{\mathbb{R}^{3}}\int_{\mathbb{R}^{3}}G\left(  t,dw\right)  G\left(
t,dw_{\ast}\right)  \left\vert w\right\vert ^{\frac{s}{2}}\left\vert w_{\ast
}\right\vert ^{\frac{s}{2}}\\
&  \leq-\kappa M_{s}\left(  t\right)  +C\int_{\mathbb{R}^{3}}\int
_{\mathbb{R}^{3}}G\left(  t,dw\right)  G\left(  t,dw_{\ast}\right)  \left\vert
w\right\vert ^{\frac{s}{2}}\left\vert w_{\ast}\right\vert ^{\frac{s}{2}}.
\end{align*}

Since $\frac{s}{2}<2$ we can estimate the last integral in terms of the
particle density and the energy. Then
\[
\partial_{t}M_{s}\left(  t\right)  \leq-\kappa M_{s}\left(  t\right)
+C_{0}\left(  \zeta\right)
\]
with $\kappa>0.$ It then follows that the set $\left\{  M_{s}\left(  t\right)
\leq C_{\ast}\right\}  $ with $C_{\ast}$ sufficiently large, is invariant. The
rest of the proof can then be made along the lines of the proof of Theorem
\ref{th:ssprof}. Actually the argument above must be made using an
approximating sequence $G_{m}$ as in the proof of Theorem \ref{th:ssprof}. On
the other hand, we have implicitly assumed that the measure of the set
$\left\{  \left\vert w\right\vert =\left\vert w_{\ast}\right\vert \right\}  $
is zero. If this is not the case we must split the mass in this line in equal
portions in the regions $\left\{  \left\vert w\right\vert >\left\vert w_{\ast
}\right\vert \right\}  $ and $\left\{  \left\vert w\right\vert <\left\vert
w_{\ast}\right\vert \right\}  .$
\end{proofof}

\bigskip


\subsubsection{Heat fluxes for homoenergetic flows for simple shear solutions}

\label{HeatFluxes}

We discuss in this section a phenomenon discussed in \cite{TM} concerning the
onset of nontrivial heat fluxes for homoenergetic solutions. Suppose that a self-similar solution of \eqref{W1} exists for $K$ sufficiently large. The heat fluxes
in gases described by means of the Boltzmann equation are given by:%
\begin{equation}
q=\int_{\mathbb{R}^{3}}\left\vert w\right\vert ^{2}wg\left(  dw\right).
\label{V2E4}%
\end{equation}

In the solutions obtained in Theorems \ref{ThSimpShear} and
\ref{ThSimpShearArbK} we can assume that they satisfy the symmetry condition:%
\begin{equation}
G\left(  w\right)  =G\left(  -w\right)  \label{V2E3}%
\end{equation}

This is due to the fact that the space of measures satisfying (\ref{V2E3}) is
invariant under the evolution semigroup $\mathcal{S}\left(  t\right)  .$
Therefore, for such self-similar solutions the heat flux $q$ given by
(\ref{V2E4}) is zero. This is seemingly in contrast with a computation made in
\cite{TM} where the evolution of the third moments tensor $M_{j,k,\ell}%
=\int_{\mathbb{R}^{3}}w_{j}w_{k}w_{\ell}g\left(  dw\right)  $ has been
computed and it has been seen there that for generic solutions the third
moments tensor increases exponentially. In particular the heat flux $q$ in
(\ref{V2E4}) can be computed in terms of the third moments tensor and it also
increases exponentially if $\left\vert K\right\vert $ is sufficiently large.

\bigskip

It is not clear if the self-similar solutions constructed in this paper yield
the same distribution of velocities associated to the evolution of the moments
in \cite{TM} because we have not proved neither uniqueness of the self-similar
solutions or stability. However, the fact that the evolution of the second
moments tensor $M_{j,k}$ for the solutions obtained in this paper \ growth
exponentially with the same exponent obtained in \cite{TM} strongly suggests
that the type of solutions considered in this paper and those suggested in
\cite{TM} are related. However, the exponential growth of the third moments
tensor obtained in \cite{TM} raises doubts about the stability of the
solutions obtained in this paper. We will argue now that the values of the
exponents obtained in \cite{TM} support the following scenario for large
values of $K:$ The self-similar solutions in Theorem \ref{ThSimpShearArbK}, if
they exist (i.e. if condition (\ref{V1E4}) holds) are stable under small
perturbations, but the eigenmode associated to the leading eigenvalue of the
problem obtained linearizing around the self-similar solutions does not
satisfy the symmetry condition (\ref{V2E3}).

In order to justify this scenario we will use the notation in \cite{TM}. The
exponential growth of the second moments tensor is $e^{At}$ where $A$ is the
root of the following equation with the largest real part (cf. (XIV.4) in
\cite{TM}):%
\begin{equation}
A\left(  A+1\right)  ^{2}=\frac{2}{3}\mathbf{T}^{2} . \label{V2E5}%
\end{equation}

The parameter $\mathbf{T}$ plays a role equivalent to $\frac{K}{b}$ in Theorem
\ref{ThSimpShearArbK}.

On the other hand, the exponential growth for the heat fluxes $q$ is given by
$e^{Rt}$ with $R$ is the root with the largest part of one of one of the
following equations (cf. (XIV.29), (XIV.31) in \cite{TM}):
\begin{align}
\left(  R+\frac{3}{2}\right)  ^{2}\left(  R+\frac{2}{3}\right)   &  =\frac
{1}{3}\mathbf{T}^{2}\label{V2E6}\\
\left(  R+\frac{3}{2}\right)  ^{2}\left(  R+\frac{2}{3}\right)  ^{2}  &
=2\mathbf{T}^{2}\left(  R+\frac{31}{36}\right) .\nonumber
\end{align}

In order to study the stability of the self-similar solution $G$ it is more
convenient to represent it using the variable $\xi=\frac{w}{e^{\beta t}}$
where $e^{\beta t}$ yields the characteristic velocity $w$ of the particles
for self-similar solutions. Since the second moments tensor increases as
$e^{At}$ and in the simple shear case the total mass is preserved, we would
have $\beta=\frac{A}{2}.$ Therefore, the eigenvalues obtained by means of
(\ref{V2E6}) would be associated to small perturbations of the self-similar
solution $G\left(  \xi\right)  $ if the largest root of the equations
(\ref{V2E6}) satisfies
\begin{equation}
R<3\beta=\frac{3A}{2}. \label{V2E7}%
\end{equation}

In order to prove (\ref{V2E7}) we introduce a new variable $R=3\chi.$ Then
$\chi$ is the solution with the largest real part of one of the equations
\begin{align}
\left(  \chi+\frac{1}{2}\right)  ^{2}\left(  \chi+\frac{2}{9}\right)   &
=\frac{1}{81}\mathbf{T}^{2}\label{V2E8}\\
\left(  \chi+\frac{1}{2}\right)  ^{2}\left(  \chi+\frac{2}{9}\right)  ^{2}  &
=\frac{2\mathbf{T}^{2}}{27}\left(  \chi+\frac{31}{108}\right)  .\nonumber
\end{align}

On the other hand we can rewrite (\ref{V2E5}) using that $A=2\beta$ as %
\begin{equation}
\beta\left(  \beta+\frac{1}{2}\right)  ^{2}=\frac{1}{12}\mathbf{T}^{2}.
\label{V2E9}%
\end{equation}

We need to prove that the root of (\ref{V2E8}) with the largest real part
satisfies $\chi<\beta,$ where $\beta$ is the root of (\ref{V2E9}) with the
largest real part. This result would follow proving the following inequalities
for $x>0$%
\begin{align*}
12x\left(  x+\frac{1}{2}\right)  ^{2}  &  <81\left(  x+\frac{1}{2}\right)
^{2}\left(  x+\frac{2}{9}\right) \\
12x\left(  x+\frac{1}{2}\right)  ^{2}  &  <\frac{27}{2}\frac{\left(
x+\frac{1}{2}\right)  ^{2}\left(  x+\frac{2}{9}\right)  ^{2}}{\left(
x+\frac{31}{108}\right)  }%
\end{align*}
which reduce to
\begin{align*}
12x  &  <81\left(  x+\frac{2}{9}\right)  \ \ \text{for }x\geq0\\
12x  &  <\frac{27}{2}\frac{\left(  x+\frac{2}{9}\right)  ^{2}}{\left(
x+\frac{31}{108}\right)  }\ \ \text{for }x\geq0 .
\end{align*}
The first of these inequalities is obviously satisfied, and the second one is
equivalent to%
\[
24x\left(  x+\frac{31}{108}\right)  <27\left(  x+\frac{2}{9}\right)
^{2}\ \ \text{for }x\geq0
\]
or equivalently %
\[
27\left(  x+\frac{2}{9}\right)  ^{2}-24x\left(  x+\frac{31}{108}\right)
=3x^{2}+\frac{46}{9}x+\frac{4}{3}>0
\]
which is obviously satisfied.

Therefore the desired instability follows. These inequalities suggest the
scenario mentioned above concerning the stability of $G\left(  \xi\right)  .$
More precisely, the asymptotic behavior of small perturbations of $G$ would
yield solutions with the form
\[
\bar{G}\left(  t,\xi\right)  =G\left(  \xi\right)  +e^{\lambda t}%
\varphi\left(  \xi\right)
\]
where $\lambda=R-\frac{3A}{2}<0$ and $\varphi\left(  \xi\right)  \neq
\varphi\left(  -\xi\right)  .$ Nevertheless in order to prove this scenario a
more careful analysis of the linearized problem would be needed.

\subsection{Planar shear for Maxwell molecules}  \label{planarshear}

\bigskip

In this subsection we consider the self-similar solutions for homoenergetic
flows (\ref{B1_0}), (\ref{B4_0}) with $L\left(  t\right)  $ as in (\ref{T1E3})
with $K\neq0.$ Then $g$ solves (\ref{D1_0}). We first check using dimensional
analysis that the terms $-L\left(  t\right)  w\cdot\partial_{w}g$ and
$\mathbb{C}g$ can be expected to have the same order of magnitude as
$t\rightarrow\infty$ if the homogeneity of the collision kernel $B$ is
$\gamma=0,$ i.e. for Maxwell molecules.

Using (\ref{T1E3}) in (\ref{D1_0}) we obtain that $g$ solves
\begin{equation}
\partial_{t}g-\frac{K}{t}w_{3}\partial_{w_{2}}g-\frac{1}{t}w_{3}%
\partial_{w_{3}}g=\mathbb{C}g\left(  w\right)  . \label{S9E8}%
\end{equation}

We have ignored the term $O\left(  \frac{1}{t^{2}}\right)  $ in (\ref{T1E3})
because this term is integrable, and it just produce a factor of order one in
the evolution of the characteristic curves in the space $w$ as $t\rightarrow
\infty.$

In order to find a reformulation with a conserved mass we need to compute the
evolution of the density $\rho\left(  t\right)  .$ We have $\operatorname*{tr}%
\left(  L\left(  t\right)  \right)  =1$ and then (\ref{S8E5}) implies 
\begin{equation}
\rho\left(  t\right)  =\frac{\rho\left(  1\right)  }{t} . \label{S9E9}%
\end{equation}

Suppose that the homogeneity of the kernel $B$ is $\gamma.$ Then, using
(\ref{S9E9}) we can see that the scaling properties of the four terms in
(\ref{S9E8}) are given by
\[
\frac{\left[  g\right]  }{\left[  t\right]  },\ \frac{\left[  g\right]
}{\left[  t\right]  },\ \frac{\left[  g\right]  }{\left[  t\right]  }%
,\ \frac{\left[  w\right]  ^{\gamma}\left[  g\right]  }{\left[  t\right]  }.
\]

Therefore, all the terms have the same order of magnitude even if the
temperature increases if $\gamma=0,$ i.e. for Maxwell molecules. We will
restrict to this case in this subsection. In order to transform (\ref{S9E8})
to a form with conserved density we use the change of variables
\[
g\left(  t,w\right)  =\frac{1}{t}\bar{g}\left(  \tau,w\right)  \ \ ,\ \ \tau
=\log\left(  t\right)
\]
whence
\begin{equation}
\partial_{\tau}\bar{g}-Kw_{3}\partial_{w_{2}}\bar{g}-\partial_{w_{3}}\left(
w_{3}\bar{g}\right)  =\mathbb{C}\bar{g}\left(  w\right)  . \label{T4E9}%
\end{equation}

We remark that
\begin{equation}
\partial_{\tau}\left(  \int_{\mathbb{R}^{3}}\bar{g}\left(  \tau,dw\right)
\right)  =0 . \label{T5E1}%
\end{equation}

We now look for similar solutions of (\ref{T4E9}). The conservation property
(\ref{T5E1}) suggests to look for self-similar solutions with the form
\begin{equation}
\bar{g}\left(  \tau,w\right)  =e^{-3\beta\tau}G\left(  \xi\right)
\ \ ,\ \ \xi=\frac{w}{e^{\beta\tau}}. \label{T5E2}%
\end{equation}
Therefore
\begin{equation}
-\beta\partial_{\xi}\left(  \xi\cdot G\right)  -K\partial_{\xi_{2}}\left(
\xi_{3}G\right)  -\partial_{\xi_{3}}\left(  \xi_{3}G\right)  =\mathbb{C}%
G\left(  w\right) . \label{T5E3}%
\end{equation}

This equation is a particular case of (\ref{S4E5}) with $\alpha=\beta$%
\begin{equation}
L=\left(
\begin{array}
[c]{ccc}%
0 & 0 & 0\\
0 & 0 & K\\
0 & 0 & 1
\end{array}
\right)  . \label{T5E3a}%
\end{equation}

Theorem \ref{th:ssprof} will then imply the existence of nontrivial solutions
of (\ref{S5E3}). It is worth to write in detail the eigenvalue problem
yielding $\beta.$ We recall that $\beta$ is the solution $\alpha$ of the
eigenvalue problem (\ref{S6E5}), (\ref{S6E6}) with the largest real part. We
use (\ref{T5E3a}) to write
\[
L_{j,k}=K\delta_{j,2}\delta_{k,3}+\delta_{j,3}\delta_{k,3}.
\]
Then, the eigenvalue problem (\ref{S6E5}), (\ref{S6E6}) becomes
\begin{align*}
\frac{\alpha}{b}\Gamma_{j,k}+\frac{1}{2b}\left(  \left[  K\delta_{j,2}%
+\delta_{j,3}\right]  \Gamma_{k,3}+\left[  K\delta_{k,2}+\delta_{k,3}\right]
\Gamma_{j,3}\right)   &  =-\left(  \Gamma_{j,k}-\Gamma\delta_{j,k}\right)
\ ,\ \ j,\ k=1,2,3,\ \ \\
\Gamma_{j,k} &  =\Gamma_{k,j}\\
\Gamma &  =\frac{1}{3}\left(  \Gamma_{1,1}+\Gamma_{2,2}+\Gamma_{3,3}\right)
\end{align*}
or in more detailed form
\begin{align}
\left(  \frac{\alpha}{b}+1\right)  \Gamma_{1,1}  &  =\Gamma\ \ ,\ \ \left(
\frac{\alpha}{b}+1\right)  \Gamma_{1,2}+\frac{K}{2b}\Gamma_{1,3}%
=0\ \ ,\ \ \left(  \frac{\alpha}{b}+1\right)  \Gamma_{1,3}+\frac{1}{2b}%
\Gamma_{1,3}=0\label{T5E4}\\
\left(  \frac{\alpha}{b}+1\right)  \Gamma_{2,2}+\frac{K}{b}\Gamma_{2,3}  &
=\Gamma\ \ ,\ \ \left(  \frac{\alpha}{b}+1\right)  \Gamma_{2,3}+\frac{K}%
{2b}\Gamma_{3,3}+\frac{1}{2b}\Gamma_{2,3}=0\ ,\ \ \left(  \frac{\alpha}%
{b}+1\right)  \Gamma_{3,3}+\frac{1}{b}\Gamma_{3,3}=\Gamma. \label{T5E5}%
\end{align}

We then have the following result.

\begin{theorem}
\ \label{Th1DilplusShear} Suppose that $B$ in (\ref{eq:collboltzgen}) is
homogeneous of order zero and suppose that $b$ is as in (\ref{S6E1}). There
exists $b_{0}>0$ large and $k_{0}>0$ small such that, for any $\zeta\geq0$ and
any $b\geq b_{0}$ and any $K\in{\mathbb{R}}$ such that $\frac{K}{b}\leq k_{0}$
there exists $\beta\in\mathbb{R}$ and $G\in\mathcal{M}_{+}\left(
\mathbb{R}_{c}^{3}\right)  $ which solves (\ref{S9E8}) and satisfies the
normalization conditions
\begin{equation}
\int_{\mathbb{R}^{3}}G\left(  dw\right)  =1,\ \ \ \int_{\mathbb{R}^{3}}%
w_{j}G\left(  dw\right)  =0,\ \int_{\mathbb{R}^{3}}\left\vert w\right\vert
^{2}G\left(  dw\right)  =\zeta\ . \label{T5E6}%
\end{equation}

Moreover, the following asymptotics holds for $\beta:$%
\begin{equation}
\beta\sim\frac{K^{2}-2b}{6b}+O\left(  \frac{1}{b}\right)  \ \ \text{if\ \ }%
K=O\left(  \sqrt{b}\right)  \ \text{as\ }b\rightarrow\infty. \label{T5E7}%
\end{equation}

\end{theorem}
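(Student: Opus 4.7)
The plan is to reduce Theorem \ref{Th1DilplusShear} to the abstract self-similar existence result, Theorem \ref{th:ssprof}, applied to the rescaled equation (\ref{T5E3}), and then separately perform a perturbative analysis of the $6\times 6$ linear eigenvalue problem (\ref{T5E4})--(\ref{T5E5}) to extract the asymptotics (\ref{T5E7}).

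First I would observe that the planar-shear self-similar equation (\ref{T5E3}) is exactly of the form (\ref{S4E5}) with $\alpha=\beta$ and $L$ given by (\ref{T5E3a}). With the norm (\ref{T3E1}) one has $\Vert L\Vert=\max(|K|,1)$. Hence the smallness hypothesis $\Vert L\Vert\leq k_0 b$ of Theorem \ref{th:ssprof} is satisfied as soon as $b\geq b_0:=1/k_0$ and $|K|/b\leq k_0$, which is exactly the assumption of Theorem \ref{Th1DilplusShear}. Theorem \ref{th:ssprof} then produces $\beta\in\mathbb{R}$ and $G\in\mathcal{M}_+(\mathbb{R}_c^3)$ satisfying (\ref{T5E3}) in the sense of measures together with the normalization (\ref{S4E6}) (with $\zeta$ chosen freely). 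Unwinding the change of variables, $\bar g(\tau,w)=e^{-3\beta\tau}G(e^{-\beta\tau}w)$ solves (\ref{T4E9}), and then $g(t,w)=t^{-1}\bar g(\log t,w)$ solves (\ref{S9E8}) on $t>0$. The mass/momentum/energy normalizations (\ref{T5E6}) for $g$ at a chosen reference time follow from (\ref{S4E6}) after adjusting $\zeta$ by the appropriate power of $e^{\beta\tau}$; this is routine.

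The substantive step is the asymptotic expansion (\ref{T5E7}). By Lemma \ref{EigenPb} the value of $\beta$ furnished by the fixed-point construction is the eigenvalue of the linear problem (\ref{T5E4})--(\ref{T5E5}) with largest real part. I would set $\mu=\alpha/b+1$ and look for the branch that reduces to $\mu=1$ as $K,1/b\to 0$ (which corresponds to the simple eigenvalue $\alpha=0$ of the unperturbed problem). Setting $\Gamma_{1,3}=\Gamma_{1,2}=0$ on this branch (the third equation of (\ref{T5E4}) forces $\Gamma_{1,3}=0$ since $\mu+1/(2b)\neq 0$ near $\mu=1$), one solves successively
\begin{equation*}
\Gamma_{1,1}=\tfrac{\Gamma}{\mu},\qquad \Gamma_{3,3}=\tfrac{\Gamma}{\mu+1/b},\qquad \Gamma_{2,3}=-\tfrac{K}{2b\bigl(\mu+1/(2b)\bigr)\bigl(\mu+1/b\bigr)}\Gamma,
\end{equation*}
and then $\Gamma_{2,2}$ from the first equation of (\ref{T5E5}). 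Substituting into the trace identity $3\Gamma=\Gamma_{1,1}+\Gamma_{2,2}+\Gamma_{3,3}$ gives the scalar equation
\begin{equation*}
\frac{2}{\mu}+\frac{1}{\mu+1/b}+\frac{K^2}{2b^2\,\mu\bigl(\mu+\tfrac{1}{2b}\bigr)\bigl(\mu+\tfrac{1}{b}\bigr)}=3,
\end{equation*}
whose linearization $\mu=1+\varepsilon$ yields $\varepsilon=(K^2-2b)/(6b^2)+O(b^{-2})$ under the scaling $K=O(\sqrt b)$, whence $\alpha=b(\mu-1)$ gives precisely (\ref{T5E7}).

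The main obstacle, beyond the bookkeeping just described, is verifying that the branch identified above is indeed the one with the largest real part selected by Lemma \ref{EigenPb}, rather than, say, the degenerate branch coming from $\mu=-1/(2b)$ or from the five-dimensional kernel of the unperturbed problem. This is handled by the standard perturbation theory for simple eigenvalues: at $K=0$ and $1/b=0$ the eigenvalue $\mu=1$ (i.e.\ $\alpha=0$) of the operator on symmetric matrices is simple with eigenvector $\Gamma_{j,k}=\delta_{j,k}$, so for $(K,1/b)$ in a small neighbourhood of the origin it persists as a simple eigenvalue that varies analytically, while the other eigenvalues are perturbations of $\alpha=-b$, which have strictly smaller real part. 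Combined with the real-coefficient symmetry argument already used in Lemma \ref{EigenPb}, this identifies our branch with $\bar\alpha$ and completes the proof.
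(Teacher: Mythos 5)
Your proposal is correct and takes essentially the same route as the paper: existence is obtained by applying Theorem \ref{th:ssprof} to (\ref{T5E3}) with $\Vert L\Vert\leq k_{0}b$ guaranteed by $b\geq b_{0}$ and $|K|/b\leq k_{0}$, and the asymptotics (\ref{T5E7}) is extracted from the eigenvalue problem (\ref{T5E4})--(\ref{T5E5}) by following the branch of $\mu=\alpha/b+1$ near $1$. Your scalar trace equation is algebraically identical to the paper's cubic (\ref{T5E8}) (with $A=K/2b$, $B=1/2b$), and your expansion reproduces the paper's $\lambda-1=\tfrac{2}{3}A^{2}-\tfrac{2}{3}B+O\left(A^{4}+B^{2}\right)$, hence (\ref{T5E7}).
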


\begin{remark}
Notice that the exponent $\beta$ might have positive or negative values. This
depends on the value of $K.$ In homoenergetic flows described by (\ref{T1E3})
(equivalently (\ref{T5E3a})) there are two competing effects. The dilatation
term tends to decrease the average energy of the molecules (which we will
think as a temperature in spite of the fact that the velocity distribution is
not close to a Maxwellian). On the contrary, the shear term tends to increase
the temperature of the system. The exponent $\beta$ is positive if the effect
of the shear is more important than the one due to dilatation, and as a
consequence the temperature of the molecules increases. On the contrary, if
the effect of the shear is small compared with the one of dilatation, $\beta$
is negative and the temperature of the system decreases, as it might be expected.
\end{remark}

\begin{remark}
In the original set of variables the self-similar solution has the form
\[
g\left(  t,w\right)  =\frac{1}{t^{4}}G\left(  \frac{w}{t}\right)
\]
with $G$ as in Theorem \ref{Th1DilplusShear}.
\end{remark}

\begin{proof}
The existence of a real number $\beta$ and a measure $G$ satisfying
(\ref{T5E6}) and solving (\ref{T5E3}), if $b$ is sufficiently large and
$\frac{K}{b}$ is sufficiently small, is a straightforward consequence of Theorem
\ref{th:ssprof} since, under these assumptions, $\left\Vert L\right\Vert $ in
(\ref{T5E3a}) is small.

It only remains to prove the asymptotics (\ref{T5E7}). To this end we describe
in detail the solutions of the eigenvalue problem (\ref{T5E4}), (\ref{T5E5}).
We denote $\lambda=\frac{\alpha}{b}+1.$ The problem (\ref{T5E4}), (\ref{T5E5})
has five eigenvalues, one of them with multiplicity two, namely
\[
\lambda=-\frac{1}{2b},\ \text{with\ eigenfunction }\Gamma_{1,2}=K\Gamma
_{1,3},\ \Gamma_{1,1}=\Gamma_{2,2}=\Gamma_{3,3}=\Gamma_{2,3}=0
\]%
\[
\lambda=0\ \ \text{with 2d subspace of eigenfunctions }\Gamma_{3,3}%
=\Gamma_{1,3}=\Gamma_{2,3}=\Gamma_{1,1}+\Gamma_{2,2}=0.
\]
The three roots of the equation
\begin{equation}
-\frac{2}{3}A^{2}+2B^{2}\lambda-\frac{4}{3}B^{2}+3B\lambda^{2}-\frac{7}%
{3}B\lambda+\lambda^{3}-\lambda^{2}=0 \label{T5E8}%
\end{equation}
with $A=\frac{K}{2b}$ and $B=\frac{1}{2b}.$ If $A$ and $B$ are small one of
the roots of (\ref{T5E8}) would be close to $\lambda=1$ and the other two
would be close to zero. Since we are interested in the root with the largest
real part we compute the asymptotics of the root close to one. Using the
Implicit Function Theorem we obtain
\[
\lambda-1=\frac{2}{3}A^{2}-\frac{2}{3}B+O\left(  A^{4}+B^{2}\right)
\]
as $\left(  A,B\right)  \rightarrow0,$ whence (\ref{T5E7}) follows.
\end{proof}

\subsubsection{Planar shear with $K=0$. } 

We now consider self-similar solutions for homoenergetic flows (\ref{B1_0}),
(\ref{B4_0}) with $L\left(  t\right)  $ as in (\ref{T1E3}) with $K=0.$
Actually this case can be considered a limit case of the one considered in the
previous subsection (namely $K\rightarrow0$), but we discuss it separately
because the competition between dilatation and shear effects does not take
place. In this case $g$ solves (\ref{D1_0}) which in this case becomes,
ignoring 
the term $O\left(  \frac{1}{t^{2}}\right)  $ as in the previous
subsection,
\begin{equation}
\partial_{t}g-\frac{1}{t}w_{3}\partial_{w_{3}}g=\mathbb{C}g\left(  w\right) .
\label{T5E9}%
\end{equation}
Using (\ref{S8E5}) we obtain
\begin{equation}
\rho\left(  t\right)  =\frac{\rho\left(  1\right)  }{t}. \label{T5E9a}%
\end{equation}

A dimensional analysis argument similar to the one in the previous subsection
shows that the balance between the hyperbolic term $-\frac{1}{t}w_{3}%
\partial_{w_{3}}g$ and the collision term $\mathbb{C}g\left(  w\right)  $
takes place for kernels $B$ with homogeneity $\gamma=0.$ We will restrict our
analysis to that case.

We change variables in order to obtain a problem with conserved "mass". We
define
\[
g\left(  t,w\right)  =\frac{1}{t}\bar{g}\left(  \tau,w\right)  \ \ ,\ \ \tau
=\log\left(  t\right) .
\]
Then
\begin{equation}
\partial_{\tau}\bar{g}-\partial_{w_{3}}\left(  w_{3}\bar{g}\right)
=\mathbb{C}\bar{g}\left(  w\right)  . \label{T6E1}%
\end{equation}
Thus $\partial_{\tau}\left(  \int_{\mathbb{R}^{3}}\bar{g}\left(
\tau,dw\right)  \right)  =0.$ Taking this into account we look for
self-similar solutions of (\ref{T6E1}) with the form
\begin{equation}
\bar{g}\left(  \tau,w\right)  =e^{-3\beta\tau}G\left(  \xi\right)
\ ,\ \ \xi=\frac{w}{e^{\beta\tau}} \label{T6E1a}%
\end{equation}
where $G$ solves
\begin{equation}
-\beta\partial_{\xi}\left(  \xi\cdot G\right)  -\partial_{\xi_{3}}\left(
\xi_{3}G\right)  =\mathbb{C}G\left(  w\right)  . \label{T6E2}%
\end{equation}

This equation is a particular case of (\ref{S4E5}) with $\alpha=\beta$ and
\begin{equation}
L=\left(
\begin{array}
[c]{ccc}%
0 & 0 & 0\\
0 & 0 & 0\\
0 & 0 & 1
\end{array}
\right)  . \label{T6E2a}%
\end{equation}

We remark also that (\ref{T6E2}), (\ref{T6E2a}) are analogous to (\ref{T5E3}),
(\ref{T5E3a}) with $K=0.$ We will prove the existence of nontrivial solutions
of (\ref{T6E2}) for some suitable $\beta,$ using Theorem \ref{th:ssprof}. We
recall that $\beta$ is obtained by means of the solution of an eigenvalue
problem (cf. (\ref{S6E5}), (\ref{S6E6})). Actually, this eigenvalue problem in
the case of $L$ given by (\ref{T6E2a}) takes the form (\ref{T5E4}),
(\ref{T5E5}) with $K=0$:%

\begin{align}
\left(  \frac{\alpha}{b}+1\right)  \Gamma_{1,1}  &  =\Gamma\ \ ,\ \ \left(
\frac{\alpha}{b}+1\right)  \Gamma_{1,2}=0\ \ ,\ \ \left(  \frac{\alpha}%
{b}+1\right)  \Gamma_{1,3}+\frac{1}{2b}\Gamma_{1,3}=0\label{T6E3a}\\
\left(  \frac{\alpha}{b}+1\right)  \Gamma_{2,2}  &  =\Gamma\ \ ,\ \ \left(
\frac{\alpha}{b}+1\right)  \Gamma_{2,3}+\frac{1}{2b}\Gamma_{2,3}%
=0\ ,\ \ \left(  \frac{\alpha}{b}+1\right)  \Gamma_{3,3}+\frac{1}{b}%
\Gamma_{3,3}=\Gamma. \label{T6E3b}%
\end{align}

We then have the following result.

\begin{theorem}
\ \label{Th1Dil} Suppose that $B$ in (\ref{eq:collboltzgen}) is homogeneous of
order zero and suppose that $b$ is as in (\ref{S6E1}). There exists $b_{0}>0$
large such that, for any $\zeta\geq0$ and any $b\geq b_{0}$ there exists
$\beta\in\mathbb{R}$ and $G\in\mathcal{M}_{+}\left(  \mathbb{R}_{c}%
^{3}\right)  $ which solves (\ref{S9E8}) and satisfies the normalization
conditions
\begin{equation}
\int_{\mathbb{R}^{3}}G\left(  dw\right)  =1,\ \ \ \int_{\mathbb{R}^{3}}%
w_{j}G\left(  dw\right)  =0,\ \int_{\mathbb{R}^{3}}\left\vert w\right\vert
^{2}G\left(  dw\right)  =\zeta\ . \label{T6E4}%
\end{equation}
Moreover,
\begin{equation}
\beta=b\left[  \frac{1}{2}\left[  -\left(  \frac{1}{b}+1\right)
+\sqrt{\left(  \frac{1}{b}-1\right)  ^{2}+\frac{8}{3}\frac{1}{b}}\right]
\right] . \label{T6E4a}%
\end{equation}%

\end{theorem}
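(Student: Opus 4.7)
The proof is essentially a direct specialization of Theorem \ref{th:ssprof} to the matrix $L$ in (\ref{T6E2a}), together with an explicit computation of the leading eigenvalue of the moment eigenvalue problem. The plan is as follows. First, since $L$ in (\ref{T6E2a}) has $\|L\|=1$ in the norm (\ref{T3E1}), the smallness condition $\|L\|\leq k_0 b$ of Theorem \ref{th:ssprof} becomes $1\leq k_0 b$, which is ensured by choosing $b_0 \geq 1/k_0$. For any $b\geq b_0$ and any $\zeta\geq 0$, Theorem \ref{th:ssprof} then produces $\alpha\in\mathbb R$ and $G\in\mathcal M_+(\mathbb R^3_c)$ solving (\ref{S4E5}) in the sense of measures with the normalization (\ref{T6E4}). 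We take $\beta=\alpha$; this gives a self-similar solution of (\ref{T5E9}) of the form (\ref{T6E1a}).

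Second, it remains to identify the value of $\beta$ coming from the eigenvalue problem (\ref{T6E3a})--(\ref{T6E3b}), i.e.\ the solution of (\ref{S6E5})--(\ref{S6E6}) with the largest real part. Setting $\lambda=\alpha/b+1$, the equations for the off-diagonal entries $\Gamma_{1,2},\Gamma_{1,3},\Gamma_{2,3}$ force them to vanish for any $\lambda$ not equal to $0$ or $-1/(2b)$; in particular this will hold at the branch of interest. On the diagonal we obtain $\Gamma_{1,1}=\Gamma_{2,2}=\Gamma/\lambda$ and $\Gamma_{3,3}=\Gamma/(\lambda+1/b)$, and consistency with $\Gamma=\tfrac{1}{3}(\Gamma_{1,1}+\Gamma_{2,2}+\Gamma_{3,3})$ (assuming $\Gamma\neq 0$, which rules out spurious branches with $\Gamma=0$) gives
\begin{equation*}
3 \;=\; \frac{2}{\lambda} + \frac{1}{\lambda+1/b},
\end{equation*}
i.e.\ $\lambda^2 + (1/b-1)\lambda - 2/(3b)=0$. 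The two roots are real, and the larger one is
\begin{equation*}
\lambda \;=\; \tfrac{1}{2}\Bigl[(1-1/b) + \sqrt{(1/b-1)^2 + 8/(3b)}\Bigr].
\end{equation*}
Converting back via $\beta = b(\lambda-1)$ and simplifying $\tfrac{1}{2}(1-1/b)-1 = -\tfrac{1}{2}(1+1/b)$ yields formula (\ref{T6E4a}). A brief check shows this root is larger than the other diagonal root and than the eigenvalues $\lambda=0$, $\lambda=-1/(2b)$ (which correspond to branches with $\Gamma=0$) for $b\geq b_0$, so it is indeed the eigenvalue with largest real part selected by Lemma \ref{EigenPb}.

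There is essentially no obstacle: the existence part is entirely supplied by Theorem \ref{th:ssprof} (and the corresponding Lemma \ref{EigenPb} on the leading eigenvalue), and the only computation is the quadratic equation above. The one thing worth checking with a little care is that the formal eigenvector $\Gamma_{j,k}=\operatorname{diag}(1/\lambda,1/\lambda,1/(\lambda+1/b))$ at the chosen $\lambda$ is indeed positive definite, so that the normalization (\ref{T6E4}) can be achieved with a genuine measure $G\geq 0$; for $b\geq b_0$ with $b_0$ large this is clear since then $\lambda$ is close to $1$ and $\lambda+1/b>0$, placing the eigenvector in the perturbative regime of Lemma \ref{EigenPb}.
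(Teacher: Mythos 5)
Your proposal is correct and follows essentially the same route as the paper: existence is obtained by invoking Theorem \ref{th:ssprof} (with $b_0$ chosen so that $\Vert L\Vert\leq k_0 b$), and $\beta$ is identified by solving the eigenvalue problem (\ref{T6E3a})--(\ref{T6E3b}), where the branches $\lambda=0$ and $\lambda=-1/(2b)$ have $\Gamma=0$ and the diagonal branch yields the quadratic $\lambda^2+(1/b-1)\lambda-2/(3b)=0$ whose larger root gives (\ref{T6E4a}), exactly as in the paper's argument.
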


\begin{remark}
It might be readily seen that $\beta<0$ for any $b>0.$ Moreover $\beta
=\beta\left(  b\right)  $ is a decreasing function of $b.$ (See Figure
\ref{fig:graph}). Therefore, the temperature of this system decreases as $t$
increases. Notice that this solution reduces to the one obtained in Theorem
\ref{Th1DilplusShear} if we take $K=0.$ We obtain $\beta\rightarrow-\frac
{1}{3} $ as $b\rightarrow\infty.$ Thus, the temperature decreases faster
as $b$ increases.

\begin{figure}[th]
\centering \caption{Graphic of the function $\beta
=\beta\left(  b\right)  $. \label{fig:graph} }
\includegraphics [scale=0.5]{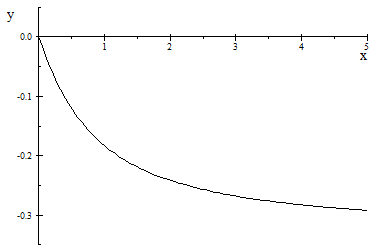}
\end{figure}

\end{remark}

\begin{remark}
In the original set of variables the self-similar solution has the form
\[
g\left(  t,w\right)  =\frac{1}{t^{4}}G\left(  \frac{w}{t}\right)
\]
with $G$ as in Theorem \ref{Th1DilplusShear}.
\end{remark}

\begin{proof}
The existence of a real number $\beta$ and a measure $G$ satisfying
(\ref{T5E6}) and solving (\ref{T5E3}) if $b$ is sufficiently large and
$\frac{K}{b}$ is sufficiently small is just a consequence of Theorem
\ref{th:ssprof} since under these assumptions $\left\Vert L\right\Vert $ in
(\ref{T5E3a}) is small.

It only remains to prove the asymptotics (\ref{T5E7}). To this end we describe
in detail the solutions of the eigenvalue problem (\ref{T5E4}), (\ref{T5E5}).
We denote $\lambda=\frac{\alpha}{b}+1.$ The problem (\ref{T6E3a}),
(\ref{T6E3b}) has the following eigenvalues and eigenvectors
\[
\lambda=0\ :\ \Gamma_{1,3}=\Gamma_{2,3}=\Gamma_{3,3}=0\ ;\ \Gamma_{1,1}%
+\Gamma_{2,2}=0\ ,\ \ \Gamma_{1,2}\text{ arbitrary}%
\]%
\[
\lambda=-\frac{1}{2b}\ :\ \Gamma_{1,2}=\Gamma_{3,3}=\Gamma_{1,1}=\Gamma
_{2,2}=0\ ;\ \Gamma_{1,3},\ \Gamma_{2,3}\text{ arbitrary} .
\]

Notice that the subspaces of eigenvectors of each of these eigenvalues have
dimension two. The last remaining eigenvectors are
\begin{align*}
\lambda_{1}  &  =\frac{1}{2}\left[  -\left(  B-1\right)  +\sqrt{\left(
B-1\right)  ^{2}+\frac{8}{3}B}\right] \\
\lambda_{2}  &  =\frac{1}{2}\left[  -\left(  B-1\right)  -\sqrt{\left(
B-1\right)  ^{2}+\frac{8}{3}B}\right]  .
\end{align*}

Since $\beta$ is given by the eigenvalue of (\ref{T6E3a}), (\ref{T6E3b}) with
the largest real part, i.e. $\lambda_{1}$ we obtain (\ref{T6E4a}).
\end{proof}

\section{Conjectures on the non-self-similar behavior}
\label{conj}

We recall that, the collision operator in \eqref{D1_0} is quadratic. It rescales like:
\begin{equation}
\rho\left(  t\right)  \left[  w\right]  ^{\gamma}\left[  g\right]
\ \label{eq:colltermscaling}%
\end{equation}
where $\left[  w\right]  $ is the order of magnitude of $w$, $\gamma$ is the homogeneity of the collision kernel $B$ (cf. \eqref{S8E7}) and $\left[
g\right]  $ the order of magnitude of $g$

The term $L\left(  t\right)  $ can yield different behaviors as
$t\rightarrow\infty.$ We denoted  the term $L\left(  t\right)  w\cdot\partial_{w}g$ as hyperbolic term. It can be constant, or it can
behave like a power law (increasing or decreasing). As we pointed out in the introduction, the key idea is that there
are three possibilities depending on the value of  the homogeneity $\gamma$ and the function
yielding the scaling of $\left[  w\right].$ Either the hyperbolic term is
larger than the collision term as $t\rightarrow\infty$, either the collision
term is larger or either the hyperbolic term and the collision term have the
same order of magnitude. 
Suppose that $L\left(  t\right)  $ scales like a function $\eta\left(
t\right)  .$ The hyperbolic term scales then like $\eta\left(  t\right)
\left[  g\right]  $ and the collision term scales as in (\ref{eq:colltermscaling}). Therefore,
we need to compare the terms: 
$
\eta\left(  t\right)  \text{ and }\rho\left(  t\right)  \left[  w\right]
^{\gamma}.
$

In this section we give conjectures for the cases in which the hyperbolic term and the collision term do not balance.  
More precisely, the cases for which
the hyperbolic terms dominate, and those in which the collision term dominates.  We believe, based on formal calculations, 
that the latter can be handled by the Hilbert expansion, but using as  small parameter $1/t$.  These lengthy formal calculations are presented elsewhere \cite{JMNV},
and here we give conjectures based on these calculations to complete most of the cases classified in 
Section \ref{ss:classeqsol}.  See Table \ref{sec:tableresults} for these conjectures.

The cases in which the hyperbolic terms dominate have two subcases.  In one subcase, 
according to a simplified model (presented in \cite{JMNV}), the collisions term is
formally very small as $t\rightarrow\infty,$ but has a huge effect on the
particle distributions.  We do not make conjectures about this interesting subcase here. 
In the other subcase the hyperbolic terms are so dominant that the collisions have no
effect on the asymptotic behavior of the solution (``frozen collisions'').  

We describe a few details on these formal calculations below.

\bigskip


\subsection{Collision-dominated behavior}

As we discussed at the beginning of this section, we can have three different asymptotic behaviors  for the solutions of \eqref{D1_0} depending on the value of the homogeneity $\gamma$ and the function
yielding the scaling of $\left[  w\right].$ 
Here we focus on the case in which, for some values of $\gamma$, the
collision term dominates the hyperbolic term. From now on, we refer to this case as the ``collision-dominated behavior'' case. 

For collision-dominated behavior we have computed the asymptotics of the velocity dispersion 
using a suitable Hilbert expansion around the Maxwellian equilibrium.   To formulate our conjecture
based on this expansion we define for $t>0$
\begin{equation}
\mu(t) = \left\{ \begin{array}{ll}
1, &  {\rm simple\ shear}, \\
1/t, &  {\rm planar\ shear,\ or\ 2d\ dilatation,\ or\ combined\ shear}, \\
\end{array} \right. \label{mu(t)}
\end{equation}
In the long time asymptotics, the solutions behave like a Maxwellian distribution with increasing or decreasing temperature depending on the sign of the homogeneity parameter $\gamma$.

\vspace{2mm}
\noindent {\bf Conjecture}.
{\it \label{th:genHilbexp} Let $g\left(  \cdot\right)  \in C\left(  \left[
0,\infty\right]  :\mathcal{M}_{+}\left(  \mathbb{R}_{c}^{3}\right)  \right)  $
be a mild solution in the sense of Definition \ref{mildSol} of the Boltzmann
equation  with cross-section $B$ and let $\mu$ be defined in the various cases by
(\ref{mu(t)}). Then, for $t\to\infty$, the solution behaves like
a Maxwellian distribution, i.e.
\begin{equation}
\label{eq:MaxwHilb}
g(w,t) \to \tilde{C} \beta(t)^{\frac 3 2} e^{-\beta(t) \left\vert w \right\vert ^{2}}\quad\text{in}\quad
L^{2}\left(  \mathbb{R}^{3};e^{-\left\vert w
\right\vert ^{2}}dw\right)  .
\end{equation}
where $\tilde{C}=
\frac{1}{(2\pi)^{\frac 3 2}}$. 
More precisely, we have the following cases.
\begin{itemize}
\item[1)] Assume that
\begin{equation}
\label{eq:case1Hilb}\Tr(L)\neq 0
\end{equation}
with $L$ as in \eqref{B7_0}.
We define
$a:=\frac{2}{3} \Tr(L)$.  If $\mu(t)e^{-\frac{\gamma}{2}at}\to\infty$ the asymptotic behavior is given by a Maxwellian distribution
\eqref{eq:MaxwHilb} with 
\begin{equation}
\label{eq:beta1Hilb}\beta(t)=C\,e^{at} \quad {\rm as} \  \  t \to \infty,
\end{equation}
where $C>0$ is a numerical constant.
\item[2)] Let $\gamma>0$ and assume that
\begin{equation}\label{eq:assmu}
\int_{0}^{+\infty}\frac{ds}{\mu(s)}=\infty
\end{equation}
and
\begin{equation}
\label{eq:case2Hilb}\Tr(L)=0
\end{equation}
with $L$ as in \eqref{B7_0}.
Then the asymptotic behavior is given by a Maxwellian distribution
\eqref{eq:MaxwHilb} with increasing temperature where $\beta(t)$ satisfies

\begin{equation}
\label{eq:beta2Hilb}\beta(t) \simeq \left( \gamma\, b \int_{0}^{t} \frac{ds}{\mu(s)}\right)^{-\frac{2}{\gamma}} \quad \text{as}\quad t\to\infty
\end{equation}
with $b=-\left\langle \xi \cdot L \xi, (\mathbb{L})^{-1}[ (\xi\cdot L\xi)] ) \right\rangle_w\,$ (Green-Kubo formula).
\end{itemize}
}

Further details about this conjecture can be found in \cite{JMNV}. We emphasize that in the case 1) with $\mu(t)=1$,  in order to obtain a dynamics dominated by collisions, we must choose the homogeneity $\gamma$ satisfying the condition $a\cdot \gamma<0$.

\subsection{Hyperbolic-dominated behavior}

As mentioned at the beginning of this section we focus on the case of frozen collisions, i.e, that the collisions term
becomes so small that the effect of collisions is irrelevant as $t \to \infty$.  The formal argument underlying our 
conjectures is based on control of collision rate (gain term) for molecular densities that satisfy the asymptotic 
first order hyperbolic equation $\partial_t g + \partial_w \cdot (L(t) w g) = 0$.  If the resulting collision rate is decreasing
in time, we refer to this case as hyperbolic-dominated behavior.  More precisely, our terminology {\it frozen collisions}
refers to the case of exponentially decreasing behavior of the collision rate as $t \to \infty$.  

For frozen collisions we conjecture the following behavior. 
\begin{enumerate}
\item {\bf Homogeneous dilatation:}   
for $\gamma > -2$ the solution $g(t,w)$ converges in the sense of measures
to a limit distribution that depends on the initial datum.  
\item {\bf Cylindrical dilatation:} 
for $\gamma > -2$ the solution $g(t,w)$ converges in the sense of measures
to a limit distribution that depends on the initial datum.  
\item {\bf Simple shear:} 
for $\gamma < -1$ the solution $g(t,w)$ converges in the sense of measures
to a limit distribution that depends on the initial datum.  
\end{enumerate}
Note that these regimes are complementary to those given by the formal Hilbert expansion and the self-similar
profile, except the case of simple shear, for which there is a gap $-1 \le \gamma < 0$.  In this gap the collision
rate is small but it still plays a significant role in the formal asymptotic behavior of the Boltzmann equation. 
A detailed justification for these conjectures can be found in \cite{JMNV}.


\section{Entropy formulas}

\label{sec:entropy}

\bigskip

Homoenergetic solutions are characterized by constant values in space of the
particle density $\rho=\rho\left(  t\right)  $ and internal energy
$\varepsilon=\varepsilon\left(  t\right)  .$ We now discuss the form of
another relevant thermodynamic magnitude, namely the entropy. We identify for
the Boltzmann equation the entropy with minus the $H-$function. Then if the
velocity distribution is given by $f=f\left(  t,x,v\right)  ,$ we obtain the
following entropy density for particle at a given point $x$
\[
\frac{s\left(  x,t\right)  }{\rho\left(  t\right)  }=-\frac{1}{\rho\left(
t\right)  }\int f\left(  t,x,v\right)  \log\left(  f\left(  t,x,v\right) .
\right)  d^{3}v
\]

It then readily follows, using (\ref{B1_0}) the entropy density for particle
is independent of $x$ and given by
\begin{equation}
\frac{s\left(  t\right)  }{\rho\left(  t\right)  }=-\frac{1}{\rho\left(
t\right)  }\int_{\mathbb{R}^{3}}g\left(  t,w\right)  \log\left(  g\left(
t,w\right)  \right)  d^{3}w . \label{U1E8}%
\end{equation}

It is interesting to notice that in several of the solutions discussed above,
the formulas for entropy for particle have many analogies with the
corresponding formulas for equilibrium distributions, in spite of the fact
that the distributions obtained in this paper deal with nonequilibrium situations.

\bigskip

The case in which the analogy between the entropy formulas for the equilibrium
case and the considered solutions is the largest, nonsurprisingly, if the
particle distribution is given by a Hilbert expansion (see details in \cite{JMNV}). However, there is also
a large analogy between the entropy formulas of equilibrium distributions and
self-similar solutions. This is due to the fact that to a large extent, the
entropy formulas depend on the scaling properties of the distributions.
Indeed, notice that both in the cases of solutions given by Hilbert
distributions or self-similar solutions we can approximate $g\left(
t,w\right)  $ as
\begin{equation}
g\left(  w,t\right)  \sim\frac{1}{a\left(  t\right)  }G\left(  \frac
{w}{\lambda\left(  t\right)  }\right)  \text{ as }t\rightarrow\infty
\label{U1E7}%
\end{equation}
for suitable functions $a\left(  t\right)  ,\ \lambda\left(  t\right)  $ which
are related to the particle density and the average energy of the particles.
In the case of solutions given by Hilbert expansions the distribution $G$ is a
Maxwellian, which can be assumed to be normalized to have density one and
temperature one. Moreover, we will assume also that the mass of the particles is normalized to $m=2$ in order to get simpler formulas. 
This implies that the Maxwellian distribution takes the form $G_{M}(\xi)=
{\pi^{-\frac{3}{2}}}{e^{-|\xi|^2}}.$

In the case of the self-similar solutions considered in this
paper, $G$ is a non Maxwellian distribution.

We define the energy for particle $e\left(  t\right)  $ as
\[
\rho\left(  t\right)  e\left(  t\right)  =\varepsilon\left(  t\right)
=\int_{\mathbb{R}^{3}}\left\vert w\right\vert ^{2}gdw.
\]
Then, using the approximation (\ref{U1E7})
\[
\rho=\frac{\lambda^{3}}{a}\int G\left(  \xi\right)  d\xi\ ,\ \ e=\lambda
^{2}\frac{\int\left\vert \xi\right\vert ^{2}Gd\xi}{\int G\left(  \xi\right)
d\xi}.
\]
Therefore
\[
\frac{e^{\frac{3}{2}}}{\rho}=a\frac{\left(  \int\left\vert \xi\right\vert
^{2}Gd\xi\right)  ^{\frac{3}{2}}}{\left(  \int G\left(  \xi\right)
d\xi\right)  ^{\frac{5}{2}}}
\]
and
\[
\log\left(  \frac{e^{\frac{3}{2}}}{\rho}\right)  =\log\left(  a\right)
+\log\left[  \frac{\left(  \int\left\vert \xi\right\vert ^{2}Gd\xi\right)
^{\frac{3}{2}}}{\left(  \int G\left(  \xi\right)  d\xi\right)  ^{\frac{5}{2}}%
}\right] .
\]
On the other hand (\ref{U1E8}) yields
\[
\frac{s}{\rho}=\log\left(  a\right)  -\frac{\int G\log\left(  G\right)  d\xi
}{\int G\left(  \xi\right)  d\xi}.
\]
Then
\begin{equation}
\frac{s}{\rho}=\log\left(  \frac{e^{\frac{3}{2}}}{\rho}\right)  +C_{G}%
\label{U1E9}%
\end{equation}
where $C_{G}$ is
\begin{equation}
C_{G}=-\frac{\int G\log\left(  G\right)  d\xi}{\int G\left(  \xi\right)  d\xi
}-\log\left[  \frac{\left(  \int\left\vert \xi\right\vert ^{2}Gd\xi\right)
^{\frac{3}{2}}}{\left(  \int G\left(  \xi\right)  d\xi\right)  ^{\frac{5}{2}}%
}\right] . \label{U2E1}%
\end{equation}

The formula (\ref{U1E9}) has the same form as the usual formula of the entropy
for the equilibrium case, except for the value of the constant $C_{G}.$ In the case of
solutions given by Hilbert expansions the value of $C_{G}$ is the same as the
one in the formula of the entropy for the equilibrium case. Therefore, in the case of
the solutions obtained in this paper which can be approximated by Hilbert
expansions, the asymptotic formula for the entropy by particle is the same as
the one for the equilibrium case.

In the case of the self-similar solutions the value of the constant $C_{G}$
differs from the corresponding value for the one for the equilibrium case. Since the
entropy tends to a maximum for a given value of the particle density and
energy, it follows that $G_{G}<C_{M},$ where $C_{M}$ is the corresponding
value of the constant for a Maxwellian distribution with density one and
temperature one and it takes the value $C_{M}=\frac{3}{2}\big[1-\log\big(\frac{3}{2}\big)\big]$.

In the case of hyperbolic-dominated behavior the formula of the
entropy for the corresponding solutions does not necessarily resemble the formula of the entropy for the equilibrium case, because in general the scaling properties of the particle distributions
are very different from the ones taking place in the case of gases described
by Maxwellian distributions. For further discussions in this direction we refer to \cite{JMNV}. 

\bigskip

\newpage

\section{Table of results}

\label{sec:tableresults}

We collect here all the results obtained in this paper and in \cite{JMNV}.

\begin{itemize}
\item \textbf{Simple shear.}

The critical homogeneity corresponds to $\gamma=0$, i.e. to Maxwell molecules.
\bigskip%

\begin{tabular}
[c]{cc}%
\parbox{7 cm}{\hspace{1cm}\underline{Critical case} \vspace{1mm}
\hspace{1.4cm} ($\gamma=0$) } &
\parbox{7cm}{\hspace{0.7cm}\underline{Supercritical case}  \vspace{1mm}
\hspace{1.2cm}($\gamma>0$)}
\end{tabular}

\vspace{4mm}%

\begin{tabular}
[c]{c|c}%
\parbox{7cm}{
Self-similar solutions \vspace{1.3 mm}
with increasing temperature
} & \parbox{7 cm}{
Maxwellian distribution with \vspace{1.3 mm}
time dependent temperature \vspace{1.3mm}
\hspace{1.5mm}
(Hilbert expansion)
}
\end{tabular}

\bigskip

\item \textbf{Homogeneous dilatation.}

The critical homogeneity corresponds to $\gamma=-2$. \bigskip%

\begin{tabular}
[c]{cc}%
\parbox{7cm}{\hspace{1cm}\underline{Critical case} \vspace{1mm}
\hspace{1.4cm} ($\gamma=-2$) } &
\parbox{7cm}{\hspace{0.7cm}\underline{Subcritical case}  \vspace{1mm}
\hspace{1.2cm}($\gamma<-2$)}
\end{tabular}

\vspace{4mm}%

\begin{tabular}
[c]{c|c}%
\parbox{7cm}{
Maxwellian distribution with \vspace{1.3 mm}
time dependent temperature \vspace{1.3mm}
\hspace{1.5mm}
(Hilbert expansion)
} & \parbox{7cm}{
Maxwellian distribution with \vspace{1.3 mm}
time dependent temperature \vspace{1.3mm}
\hspace{1.5mm}
(Hilbert expansion)
}
\end{tabular}
\bigskip

\item \textbf{Planar shear.}

The critical homogeneity corresponds to $\gamma=0$, i.e. to Maxwell molecules.
\bigskip%

\begin{tabular}
[c]{cc}%
\parbox{7cm}{\hspace{1cm}\underline{Critical case} \vspace{1mm}
\hspace{1.4cm} ($\gamma=0$) } &
\parbox{7cm}{\hspace{0.7cm}\underline{Subcritical case}  \vspace{1mm}
\hspace{1.2cm}($\gamma<0$)}
\end{tabular}

\vspace{4mm}%

\begin{tabular}
[c]{c|c}%
\parbox{7cm}{ \hspace{1.8mm}
Self-similar solutions \vspace{1.3 mm}
} & \parbox{7cm}{
Maxwellian distribution with \vspace{1.3 mm}
time dependent temperature \vspace{1.3mm}
\hspace{1.5mm}
(Hilbert expansion)
}
\end{tabular}
\bigskip

\item \textbf{Planar shear with $K=0$. 
 }

The critical homogeneity corresponds to $\gamma=0$, i.e. to Maxwell molecules.
\bigskip%

\begin{tabular}
[c]{cc}%
\parbox{7cm}{\hspace{1cm}\underline{Critical case} \vspace{1mm}
\hspace{1.4cm} ($\gamma=0$) } &
\parbox{7cm}{\hspace{0.7cm}\underline{Subcritical case}  \vspace{1mm}
\hspace{1.2cm}($\gamma<0$)}
\end{tabular}

\vspace{3mm}%

\begin{tabular}
[c]{c|c}%
\parbox{7cm}{ \hspace{1.8mm}
Self-similar solutions \vspace{1.3 mm}
} & \parbox{7cm}{
Maxwellian distribution with \vspace{1.3 mm}
time dependent temperature \vspace{1.3mm}
\hspace{1.5mm}
(Hilbert expansion)
}
\end{tabular}
\bigskip

\item \textbf{Cylindrical dilatation.}

In this case we have two critical homogeneities: $\gamma=-\frac{3}{2}$ and
$\gamma=-{2}$. \bigskip%

\begin{tabular}
[c]{cc}%
\parbox{7cm}{\hspace{1cm}\hspace{0.8cm} ($\gamma> -{2}$) } &
\parbox{7cm}{\hspace{0.7cm}\hspace{1.2cm}($\gamma<-\frac{3}{2}$)}
\end{tabular}

\vspace{3mm}%

\begin{tabular}
[c]{c|c}%
\parbox{7cm}{ \hspace{1.8mm}
Frozen collisions \vspace{1.3 mm}
} & \parbox{7cm}{
Maxwellian distribution with \vspace{1.3 mm}
time dependent temperature \vspace{1.3mm}
\hspace{1.5mm}
(Hilbert expansion)
}
\end{tabular}
\bigskip

\item \textbf{Combined shear in orthogonal directions $(K_{1}, K_{2}, K_{3})$
with $K_{1}K_{3}\neq0$.}

The critical homogeneity corresponds to $\gamma=0$, i.e. to Maxwell molecules.
\bigskip%

\begin{tabular}
[c]{cc}%
\parbox{7cm}{\hspace{1cm}\underline{Critical case} \vspace{1mm}
\hspace{1.4cm} ($\gamma=0$) } &
\parbox{7cm}{\hspace{0.7cm}\underline{Supercritical case}  \vspace{1mm}
\hspace{1.2cm}($\gamma>0$)}
\end{tabular}

\vspace{3mm}%

\begin{tabular}
[c]{c|c}%
\parbox{7cm}{ \hspace{1.3mm}
Non Maxwellian distribution \vspace{1.3 mm}
} & \parbox{7cm}{
Maxwellian distribution with \vspace{1.3 mm}
time dependent temperature \vspace{1.3mm}
\hspace{1.5mm}
(Hilbert expansion)
}
\end{tabular}
\bigskip

\end{itemize}

\section{Conclusions}

\label{sec:conclusions}

We have obtained several examples of long time asymptotics for
homoenergetic flows of the Boltzmann equation. These flows yield a very rich
class of possible behaviors. Homoenergetic flows can be characterized by a
matrix $L\left(  t\right)  $ which describes the deformation taking place in
the gas. The behavior of the solutions obtained in this paper depends on the
balance between the hyperbolic terms of the equation, which are proportional
to $L\left(  t\right)  $ and the homogeneity of the collision kernel. Roughly
speaking the flows can be classified in three different types, which
correspond to the situations in which the hyperbolic terms are the largest ones
as $t\rightarrow\infty,$ the collision terms are the dominant ones and both of
them have a similar order of magnitude respectively.

In this paper, we provided a rigorous proof of the existence of self-similar solutions
yielding a non Maxwellian distribution of velocities in the case in which the hyperbolic terms and the collisions balance. 
A distinctive feature of
these self-similar solutions is that the corresponding particle distribution
does not satisfy a detailed balance condition. In these solutions the particle
velocities are given by a subtle interplay between particle collisions and shear.

The solutions obtained in this paper yield interesting insights about the
mechanical properties of Boltzmann gases under shear. On the other hand, the
results of this paper suggest many interesting mathematical questions which
deserve further investigation. We have obtained in several cases critical
exponents for the homogeneity of the collision kernel. At the values of those
critical exponents we expect to have self-similar velocities distributions.
This has been proved rigorously in the cases in which the value of the
critical homogeneity is zero, i.e. for Maxwell molecules. New methods are
needed to prove the existence of self-similar solutions for critical
homogeneities different from zero, as for instance we could expect in the case of cylindrical dilatation for the critical value of the homogeneity, i.e. $\gamma=-2$. 

In the case of collision-dominated behavior and in the case of hyperbolic-dominated behavior we  proposed some conjectures for 
asymptotic formulas for the solutions based on formal computations presented in \cite{JMNV}. 
In the first case we have obtained that the corresponding distribution
of particle velocities for the associated homoenergetic flows can be
approximated by a family of Maxwellian distributions with a changing
temperature whose rate of change is obtained by means of a Hilbert expansion. It would be relevant to prove
rigorously the existence of those solutions and to understand their stability
properties. 

In the case in which the hyperbolic terms are much larger than the collision terms 
the resulting solutions yield much more complex behaviors than the ones that we
have obtained in the previous cases. 
The detailed understanding of the particle distributions is largely open and challenging.

Moreover, there are also homoenergetic flows yielding divergent densities or
velocities at some finite time. These flows seem to have also interesting
properties but we have not considered them in this paper.
\bigskip

\bigskip

\textbf{Acknowledgements. }We thank Stefan M\"uller, who motivated us to study this problem, for
useful discussions and suggestions on the topic. The work of R.D.J. was supported by ONR (N00014-14-1-0714), AFOSR (FA9550-15-1-0207), NSF (DMREF-1629026), and the MURI program (FA9550-12-1-0458, FA9550-16-1-0566).  A.N. and J.J.L.V. acknowledge support through the
CRC 1060 \textit{The mathematics of emergent effects }of the University of
Bonn that is funded through the German Science Foundation (DFG).\bigskip

\bigskip

\bigskip

\end{document}